\title{Multi-Time Version of the Landau-Peierls Formulation of Quantum Electrodynamics}
\author{
Matthias Lienert\footnote{Fachbereich Mathematik, Eberhard-Karls-Universit\"at T\"ubingen, Auf der Morgenstelle 10, 72076 T\"ubingen, Germany.}~\footnote{E-mail: lienertmat@gmail.com}\ \ and
Roderich Tumulka$^*$\footnote{E-mail: roderich.tumulka@uni-tuebingen.de}
}
\date{October 5, 2024}
\renewcommand{\Im}{\mathrm{Im}}
\newcommand{\NNN}{\mathbb{N}}
\newcommand{\RRR}{\mathbb{R}}
\newcommand{\CCC}{\mathbb{C}}
\newcommand{\SSS}{\mathbb{S}}
\newcommand{\MMM}{\mathbb{M}}
\newcommand{\free}{\mathrm{free}}
\newcommand{\ext}{\mathrm{ext}}
\newcommand{\Hilbert}{\mathscr{H}}
\newcommand{\sA}{\mathscr{A}}
\newcommand{\sD}{\mathscr{D}}
\newcommand{\sF}{\mathscr{F}}
\newcommand{\sG}{\mathscr{G}}
\newcommand{\sJ}{\mathscr{J}}
\newcommand{\sS}{\mathscr{S}}
\newcommand{\sT}{\mathscr{T}}
\newcommand{\cI}{\mathcal{I}}
\newcommand{\scp}[2]{\langle #1|#2 \rangle}
\newcommand{\vk}{\boldsymbol{k}}
\newcommand{\vu}{\boldsymbol{u}}
\newcommand{\vx}{\boldsymbol{x}}
\newcommand{\vy}{\boldsymbol{y}}
\newcommand{\vA}{\boldsymbol{A}}
\newcommand{\vB}{\boldsymbol{B}}
\newcommand{\vE}{\boldsymbol{E}}
\newcommand{\vJ}{\boldsymbol{J}}
\newcommand{\vX}{\boldsymbol{X}}
\newcommand{\valpha}{\boldsymbol{\alpha}}
\newcommand{\vomega}{\boldsymbol{\omega}}
\newcommand{\vzero}{\boldsymbol{0}}
\newcommand{\be}{\begin{equation}}
\newcommand{\ee}{\end{equation}}
\newcommand{\Gr}{\mathrm{Gr}}
\newtheorem{prop}{Proposition}
\newtheorem{conj}{Conjecture}
\theoremstyle{definition}\newtheorem{defn}{Definition}
\newcounter{remarks}
\newcommand{\un}{\unitlength}
\newcommand{\spacelike}{\begin{minipage}[b]{7\un}
\begin{picture}(6,6)
\put(0,5){\line(1,-1){6}}
\put(0,-1){\line(1,1){6}}
\end{picture}
\end{minipage}}
\begin{document}
\maketitle
\begin{abstract}
Landau and Peierls wrote down the Hamiltonian of a simplified version of quantum electrodynamics in the particle-position representation. We present a multi-time version of their Schr\"odinger equation, which bears several advantages over their original equation: the time evolution equations are simpler and more natural; they are more transparent with respect to choice of gauge; and, perhaps most importantly, they are manifestly Lorentz covariant. We discuss properties of the multi-time equations. Along the way, we also discuss the Lorentz covariant 3d Dirac delta distribution for spacelike surfaces and the inner product of photon wave functions on spacelike surfaces in an arbitrary gauge.

\bigskip

\noindent 
Key words: 
	photon wave function;
	particle-position representation;
	consistency of multi-time equations with interaction.
\end{abstract}
\tableofcontents

\section{Introduction}

In 1930, Landau and Peierls \cite{LP30} arrived at remarkably simple and natural equations for quantum electrodynamics (QED) by using the particle-position representation. We argue here that their equations become even simpler and more natural when written as multi-time equations.
We present a system of multi-time equations that reduces to the Schr\"odinger equation (and a constraint equation) written by Landau and Peierls when all time coordinates are set equal. (See \cite{LPT17,LPT20} for introductions to multi-time wave functions.) Our equations are manifestly Lorentz invariant (which is not possible for single-time equations) and allow for a new, clearer perspective on choices of gauge. 
In addition, while our equations are, like those of Landau and Peierls, ultraviolet divergent and thus mathematically ill defined, we outline a way of interpreting them mathematically by means of interior-boundary conditions. (See \cite{TT15a,TT15b} for introductions to interior-boundary conditions.)

The model of Landau and Peierls describes the emission and absorption of photons by electrons while leaving out pair creation and annihilation, in fact leaving out positrons altogether. An electron wave function is taken to be governed by the Dirac equation, a photon wave function by the complexified Maxwell equation \cite{BB}. A remarkable (and in our view convincing) trait of this model is that the photon part of the wave function, when expressed as a vector potential, plays the role of the gauge connection that is used in the version of the Dirac equation governing the electron part. No steps are taken here to exclude wave functions of negative energy; the issues of negative energies and the Dirac sea are left aside. As in Fock space, the full wave function $\Psi$ consists of sectors $\Psi^{(m,n)}$ with arbitrary $m,n\in\NNN\cup\{0\}$ for $m$ electrons and $n$ photons. Different sectors are coupled through particle creation and annihilation terms in the time evolution equations. The model is thus also a variant of the Pauli-Fierz model of QED \cite{PF38,GLL01,Hir02}, in fact one that is intended to be fully relativistic, although Landau and Peierls themselves could neither prove nor even clearly formulate what it means for the model to be relativistically invariant. This is now possible in our multi-time formulation, in fact straightforwardly. Our equations can also be regarded as a further development of the models proposed in \cite{dfp:1932,DV85,pt:2013c,Kie20}. 

As a by-product of our considerations, we also describe a Lorentz-covariant version $\delta^3_\mu$ of the 3-dimensional Dirac delta distribution that can be integrated over any smooth spacelike 3-surface through the origin of $\RRR^4$ (see Section~\ref{sec:delta}). We also provide a discussion of the inner product between photon wave functions in any gauge on any Cauchy surface (see Appendix~\ref{app:GuptaBleuler}), which seems to have been missing in the literature.

Just as a classical electromagnetic field can be described by either the field tensor $F_{\mu\nu}$ or the vector potential $A_{\mu}$, the wave function in the multi-time formulation can be written in a representation analogous to $F_{\mu\nu}$ or one analogous to $A_{\mu}$ (see Section~\ref{sec:AF}); in fact, we can choose between these representations for each photon separately. While Landau and Peierls fixed a gauge, we keep the choice of gauge open and include a discussion of gauge invariance in Section~\ref{sec:gauge}. In particular, we describe how the wave functions $\Psi$ in the $A_\mu$ representation transform under a change of the gauge condition.

The equations developed here will be particularly attractive for those who view the wave function $\Psi$ as part of reality, like a field (except that it is a function of several space-time points rather than one) rather than like an abstract vector in an abstract Hilbert space. The picture we develop has concrete answers to what kind of function $\Psi$ is, in which spaces its values lie, and how it transforms under Lorentz and gauge transformations. This trait also allows for a direct generalization to curved space-time.

Historically, the particle-position representation was largely neglected in quantum field theory (QFT) (but see, e.g., \cite{Nel64}), and Landau and Peierls themselves did not follow up on it further. That was perhaps due to belief that ultimately the particle-position picture does not work, and that may in part be owed to the problem, still open today, of formulating the Born rule for photons, i.e., of computing the probability density of a photon particle position from an arbitrary quantum state \cite[Sec.~7.3.9]{Tum22}. A convincing candidate for the Born rule is known for photon wave functions that are plane waves, or at least local plane waves (i.e., functions such that every point has a neighborhood in which the function is approximately a plane wave).
While that case may cover most experiments, not every Maxwell field is a local plane wave, and the question of the general Born rule for photons remains open. 
Nevertheless, we think that the particle-position picture was abandoned too hastily. Although we do not have a Born rule for photons, 
we tend to believe that one exists because we can experimentally detect individual photons in places of our choice---so photons appear to have a probability distribution in position space. But even if somehow no Born rule for photons existed, it would still be possible to represent the quantum state as a function of position variables, and the particle-position picture would still seem worthy of exploration. For the purpose of this paper, which is to formulate and study equations for the time evolution of the quantum state, it is not necessary to specify the probability distribution of the photon positions. 

Our equations can directly and easily be generalized to curved space-time, but for simplicity we will only formulate them for Minkowski space-time, henceforth denoted by $\MMM^4$ (where 4 is not an exponent but an upper index indicating the dimension). In the future, it would be of interest to extend the approach presented here to Yang-Mills theory or in general to non-Abelian gauge theories, as well as to include electron-positron pair creation and annihilation. It would also be of interest to obtain a complete model by adding an ontology; see \cite{Tum24} for attempts in this direction. See also \cite{Lie18} for a possible relativistic time evolution in which fermions interact directly without mediation through bosons, and \cite{Fin16} for another approach to the foundations of relativistic quantum theory.

The paper is organized as follows. In Section~\ref{sec:multi}, we formulate the set of multi-time equations. In Section~\ref{sec:AF}, we describe the representations of the wave function analogous to $A_\mu$ and $F_{\mu\nu}$. In Section~\ref{sec:gauge}, we discuss gauge invariance. In Section~\ref{sec:field}, we elucidate a derivation of the multi-time equations from the equations governing the field operators. In Section~\ref{sec:scp}, we point out that the multi-time evolution preserves the scalar product of wave functions (a generalized form of unitarity). In Section~\ref{sec:LP}, we show that for equal times, our equations reduce to those of Landau and Peierls (up to minor corrections). In Section~\ref{sec:consistency}, we discuss the question of consistency and of preservation of constraints under the time evolution. In the Appendix, we collect technical arguments.

\section{Multi-Time Equations}
\label{sec:multi}

\subsection{Wave Function}

A multi-time wave function is a function of a space-time configuration, i.e., of several space-time points. A space-time point will be denoted interchangeably by $x=x^\mu=(x^0,\vx)=(x^0,x^1,x^2,x^3)$. A space-time point that is an electron variable will be denoted $x$, one for a photon $y$. For a space-time configuration $(x_1,\ldots,x_m)$, we will write $x^{4m}$ (with the superscript reminding us that it has $4m$ components), and likewise $y^{4n}$ for $(y_1,\ldots,y_n)$. The notation $x_1\spacelike x_2$ means that $x_1$ and $x_2$ are spacelike separated. We say that $(x^{4m},y^{4n})$ is a \emph{spacelike} configuration if any two of $x_1,\ldots,x_m,y_1,\ldots,y_n$ are either spacelike separated or equal. Let $\sS_{xy}$ be the set of all spacelike configurations with any $m,n\in\NNN\cup\{0\}$ and $\sS_{xy}^{(m,n)}$ the set of spacelike configurations with given $m$ and $n$, called the $(m,n)$-sector of $\sS_{xy}$. The wave function $\Psi$ will be a function on $\sS_{xy}$, and its restriction $\Psi^{(m,n)}=\Psi\big|_{\sS_{xy}^{(m,n)}}$ will be called the $(m,n)$-sector of $\Psi$. We write $\sS_x=\cup_{m=0}^\infty \sS_x^{(m)}$ for the set of all spacelike configurations of $x$-particles alone.

We take the wave function of an electron to be a Dirac wave function and the wave function of a photon a complexified Maxwell field, represented by a complex 4-vector $A_\mu$. Thus, the value space of $\Psi^{(m,n)}$ is the complex tensor product of $m$ copies of the (4-dimensional) Dirac spin space and $n$ copies of complexified Minkowski space (or the appropriate tangent spaces of the space-time manifold); that is, $\Psi^{(m,n)}$ has, at each configuration $(x^{4m},y^{4n})$, $4^{m+n}$ complex components labeled by $m+n$ indices as in
\be\label{Psiformat}
\Psi^{(m,n)}_{s_1...s_m,\mu_1...\mu_n}(x_1...x_m,y_1...y_n)\,,
\ee
with $s_j$ running through $\{1,2,3,4\}$ and $\mu_k$ through $\{0,1,2,3\}$. It is assumed of the initial data and will turn out for all other configurations that each sector $\Psi^{(m,n)}$ is symmetric against permutation of the $y$'s with simultaneous permutation of the associated $\mu$ indices, and anti-symmetric against permutation of the $x$'s together with the $s$'s. 

Before we formulate the multi-time equations, we need a bit of preparation concerning 3-dimensional Dirac delta distributions.

\subsection{3d Dirac delta Distribution}
\label{sec:delta}

The ordinary 3-dimensional Dirac delta distribution $\delta^3(\vx)$ can be integrated over 3-space against smooth functions $f(\vx)$ on 3-space,
\be
\int_{\RRR^3}d^3\vx \: \delta^3(\vx) \, f(\vx) = f(\vzero)\,.
\ee
We will define a Lorentz-invariant distribution $D$ that can be integrated over any smooth spacelike 3-surface $\Sigma\subset \MMM^4$ passing through the origin of Minkowski space-time $\MMM^4$ and acts like $\delta^3$ on $\Sigma$. To this end, let
\be
\sS_0^4 = \{x\in\MMM^4: x=0 \text{ or }x\spacelike 0\}
\ee
be the relativistic present of the origin in Minkowski space-time together with the origin itself; $D$ will be defined on $\sS_0^4$. 

As a preparation we note, leaving the distributional character of $D$ aside for a moment and pretending it was an ordinary function, that the kind of object that could be integrated over any 3-surface is a 3-form, i.e., an anti-symmetric rank-3 tensor field $D_{\lambda\mu\nu}$. Using the 4-form (volume form) $\varepsilon_{\lambda\mu\nu\rho}$ provided by the space-time metric (and the orientation of space-time), any 3-form can be translated into a vector field and vice versa,
\be
D^\rho = \tfrac{1}{3!} \, D_{\lambda\mu\nu} \, \varepsilon^{\lambda\mu\nu\rho}\,,~~~
D_{\lambda\mu\nu} = D^\rho \, \varepsilon_{\lambda\mu\nu\rho}\,.
\ee 
That is, a 3-dimensional distribution (i.e., one which can be integrated over 3-surfaces) on the 4-dimensional set $\sS_0^4$ can be represented as a \emph{vector-valued} distribution $D^\rho$. The integral of the 3-form $D_{\lambda\mu\nu}$ times a scalar function $f$ over a spacelike surface $\Sigma$ can be expressed in terms of the vector field $D^\rho$ as
\be
\int_\Sigma V(d^3x) \, n_\rho(x) \, D^\rho(x) \, f(x)\,,
\ee
where $V(d^3x)$ means the Riemannian volume (defined by the 3-metric on $\Sigma$) of the volume element $d^3x$ and $n_\rho(x)$ the future unit normal vector to $\Sigma$ at $x$.

The distribution $D$ is characterized by the property that for every (smooth) spacelike Cauchy surface $\Sigma$ passing through the origin and every smooth function $f:\Sigma\to\CCC$, $D$ applied to $f$ on $\Sigma$ ($\int_\Sigma D \, f$ for short) yields
\be\label{characterizeD}
\int_\Sigma \! D \, f=\int_\Sigma V(d^3x) \, n_\rho(x) \, D^\rho(x) \, f(x) = f(0) \,.
\ee
This prescription is clearly Lorentz invariant. In Appendix~\ref{app:delta}, we show that such a distribution exists. Henceforth, we write $\delta^3_\mu$ for $D_\mu$.

\subsection{Time Evolution Equations}

We can now formulate the fundamental equations governing the wave function. We set $c=1$, $\hbar=1$, $\varepsilon_0=1$, and $\mu_0=1$. Let $e_x$ and $m_x$ denote the (bare) charge and mass of the electron, respectively. The time evolution is defined by three laws, given by \eqref{LPx}, \eqref{LPy}, and \eqref{LPg} below, the central equations of this paper. In the notation we choose, we will omit some indices for better readability, but we will make many variables and indices explicitly visible although this might make the formulas appear more complex.

The first law provides the time derivative for each electron $x_j$ and is meant to apply to each $j$; it is the Dirac equation with an additional term:
\be\label{LPx}
(i\gamma^\mu_j\partial_{x_j,\mu}-m_x)\Psi^{(m,n)}(x_1...x_m,y_1...y_n)
= e_x\sqrt{n+1} \:\gamma^\rho_j\: \Psi^{(m,n+1)}_{\mu_{n+1}=\rho}(x_1...x_m,y_1...y_n,x_j)\,.
\ee
Here, $\gamma_j^\mu$ means $\gamma^\mu$ acting on the index $s_j$.

The second law involves derivatives in $y_k$; it is the Maxwell equation with a source term:
\begin{multline}\label{LPy}
2\partial_{y_k}^\mu\partial^{~}_{y_k,[\mu}\Psi^{(m,n)}_{\mu_k=\nu]}(x_1...x_m,y_1...y_n) =\\
\frac{e_x}{\sqrt{n}}\sum_{j=1}^m \delta^3_{\mu}(y_k-x_j) \: \gamma_j^\mu \gamma_{j\nu} \:\Psi^{(m,n-1)}_{\widehat{\mu_k}}(x_1...x_m,y_1...y_{k-1},y_{k+1}...y_n)\,.
\end{multline}
Here, $[\mu\nu]$ means anti-symmetrization in the index pair as in $S_{[\mu\nu]}=\tfrac12(S_{\mu\nu}-S_{\nu\mu})$, and $\widehat{\mu_k}$ means that the index $\mu_k$ is omitted.

Before we turn to the third law, we would like to make explicit the parallels of \eqref{LPx} and \eqref{LPy} with the corresponding 1-particle equations. The 1-particle Dirac equation in an external electromagnetic field with vector potential $A_\mu(x)$ reads
\be\label{Dirac}
(i\gamma^\mu\partial_{\mu}-m_x)\psi(x)
= e_x \,\gamma^\rho A_\rho(x) \: \psi(x)\,.
\ee
Apart from the factor $\sqrt{n+1}$, which arises from our convention about normalization of symmetric functions (and would be absent if we used unordered configurations \cite{GTTZ:2014}), \eqref{LPx} is \eqref{Dirac} applied to the $x_j$ variable in $\Psi$, with $A_\rho \Psi$ replaced by $\Psi^{(m,n+1)}_{\mu_{n+1}=\rho}$. If $\Psi^{(m,n+1)}$ factorized according to
\be
\Psi^{(m,n+1)}_{\mu_{n+1}}(x_1...x_m,y_1...y_{n+1}) = A_{\mu_{n+1}}(y_{n+1}) \: \Psi^{(m,n)}(x_1...x_m,y_1...y_n)\,,
\ee 
then \eqref{LPx} would reduce exactly to \eqref{Dirac} applied to $x_j$ (up to the $\sqrt{n+1}$). So, in \eqref{LPx} the role of the vector potential (or, more generally speaking, of the connection coefficient in the covariant derivative of a vector bundle) is played by the wave function of the next photon.

Let us turn to \eqref{LPy}. The Schr\"odinger equation for a single photon would be the complex Maxwell equation, which reads
\be\label{Maxwell}
2\partial^\mu \partial_{[\mu} A_{\nu]}(y) = J_\nu(y)
\ee
with source term $J_\nu(y)$. This equation is more often written in the form $\partial^\mu F_{\mu\nu} = J_\nu$ with $F_{\mu\nu} = 2 \partial_{[\mu}A_{\nu]}$. Apart from the factor $1/\sqrt{n}$ owed to our normalization convention, Eq.~\eqref{LPy} has the same form,
with $y_k$ playing the role of $y$ and $\Psi_{\mu_k}$ that of $A_\mu$, and the source term given by
\be\label{source}
J_\nu(y) = e_x \sum_{j=1}^m \delta^3_\mu(y-x_j) \:\gamma_j^\mu \, \gamma_{j,\nu} \: \Psi^{(m,n-1)}_{\widehat{\mu_k}}(\widehat{y_k})\,.
\ee
It is reasonable that the source term should be concentrated on the locations $x_1,\ldots,x_m$ of the electrons, and the matrix $\gamma_\nu$ links, roughly speaking, the electron wave function to the space-time direction of the associated current in the formula $j_\nu = \overline{\psi} \gamma_\nu \psi$ for the current of a 1-particle Dirac wave function $\psi$. 

These observations add to the picture that \eqref{LPx} and \eqref{LPy} are natural equations. After all, they are the Dirac and Maxwell equations applied to particular variables of $\Psi$, with natural expressions inserted for the vector potential and the source term, given that the photon aspect of $\Psi$ is analogous to $A_\mu$ and the electron aspect of $\Psi$ should be the source for the Maxwell equation. The appearance of $\Psi^{(m,n+1)}$ and $\Psi^{(m,n-1)}$ in the time derivative of $\Psi^{(m,n)}$ is common in models in which $x$-particles can absorb and emit $y$-particles according to the reaction $x\leftrightarrows x+y$.

\subsection{Gauge Condition}

We now turn to the third fundamental law of the theory, the gauge condition. As we will discuss in detail in Section~\ref{sec:gauge}, the structure of the theory is such that any gauge condition $\sA$ can be chosen, and some gauge condition has to be chosen in order to write down a wave function $\Psi$. This choice is arbitrary and does not reflect a fact in nature; it is a tool of the theoretician, like a choice of basis or of coordinates. Correspondingly, a solution of the theory is really an equivalence class of pairs $(\Psi,\sA)$, where the equivalence relation is gauge equivalence and will be defined in Section~\ref{sec:gauge}. (Earlier reports on the ideas of this paper \cite{Tum21,Tum21b,Tum24} did not have this aspect fully developed. The need for a gauge condition is discussed in Remark~\ref{rem:indispensable} in Section~\ref{sec:rem2}.)

Before we give the general characterization of ``gauge condition'' for our purposes, we provide an example of such a condition (and thus of the third law): 
\be\label{LPgex}
\sum_{\mu_k=1}^3 \partial_{y_k}^{\mu_k} \Psi^{(m,n)}_{\mu_k}(x_1...x_m,y_1...y_n)=0
\ee
(note that $\mu_k=0$ is omitted from the sum).
Again, most indices are not made explicit. The condition \eqref{LPgex} is analogous to the well-known Coulomb gauge condition
\be\label{Coulomb}
\sum_{\mu=1}^3 \partial^{\mu} A_\mu=0 \,.
\ee
Like \eqref{Coulomb}, \eqref{LPgex} is not Lorentz invariant, and our gauge conditions are not required to be Lorentz invariant. The Lorentz invariance of the theory is therefore connected to its gauge invariance, i.e., the fact that $\Psi$ can be transformed to fit any other gauge condition (such as a Lorentz transform of \eqref{LPgex}).

We now give the general definition of a gauge condition. Let us begin with the familiar fact that the Maxwell field can be represented through either the vector potential $A_\mu(x)$ or the field tensor $F_{\mu\nu}(x)=\partial_\mu A_\nu(x) - \partial_\nu A_\mu(x)$. Let
\be\label{ddef}
d_{\mu\nu}^{\rho} = \delta_\nu^\rho \partial_\mu - \delta_\mu^\rho \partial_\nu
\ee
be the differential operator (``exterior derivative $d$'') that computes $F_{\mu\nu}$ from $A_\mu$,
\be\label{FdA}
F_{\mu\nu} = d_{\mu\nu}^{\rho}\, A_\rho~~~\text{or}~~F=dA\,.
\ee
For our purposes, a \emph{gauge condition} means a set $\sA$ of complex vector fields $A_\mu$ that contains exactly one field $A_\mu$ for every complex $F_{\mu\nu}$. That is, let $\sF$ be a suitable function space (representing ``all'' $F_{\mu\nu}$'s) and demand that the exterior derivative $d$ is bijective as a mapping $\sA\to\sF$. We say that $\sA$ is a \emph{linear} gauge condition if and only if $\sA$ is a (complex) linear subspace of the space of all complex vector fields on space-time $\MMM^4$. Finally, for any (open) set $S\subset \MMM^4$, we write
\be
\sA_S=\bigl\{A|_S:A\in\sA\bigr\}
\ee
for the space of $\sA$-vector potentials in $S$; here, $A|_S$ denotes the restriction of $A$ to $S$ (i.e., the function $A$ considered only at points in $S$).

We are now ready to state the third fundamental law of the theory in its general form. Of a solution $(\Psi,\sA)$ of the theory we require that $\sA$ is a linear gauge condition and that $\Psi$ locally obeys the gauge condition $\sA$ in each photon variable $y_k$. In formulas, the latter requirement means that for every configuration $(x_1...x_m,y_1...y_n)\in \sS_{xy}$, if $S_k$ is an open neighborhood of $y_k$ for every $k$ and $S:=\{(x_1...x_m)\}\times S_1 \times \cdots \times S_n\subset \sS_{xy}$, then
\be\label{LPg}
\Psi^{(m,n)}\Big|_{S} \in (\CCC^4)^{\otimes m} \otimes \sA_{S_1} \otimes \cdots \otimes \sA_{S_n} \,,
\ee
where $\CCC^4$ represents the Dirac spin space. Condition \eqref{LPg} is the third law, so we have completed the statement of the fundamental laws of this model.

\bigskip

Let us say more explicitly how \eqref{LPgex} is an example of \eqref{LPg}. Let us express $F_{\mu\nu}$ through $\vE$ and $\vB$, that is,
\be
E^i=F_{0i}~~~\text{and}~~~B^i= -\sum_{j,k=1}^3 \varepsilon^{ijk} F_{jk}  
\ee
for $i=1,2,3$ with $\varepsilon^{ijk}$ the Levi-Civita symbol, i.e., $\varepsilon^{ijk}=\mathrm{sgn}(\sigma)$ if $ijk$ is a permutation $\sigma$ of $123$ and $=0$ otherwise. Then, as is well known, the Maxwell equation \eqref{Maxwell} reads
\begin{subequations}\label{MaxwellEB}
\begin{align}
\partial_0 \vE &= \boldsymbol{\nabla} \times  \vB -\vJ\\
\boldsymbol{\nabla} \cdot \vE&= J_0 \label{20b}\\
\partial_0 \vB &= -\boldsymbol{\nabla} \times  \vE\\
\boldsymbol{\nabla} \cdot \vB&=0 \label{20d}
\end{align}
\end{subequations}
with $\boldsymbol{\nabla}=(\partial_1,\partial_2,\partial_3)$ and $J_\nu=(J_0,-\vJ)$. The relation $F=dA$ takes the form
\be\label{EBA}
\vE=-\boldsymbol{\nabla} A_0 - \partial_0 \vA\,,~~~\vB=\boldsymbol{\nabla} \times \vA
\ee
with $A_\mu=(A_0,-\vA)$. Let $\sA$ consist of the $A_\mu$'s satisfying the Coulomb condition \eqref{Coulomb}, or $\boldsymbol{\nabla} \cdot\vA=0$. Let us take for granted that, for any fixed time $x^0$, $\vE,\vB,A_0$, and $\vA$ are square-integrable functions of $\vx$; then \eqref{EBA} can be inverted to yield
\be\label{d-1ex}
A_0 = (-\Delta)^{-1} \boldsymbol{\nabla} \cdot \vE\,,~~~\vA = (-\Delta)^{-1}\boldsymbol{\nabla} \times \vB
\ee
at every time $x^0$ with $\Delta=\partial_1^2+\partial_2^2+\partial_3^2$ the Laplacian. Thus, $d$ is bijective on $\sA$, $d^{-1}$ is given by \eqref{d-1ex}, and \eqref{LPg} amounts to \eqref{LPgex}.

\subsection{Remarks}
\label{sec:rem1}
 
\begin{enumerate}
\setcounter{enumi}{\theremarks}
\item {\it Superselection of electron number.} Since sectors with different numbers $m$ of electrons are not coupled, we can also regard this number as fixed.

\item {\it Linearity.} The laws \eqref{LPx}, \eqref{LPy}, \eqref{LPg} are complex-linear, so any linear combination of solutions (expressed in the same gauge) is another solution.

\item {\it Comparison to an external field.} A system of $m$ electrons in an external electromagnetic field with vector potential $A_\mu$ is governed (in the multi-time framework) by a wave function $\psi_{s_1...s_m}(x_1...x_m)$ obeying the evolution equations
\be
(i\gamma^\mu_j\partial_{x_j,\mu}-m_x)\psi(x_1...x_m)
= e_x\:\gamma^\rho_j\: A_\rho(x_j) \: \psi(x_1...x_m)
\ee 
for $j\in\{1,\ldots,m\}$. This equation agrees with \eqref{LPx} if we set
\be
\Psi^{(m,n)}_{\mu_1...\mu_n}(x_1...x_m,y_1...y_n) = \psi(x_1...x_m) \,\frac{1}{\sqrt{n!}}  A_{\mu_1}(y_1) \cdots A_{\mu_n}(y_n)
\ee
for every $n\in\NNN\cup\{0\}$, which means that $\Psi$ is the tensor product of $\psi$ with a coherent state for the photons. That is, the usual equations for representing an external electromagnetic field come out of \eqref{LPx} and \eqref{LPy} if we assume that the photons are disentangled from the electrons and in a coherent state, and drop the term for the emission of photons by the electrons belonging to the system (but accept that photons may get created by sources outside the system). This situation suggests further that the classical regime, in which the electromagnetic field can be treated classically, corresponds to a coherent state of the photons, and conversely it makes understandable why the classical limit of photons should correspond to a \emph{field} instead of \emph{particles}: even if, on the level of the full quantum theory, photons are particles with coordinates $y_k$, the classical regime is one in which their common wave function $A_\mu$ is crucial, not their configuration $(y_1...y_n)$. 

\item {\it Comparison to another model.} In \cite{pt:2013c}, a similar system of multi-time equations was considered, also with $x$-particles emitting and absorbing $y$-particles as in $x \leftrightarrows x+y$, but both $x$s and $y$s were assumed to be Dirac particles, see Eq.~(36) of \cite{pt:2013c}. An alternative way of arriving at the equations \eqref{LPx} and \eqref{LPy}, independent of the work of Landau and Peierls, consists of starting from the model of \cite{pt:2013c} and modifying it so as to give the $y$-particle spin 1 and mass 0: First, the index associated with $y_k$ should be a space-time index $\mu_k$ instead of (as in \cite{pt:2013c}) a Dirac spin index $s_k$; after all, the free evolution of the wave function of a single spin-1 mass-0 particle is given by the Maxwell equation \eqref{Maxwell} without sources, $J_\nu=0$. Second, the coefficient $g_s$ appearing in Eq.~(36a) of \cite{pt:2013c}, which according to Eq.~(38) of \cite{pt:2013c} can in general have the form $g_{sr'r}$, must now take the form $g^{\mu s'}_s$, and the obvious (and presumably only) Lorentz-invariant choice of such coefficients is the Dirac gamma matrix $\gamma^{\mu s'}_s$, possibly times a scalar factor that adjusts the strength of the emission and absorption terms; this last factor is naturally identified with the charge $e$ of the electron. With these adjustments, Eq.~(36a) of \cite{pt:2013c} literally becomes \eqref{LPx} above. Third, the Green's function used in Eq.~(36b) of \cite{pt:2013c} agrees with $\delta^3_\mu$ on spacelike surfaces through the origin up to a coefficient $g_s$ (cf.\ Eq.s (32) and (41) in \cite{pt:2013c}), so our use of $e_x\,\delta^3_\mu \gamma_j^\mu$ amounts to merely a different notation for $G$. Now Eq.~(36b) is of first order in time; since the 1-photon equation \eqref{Maxwell} is of second order, it is in that equation, applied to $y_k$, that the creation term of Eq.~(36b) should appear. We would thus arrive from (36b) at the equation
\begin{multline}\label{LPyPT}
\text{`` }2\partial_{y_k}^\mu\partial^{~}_{y_k,[\mu}\Psi^{(m,n)}_{\mu_k=\nu]}(x_1...x_m,y_1...y_n) =\\
\frac{e_x}{\sqrt{n}}\sum_{j=1}^m \delta^3_{\mu}(y_k-x_j) \: \gamma_j^\mu \:\Psi^{(m,n-1)}_{\widehat{\mu_k}}(x_1...x_m,y_1...y_{k-1},y_{k+1}...y_n)\,.\text{ ''}
\end{multline}
However, this equation does not make sense as an index $\nu$ that appears on the left-hand side is missing on the right-hand side. A simple and Lorentz-invariant modification of \eqref{LPyPT} that will fix this problem is to include another factor $\gamma_{j\nu}$, which brings us to \eqref{LPy}, except that it also allows the possibility that the $\gamma_{j\nu}$ might appear to the left of the $\gamma_j^\mu$. Reasons for putting $\gamma_{j\nu}$ to the right of $\gamma_j^\mu$ are provided by the considerations of Sections~\ref{sec:field} and \ref{sec:scp}.

\item {\it On the question of uniqueness of the multi-time equations.} In the setting of \cite{pt:2013c}, it was observed in Remark 6 in Section 2.2 of \cite{pt:2013c} that two different systems of equations can be equivalent on $\sS_{xy}$ although they would differ outside of $\sS_{xy}$; specifically, the alternative system, see Eq.~(30) of \cite{pt:2013c}, has no creation term in the $y_k$ equation but an additional creation term in the $x_j$ equation. This leads to the question whether in a similar way there exists a system of equations equivalent to \eqref{LPx} and \eqref{LPy} that differs in that the right-hand side of \eqref{LPy} is replaced by zero while a further term, a creation term related to the right-hand side of \eqref{LPy}, gets added to the right-hand side of \eqref{LPx}. Maybe there is, but we have not been able to identify such a system of equations; the difficulty is that \eqref{LPy} is of second order in time and \eqref{LPx} of first order, whereas in \cite{pt:2013c} all equations are of first order.

\item {\it Divergence of the source.} Since the Maxwell equation \eqref{Maxwell} can be solved only if 
\be\label{divJ0}
\partial_\nu J^\nu =0\,,
\ee
the question arises whether this is true of \eqref{source}. At points where $y_k \neq x_j$ for all $j$, \eqref{source} vanishes constantly, so \eqref{divJ0} is satisfied. Points where $y_k=x_j$ for some $j$, on the other hand, lie on the boundary of $\sS_{xy}$, so that $y_k$ cannot be varied in any timelike direction while $x_j$ is kept fixed; as a consequence, at such points, the Maxwell equations do not require an integrability condition of the form \eqref{divJ0}. It would be of interest to have a rigorous mathematical proof of this statement.

\item {\it External sources.} If it is desired to include external sources (i.e., external charges not governed by a quantum theory) in the model, with charge current given by a vector field $J^\ext_\nu$ on space-time $\MMM^4$, then \eqref{LPy} should be modified by adding
\be\label{extrasource}
\frac{1}{\sqrt{n}} J^\ext_\nu(y_k) \,\Psi^{(m,n-1)}_{\widehat{\mu_k}}(x_1...x_m,y_1...y_{k-1},y_{k+1}...y_n)
\ee
on the right-hand side. For example, for $m=0$ (i.e., pure radiation, no electrons) and $J^\ext_\nu$ corresponding to $M$ point charges $e_1,\ldots, e_M$ at rest in some Lorentz frame at locations $\vX_1,\ldots,\vX_M$,
\be
J^\ext_\nu(y^0,\vy) = \sum_{j=1}^M e_j \,\delta_{\nu 0} \, \delta^3(\vy-\vX_j)  \,,
\ee
a simple solution of \eqref{LPy} + \eqref{extrasource} is given by the coherent state
\be
\Psi^{(0,n)}_{\mu_1...\mu_n}(y_1...y_n) = \frac{1}{\sqrt{n!}} A_{\mu_1}(y_1) \cdots A_{\mu_n}(y_n)
\ee
with 
\be
A_\mu(y^0,\vy) = \Biggl( \sum_{j=1}^M \frac{e_j}{4\pi\, |\vy-\vX_j|}, \: \vzero\Biggr)\,.
\ee
To illustrate the physical effects of this photon wave function, we may consider a Dirac test particle, i.e., one that does not contribute a source term to \eqref{LPy} but is influenced by the photon wave function according to
\be
(i\gamma^\mu\partial_{x,\mu}-m_x)\psi(x) \, \Psi^{(0,n)}(y^{4n})=e_x\sqrt{n+1} \, \gamma^\rho \, \psi(x) \, \Psi^{(0,n+1)}_{\mu_{n+1}=\rho}(y^{4n},x)
\ee
in analogy to \eqref{LPx}: then the wave function $\psi$ of the test particle evolves as if it felt the Coulomb potential of the external charges. This consideration can be regarded as an outline of how the Coulomb potential arises from the (multi-time) Landau-Peierls model.

\item {\it Complex structure of vector potentials.} Since classical Maxwell fields have \emph{real} $F_{\mu\nu}$ and $A_\mu$, the question arises whether it is physically correct to consider \emph{complex} $F_{\mu\nu}$ and $A_\mu$. The Hilbert space of photon wave functions must be a complex Hilbert space, or else it would not be possible to form the complex tensor product with the Hilbert space of electron wave functions, as needed for entangled wave functions of electrons and photons together. The most obvious option is to consider complex $F_{\mu\nu}$, but one may think of the following alternative, so we point out what the problem is with that alternative. The space of all real $F_{\mu\nu}$ at a given point $x\in\MMM^4$ (sometimes denoted $\Lambda^2 T_x\MMM^4$), which has 6 real dimensions (that may be thought of as the 3 components of $\vE$ and the 3 components of $\vB$), also carries the structure of a complex vector space of complex dimension 3 by taking the multiplication by $i$ to be given by the Hodge $*$ operator, which maps $\Lambda^2 T_x\MMM^4$ to itself, is real-linear, and has the property $*^2=-1$. (Equivalently, the real $F_{\mu\nu}$ can be bijectively, real-linearly, and covariantly translated into symmetric 2-spinors of rank 2, $\phi_{AB}=\phi_{BA}$, which form a complex 3-dimensional vector space \cite[Eq.~(3.4.20)]{PR84}.) Since the $*$ operator is Lorentz covariant, so is the complex structure. It follows that the space $\sF$ of all real fields $F_{\mu\nu}(x)$ also naturally carries a complex structure. Therefore, if any \emph{real} linear gauge condition $\sA$ were given, i.e., a real subspace of the space of all real vector fields $A_\mu(x)$ with the property that $d:\sA\to\sF$ is bijective, then $d$ would naturally transport the complex structure to $\sA$, so $\sA$ would become a complex vector space and thus could perhaps serve as the Hilbert space of photon wave functions---although every element of $\sA$, when regarded as a function $A_\mu(x)$, has \emph{real} vector values. The problem with this space as the 1-photon space of $\Psi$ is that Eq.~\eqref{LPx} cannot be defined for it because elements of the tensor product Hilbert space (of, say, 1 electron and 1 photon) cannot be written as functions of two variables $x_1,y_1$. After all, for $f(x_1,y_1)$, multiplication by $i$ means literally multiplication of the value of the function $f$ at $(x_1,y_1)$, whereas elements of the tensor product space, with $i$ obtained from $*$, cannot be evaluated at a specific configuration $(x_1,y_1)$.
\end{enumerate}
\setcounter{remarks}{\theenumi}

\section{$A_\mu$ vs $F_{\mu\nu}$ Representations}
\label{sec:AF}

Since the wave function $\Psi^{(m,n)}$ plays the role of the vector potential in each $y_k$, it is a natural possibility to apply the exterior derivative $d=d_{\mu\nu}^\rho$ as in \eqref{ddef} and \eqref{FdA} to the variable $y_k$ (i.e., to use $y_k$-derivatives and the index $\rho=\mu_k$) to switch from ``vector potential representation'' to ``field tensor representation.'' Let us write $d_k$ for $d$ applied to $y_k$; e.g., the left-hand side of \eqref{LPy} equals $\partial^\mu_{y_k} (d_k\Psi)_{\mu\nu}$.

\subsection{Switching Between Representations}

The possibility to switch between the two representations is unique to multi-time wave functions because the $d$ operator involves taking a time derivative, and multi-time wave functions allow us to take a time derivative for just one particle but not others. For a single-time many-particle wave function, the only possible time derivative is one that affects all particles, i.e., not only the other photons but also the electrons, and that would not be physically appropriate for obtaining a field tensor representation of the photons. For example, consider a quantum state with a sharp number $m=1$ of electrons and a sharp number $n=1$ of photons, suppose that the two particles are widely separated and do not interact, and suppose that they are not entangled; in short, writing $x$ for $x_1$ etc.,
\be\label{ex11}
\Psi^{(1,1)}_{s\mu}(x,y) = \psi_s(x)\, A_\mu(y)
\ee
with $\psi$ a Dirac wave function. Then $(d_y\Psi^{(1,1)})_{s\mu\nu}(x,y) = \psi_s(x) \, F_{\mu\nu}(y)$, which is reasonable, whereas the single-time derivative would act on $\Psi^{(m,n)}$ like $\sum_{j=1}^m\partial_{x_j0}+\sum_{k=1}^n\partial_{y_k0}$ and thus yield for \eqref{ex11} an additional term of the form $[\partial_{x0}\psi_s(x)]\, A_\mu(y)$. We regard the possibility to switch between the two representations as an advantage of multi-time wave functions.

Note that $d_k$ and $d_\ell$ commute because, for $k\neq \ell$, they act on different variables and different indices. Since we can choose between the two representations for each photon separately, we may imagine $n$ ``switches''; flicking each of them downward to the field tensor representation leads to
\be
F^{(m,n)}_{s_1...s_m,\mu_1\nu_1...\mu_n\nu_n}(x_1...x_m,y_1...y_n) := d_1\cdots d_n \Psi^{(m,n)}(x_1...x_m,y_1...y_n)\,.
\ee
By virtue of \eqref{LPx} and \eqref{LPy}, $F$ satisfies the equations
\begin{subequations}
\begin{multline}\label{LPxF}
(i\gamma^\mu_j\partial_{x_j,\mu}-m_x)F^{(m,n)}(x_1...x_m,y_1...y_n)
=\\ 
e_x\sqrt{n+1} \:\gamma^\rho_j\:d_1\cdots d_n \Psi^{(m,n+1)}_{\mu_{n+1}=\rho}(x_1...x_m,y_1...y_n,x_j)
\end{multline}
and
\begin{multline}\label{LPyF}
\partial_{y_k}^{\mu_k} F^{(m,n)}_{\mu_k\nu_k}(x_1...x_m,y_1...y_n) =\\
\frac{e_x}{\sqrt{n}}\sum_{j=1}^m \delta^3_{\mu}(y_k-x_j) \: \gamma_j^\mu \gamma_{j\nu_k} \: F^{(m,n-1)}_{\widehat{\mu_k\nu_k}}(x_1...x_m,y_1...y_{k-1},y_{k+1}...y_n)\,.
\end{multline}
\end{subequations}
Expressed in a particular Lorentz frame in terms of $\vE$ and $\vB$ (referring to $\mu_k\nu_k$), the last equation reads
\begin{subequations}\label{17}
\begin{align}
\partial_{y_k,0} \vE_k^{(m,n)} &= \boldsymbol{\nabla}_{y_k} \times \vB_k^{(m,n)} - \frac{e_x}{\sqrt{n}} \sum_{j=1}^m \delta^3(\vy_k-\vx_j) \, \valpha_j \, F^{(m,n-1)}_{\widehat{\mu_k\nu_k}}(\widehat{y_k}) \label{17a}\\ 
\boldsymbol{\nabla}_{y_k} \cdot  \vE_k^{(m,n)} &= \frac{e_x}{\sqrt{n}} \sum_{j=1}^m \delta^3(\vy_k-\vx_j) \, F^{(m,n-1)}_{\widehat{\mu_k\nu_k}}(\widehat{y_k}).\label{17b}
\end{align}
In the same notation the other Maxwell equations, a consequence of $F_{\mu\nu}=2\partial_{[\mu}A_{\nu]}$, read
\begin{align}
\partial_{y_k,0} \vB_k^{(m,n)} &= -\boldsymbol{\nabla}_{y_k} \times \vE_k^{(m,n)} \label{17c}\\
\boldsymbol{\nabla}_{y_k} \cdot \vB_k^{(m,n)} &=0\,.\label{17d}
\end{align}
\end{subequations}
These equations are analogous to \eqref{MaxwellEB}.

\subsection{Uniqueness of Solutions}

For any Cauchy surface\footnote{Leaving out some technical details not relevant for our purposes, a Cauchy surface is a maximal spacelike 3d surface.} $\Sigma$, we denote the set of configurations on $\Sigma$ by
\begin{subequations}
\begin{align}
\Gamma_{xy}(\Sigma) 
&= \{ (x_1...x_m,y_1...y_n)\in \sS_{xy}: \text{all }x_j,y_k\in \Sigma \}\\
&= \bigcup_{m=0}^\infty \bigcup_{n=0}^\infty \Sigma^m \times \Sigma^n\,.
\end{align}
\end{subequations}
Generally, a multi-time wave function $\Psi$ on the set of spacelike configurations always defines for every Cauchy surface $\Sigma$ a wave function $\Psi_\Sigma$ (``surface wave function'') simply by restriction to configurations on $\Sigma$,
\be
\Psi_\Sigma = \Psi\Big|_{\Gamma_{xy}(\Sigma)} \,.
\ee

\begin{prop}
Given a linear gauge condition $\sA$ and initial data for $F^{(m,n)}$ for every $m,n$ on some Cauchy surface $\Sigma_0$, the solution $\Psi$ of \eqref{LPx}, \eqref{LPy}, and \eqref{LPg} is unique.
\end{prop}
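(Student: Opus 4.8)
The plan is to reduce, by linearity and by the bijection $d:\sA\to\sF$, to the statement that vanishing initial data propagates to a vanishing solution, and then to run a multi-time energy/Duhamel argument for the whole coupled tower of sectors. First I would invoke linearity (Remark~2) together with the fact that $\sA$ is a \emph{linear} gauge condition: the difference $\tilde\Psi=\Psi_1-\Psi_2$ of two solutions sharing the gauge $\sA$ and the same $F$-data on $\Sigma_0$ again obeys \eqref{LPg} in the same $\sA$, solves the homogeneous versions of \eqref{LPx} and \eqref{LPy}, and satisfies $\tilde F^{(m,n)}|_{\Sigma_0}=0$ for all $m,n$. It thus suffices to show $\tilde\Psi\equiv 0$ on $\sS_{xy}$.

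Second, I would use the bijectivity of $d$ to move freely between representations: applied in each photon variable it identifies $\tilde\Psi$ with $\tilde F=d_1\cdots d_n\tilde\Psi$, and its inverse recovers $\tilde\Psi=d_1^{-1}\cdots d_n^{-1}\tilde F$ inside the gauge $\sA$. Hence $\tilde F\equiv 0$ everywhere is equivalent to $\tilde\Psi\equiv 0$, and it is enough to prove $\tilde F\equiv 0$. This also clarifies why the initial data is prescribed for $F$ rather than $\Psi$: since \eqref{LPy} is second order in the time of $y_k$, a well-posed Cauchy problem requires both $\tilde\Psi$ and its normal derivative on $\Sigma_0$, and these are precisely what $\tilde F|_{\Sigma_0}$ encodes through $F=dA$ together with the gauge inversion $d^{-1}$ (for the Coulomb example this inversion, \eqref{d-1ex}, is intrinsic to the surface). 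Thus $\tilde F|_{\Sigma_0}=0$ yields vanishing first-order Cauchy data $(\tilde\Psi,\partial_0\tilde\Psi)|_{\Sigma_0}=0$.

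Third, I would realize the propagation through a foliation $(\Sigma_t)_{t\in[0,1]}$ of the slab between $\Sigma_0$ and a Cauchy surface $\Sigma$ carrying the target configuration (every spacelike configuration embeds in some Cauchy surface). Restricting the equations \eqref{LPxF}--\eqref{17} to configurations on $\Sigma_t$ turns them into a single-time (Tomonaga--Schwinger-type) evolution of the surface wave function $\tilde F_{\Sigma_t}$ across the foliation, built from the Dirac operator in each $x_j$ and the symmetric hyperbolic Maxwell system in each $y_k$, with the constraints \eqref{17b} and \eqref{17d} preserved. Writing this as a Duhamel integral equation and estimating the sector-weighted Fock norm $\sum_{m,n}w_{mn}\,\|\tilde F^{(m,n)}_{\Sigma_t}\|^2$, I would seek a Gronwall inequality forcing the norm to remain zero once it vanishes at $t=0$; the creation and annihilation terms linking sectors $(m,n)$ and $(m,n\pm1)$ appear as off-diagonal contributions whose weights $w_{mn}$ must be tuned to absorb the $\sqrt{n+1}$ and $1/\sqrt{n}$ prefactors in \eqref{LPxF} and \eqref{LPyF}.

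I expect the main obstacle to be exactly this simultaneous control of the infinite tower of sectors: because \eqref{LPyF} couples sector $(m,n)$ to $(m,n-1)$ while \eqref{LPxF} couples it to $(m,n+1)$, no finite truncation closes, and the argument cannot be carried out one sector at a time. The difficulty is compounded by the fact that the electron couples not to the gauge-invariant $F$ but to the vector potential $A=d^{-1}F$, so the source in \eqref{LPxF} is \emph{nonlocal}: the gauge inversion is an elliptic, hence spatially nonlocal, operator, as \eqref{d-1ex} shows. This precludes a naive finite-propagation-speed argument and forces one to bound $d^{-1}$ by elliptic estimates inside the Gronwall loop while proving convergence of the sector-weighted sum. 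At the formal level on which the model is defined --- the equations being ultraviolet divergent --- I would carry out this estimate formally; making it rigorous would require a regularization, presumably the interior-boundary-condition renormalization indicated in the introduction.
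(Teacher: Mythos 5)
The paper does not actually prove this proposition: immediately after stating it, the authors write that they have no proof, only a heuristic argument, namely deforming a Cauchy surface locally so that at most one point $x_j$ or $y_k$ moves, and invoking uniqueness of the one-particle Dirac and Maxwell Cauchy problems (plus uniqueness of the gauge inversion $d^{-1}$) to propagate $F_\Sigma$ to $F_{\Sigma'}$. Your proposal is a genuinely different route --- a global reduction to vanishing data followed by a foliation and a Gronwall estimate on a sector-weighted Fock norm --- and, like the paper's argument, it stops short of a proof. The comparison is instructive. The paper's local-deformation picture leverages known one-particle well-posedness and keeps the multi-time geometry transparent, but it quietly glosses over the sector coupling: when $x_j$ is pushed off $\Sigma$, the right-hand side of \eqref{LPx} needs $\Psi^{(m,n+1)}$ at the diagonal configuration $(\ldots,y_{n+1}=x_j)$ with \emph{two} arguments off $\Sigma$, so the ``source'' is not simply given data on $\Sigma$; this is exactly the infinite tower you confront head-on. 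Conversely, your approach makes the two real obstructions explicit --- the unbounded $\sqrt{n+1}$ growth of the off-diagonal terms, which defeats any finite truncation, and the elliptic nonlocality of $d^{-1}$ in \eqref{d-1ex}, which blocks a finite-propagation-speed argument --- and these are essentially the reasons the authors could not complete their own argument. Your observation that prescribing $F^{(m,n)}$ rather than $\Psi^{(m,n)}$ on $\Sigma_0$ is what supplies the first-order Cauchy data $(\Psi,\partial_0\Psi)$ required by the second-order equation \eqref{LPy} is a useful point the paper does not spell out.

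Two cautions if you pursue this. First, the equivalence $\tilde F\equiv 0\Leftrightarrow\tilde\Psi\equiv 0$ requires inverting $d$ in each photon variable on the restricted domain where $\Psi^{(m,n)}$ is actually defined (points spacelike to the other arguments of the configuration); the paper flags the analogous domain problem for $\Theta$ in Section~\ref{sec:rem2}, and for a general linear gauge condition it is not established that $d^{-1}$ localizes to such sets. Second, recovering $\partial_0\tilde\Psi|_{\Sigma_0}=0$ from $\tilde F|_{\Sigma_0}=0$ uses the gauge condition propagated in time (for Coulomb gauge this works via \eqref{d-1ex} and the evolution equation for $\vE$, but it is gauge-dependent), so this step should be stated as an additional hypothesis on $\sA$ rather than as automatic. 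With those caveats your plan is a reasonable blueprint for the rigorous proof the authors say would be of interest, at least after a regularization renders the creation terms well defined.
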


For this proposition, we do not have a proof, but we have the following heuristic argument. (It would be of interest to have a proof.) Think of surface wave functions $\Psi_\Sigma$ and imagine that we change $\Sigma$ only in a small region by pushing it to the future and thus obtain another Cauchy surface $\Sigma'$. For a configuration $(x_1...x_m,y_1...y_n)$, this more or less means moving at most one point $x_j$ or $y_k$ to the future while keeping the others fixed. It is known that the solution of the Maxwell equation for a given source term $J_\nu$ is unique, and it was assumed that the gauge condition $\sA$ is such that ``turning the switch upward'' is an operation with a unique outcome. Furthermore, it is well known that the solution of the Dirac equation is unique. Thus, regardless of whether the one point was an electron or a photon, $\Psi_{\Sigma'}$ (and thus also $F_{\Sigma'}$) should be uniquely determined by $F_\Sigma$ and $\sA$. The same reasoning applies to local changes in $\Sigma$ toward the past. Combining local changes, we can go from any Cauchy surface to any other. Finally, if $\Psi_\Sigma$ is uniquely determined for every Cauchy surface $\Sigma$ then, since every spacelike configuration lies on some Cauchy surface, also $\Psi^{(m,n)}(x_1...x_m,y_1...y_n)$ is uniquely determined for every spacelike configuration.

The question of existence of solutions is closely related to the consistency of the multi-time equations and will be discussed in Section~\ref{sec:consistency}.

\section{Gauge Invariance}
\label{sec:gauge}

In this section, we describe how to change $\Psi$ when we want to change the gauge condition $\sA$ and note that the new wave function $\widetilde\Psi$ satisfies again the time evolution equations \eqref{LPx} and \eqref{LPy}. 

\subsection{Classical Gauge Transformations}

These transformations are in a way analogous to the traditional gauge transformations, which are defined as follows. They concern the transformation of a classical (real-valued) Maxwell field $A_\mu$, satisfying the Maxwell equation \eqref{Maxwell} with given (real-valued) source current vector field $J_\nu(y)$ with $\partial^\nu J_\nu=0$, according to
\be\label{AtildeA}
\widetilde{A}_\mu = A_\mu - \frac{1}{e_x}\partial_\mu \theta 
\ee 
for an arbitrary differentiable function $\theta:\MMM^4\to \RRR$ (and $e_x$ the electron charge) and the transformation of a 1-particle wave function $\psi:\MMM^4\to\CCC^4$ satisfying the Dirac equation \eqref{Dirac} according to 
\be\label{psitildepsi}
\widetilde{\psi}_s(x) = e^{i\theta(x)}\psi_s(x)
\ee
with the same function $\theta$. Under this joint transformation, $\widetilde{F}_{\mu\nu}=F_{\mu\nu}$, in particular the Maxwell equation \eqref{Maxwell} is still satisfied, and $\widetilde{\psi}$ satisfies the Dirac equation \eqref{Dirac} with $A_\mu$ replaced by $\widetilde{A}_\mu$.

\subsection{Multi-Time Wave Functions}

Now return to the multi-time wave functions and suppose we want to replace a gauge condition $\sA$ by a different one $\widetilde\sA$. If $d:\sA\to\sF$ and $\widetilde{d}:\widetilde\sA\to \sF$ are both the exterior derivative (but on different spaces), then $\widetilde{d}^{-1}d: \sA \to \widetilde{\sA}$ is the linear mapping that maps every vector field $A_\mu(x)\in\sA$ to the $\widetilde{A}_\mu(x)\in\widetilde{\sA}$ with the same $F_{\mu\nu}(x)$. It follows that $\widetilde{A}$ and $A$ are related according to \eqref{AtildeA} with complex-valued $\theta$, and the function $\theta$ depends linearly on $A$. Therefore, the gauge transformations relevant to us here are not the ones corresponding to a fixed function $\theta$ but to a fixed linear dependence of $\theta$ on $A$. We will first consider infinitesimal gauge transformations, corresponding to the case that the subspace $\widetilde\sA$ is infinitesimally close to the subspace $\sA$ within the space of all complex vector fields. In that case, $\theta$ will be an infinitesimally small function
\be\label{thetaTheta}
\theta(x) = (\Theta A)(x) \, ds\,,
\ee
where $ds$ is some infinitesimal quantity and $\Theta$ a linear operator from $\sA$ to the space of $\CCC$-valued functions on $\MMM^4$, defined by the relation
\be\label{gds}
A_\mu - \frac{1}{e_x} \partial_\mu (\Theta A) \, ds = \widetilde{d}^{-1} d A~~~\forall A\in \sA \,,
\ee
in fact uniquely up to two freedoms: (i)~multiplication of $\Theta$ by a constant $c>0$ while replacing $ds$ by $c^{-1}ds$, and (ii)~addition to $\Theta$ of any linear operator $C$ from $\sA$ to $\CCC$ (thought of as the space of constant functions on $\MMM^4$).

\begin{defn}
Under the infinitesimal gauge transformation given by \eqref{gds} and any choice of $C$ and $ds$ (or $c$), $\Psi$ transforms to
\begin{align}
&\widetilde{\Psi}^{(m,n)}(x_1...x_m,y_1...,y_n)\nonumber\\[2mm]
&\quad=\Psi^{(m,n)}(x_1...x_m,y_1...y_n) \nonumber\\[2mm]
&\qquad-\frac{1}{e_x}\sum_{k=1}^n \bigl[I^{\otimes(m+k-1)}\otimes \partial_{\mu_k}\Theta \otimes I^{\otimes(n-k)} \bigr]\Psi^{(m,n)}(x_1...x_m,y_1...y_n)~ds \nonumber\\
&\qquad+i\sqrt{n+1} \sum_{j=1}^m \Bigl( \bigl[I^{\otimes(m+n)}\otimes \Theta \bigr] \Psi^{(m,n+1)} \Bigr)(x_1...x_m,y_1...y_n,x_j)~ds\,.
\label{Psitilde}
\end{align}
\end{defn}

Note that different values of $c$ actually lead to the same $\widetilde{\Psi}$ because the factors $\Theta$ and $ds$ always appear together, while different choices of $C$ will change $\widetilde{\Psi}$ merely by an (irrelevant) phase factor.

\begin{prop}\label{prop:gaugeds}
If $\Psi$ satisfies \eqref{LPx}, \eqref{LPy}, and \eqref{LPg}, then $\widetilde{\Psi}$ satisfies \eqref{LPx}, \eqref{LPy}, and the analog of \eqref{LPg} for $\widetilde{\sA}$.
\end{prop}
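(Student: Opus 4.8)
The plan is to check that each of the three defining conditions holds for $\widetilde\Psi$ to first order in $ds$, which is all that can be asked since \eqref{Psitilde} defines $\widetilde\Psi$ only up to $O(ds^2)$. I would use the complex-linearity of \eqref{LPx} and \eqref{LPy} (Section~\ref{sec:rem1}) to apply each operator summand-by-summand to the three terms of \eqref{Psitilde}, together with the defining relation \eqref{gds} for $\Theta$. Conceptually the whole computation is the infinitesimal ($\theta\to0$), multi-time shadow of the classical joint transformation \eqref{AtildeA}--\eqref{psitildepsi}: the second term of \eqref{Psitilde} is the potential shift $A_\mu\mapsto A_\mu-\tfrac1{e_x}\partial_\mu\theta$ applied in each photon slot, the third term is the phase rotation $\psi\mapsto(1+i\theta)\psi$ recast for the multi-time wave function, and the classical identity $(i\gamma^\mu\partial_\mu-m_x)(e^{i\theta}\psi)=e_x\gamma^\rho\widetilde{A}_\rho\,e^{i\theta}\psi$ is the blueprint for the Dirac check.

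The gauge condition is the quickest. Fixing a photon variable $y_k$ and the other arguments, the first two terms of \eqref{Psitilde}, restricted to the $y_k$-slot, are by \eqref{gds} exactly $\widetilde{d}^{-1}d$ applied to that slot of $\Psi$; since $\Psi$ lies in $\sA$ there, the result lies in $\widetilde\sA$. The third term is $O(ds)$ and lies in $\sA$ in the $y_k$-slot, and since $\sA$ and $\widetilde\sA$ differ only at $O(ds)$, it lies in $\widetilde\sA$ up to $O(ds^2)$. Hence $\widetilde\Psi$ obeys the $\widetilde\sA$-analog of \eqref{LPg} in every photon variable to first order.

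For \eqref{LPy} I would write its left side as $\partial^\mu_{y_k}(d_k\widetilde\Psi)_{\mu\nu}$ and apply $d_k$ termwise. The $k$-th piece of the second term of \eqref{Psitilde} is a $y_k$-gradient and is annihilated by $d_k$ ($d_k^2=0$), the infinitesimal image of the gauge-invariance of $F_{\mu\nu}$. On the first term, \eqref{LPy} for $\Psi^{(m,n)}$ reproduces the $O(1)$ source built from $\Psi^{(m,n-1)}$, matching the leading part of the right side of \eqref{LPy} for $\widetilde\Psi$. The $O(ds)$ pieces come from the remaining ($k'\neq k$) terms of the second term and from the third term; inserting \eqref{LPy} for $\Psi^{(m,n)}$ and for $\Psi^{(m,n+1)}$ respectively and commuting the operators $\partial_{\mu_{k'}}\Theta_{y_{k'}}$ and $\Theta_{y_{n+1}}$ past the electron-supported factors $\delta^3_\mu(y_k-x_j)$, these reduce to the second- and third-term pieces of the source built from $\widetilde\Psi^{(m,n-1)}$. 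The one non-routine step is that the third-term match needs the photon-permutation symmetry of $\Psi^{(m,n)}$ to identify the slot $n\!+\!1$ carrying $\Theta$ with the last slot of the $(m,n)$-sector, whereupon the factors $\sqrt{n}$ and $\sqrt{n\!+\!1}$ make the coefficients agree.

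Equation \eqref{LPx} concentrates the work, and the hard part will be the third term of \eqref{Psitilde}. Applying $(i\gamma^\mu_j\partial_{x_j,\mu}-m_x)$ to the first term gives the $O(1)$ right side by \eqref{LPx}; on the second term, commuting the Dirac operator past $\partial_{\mu_k}\Theta_{y_k}$ and using \eqref{LPx} produces an $O(ds)$ term that will match the $k'\le n$ part of the gauge shift of $\widetilde\Psi^{(m,n+1)}$ on the right side. When the Dirac operator meets the third term $i\sqrt{n+1}\sum_{j'}(\Theta_{y_{n+1}}\Psi^{(m,n+1)})\big|_{y_{n+1}=x_{j'}}$, the derivative $\partial_{x_j,\mu}$ acts both on the electron slot $j$ and---on the diagonal $j'=j$---on the photon argument $y_{n+1}=x_j$. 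The diagonal piece, $-\sqrt{n+1}\,\gamma^\mu_j(\partial_{y_{n+1},\mu}\Theta_{y_{n+1}}\Psi^{(m,n+1)})\big|_{y_{n+1}=x_j}\,ds$, is the multi-time analog of the classical $i(\partial_\mu\theta)\psi$, and I would show it equals the $k'=n\!+\!1$ gauge-shift term inside $e_x\sqrt{n+1}\,\gamma^\rho_j\,\widetilde\Psi^{(m,n+1)}_{\mu_{n+1}=\rho}(\ldots,x_j)$. The electron-slot piece, via \eqref{LPx} for $\Psi^{(m,n+1)}$, raises the sector to $\Psi^{(m,n+2)}$ and must be matched with the third-term piece of $\widetilde\Psi^{(m,n+1)}$; here the photon-exchange symmetry of $\Psi^{(m,n+2)}$ (swapping the two slots evaluated at $x_j$ and $x_{j'}$) together with the factor $i\sqrt{n+1}$ makes the two sides agree. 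The delicate accounting of these raised-sector contributions, and pinning down the sign and coefficient of the diagonal Leibniz term, is the crux; everything else is commuting operators and invoking \eqref{LPx} and \eqref{LPy}.
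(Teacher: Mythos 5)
Your proposal is correct and follows essentially the same route as the paper's proof: termwise application of the Dirac operator with the diagonal Leibniz term $-\sqrt{n+1}\,\gamma_j^{\mu}(\partial_{y_{n+1},\mu}\Theta\,\Psi^{(m,n+1)})(y_{n+1}=x_j)\,ds$ supplying the $(n{+}1)$-st gauge-shift term and bosonic symmetry handling the raised-sector terms, the observation that \eqref{LPg} holds by construction, and the reduction of the \eqref{LPy} check to the field-tensor representation where the pure-gradient term is annihilated by $d_k$ (the paper phrases this via the transformation law \eqref{Fds} of $F^{(m,n)}$, but it is the same idea).
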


We give the proof in Appendix~\ref{app:gaugeds}. Now we want to generalize from infinitesimal to finite changes of the gauge condition $\sA$. Since any path $\sA(s)$ in the space of all linear gauge conditions, parameterized by $s\in [0,1]$, can be regarded as a succession of infinitesimal changes of gauge, it is already defined how to transform $\Psi$ from $\sA(0)$ to $\sA(1)$. 

\begin{prop}\label{prop:gauge}
For suitable choice of $C$, this transformation depends only on the end points $\sA(0)$ and $\sA(1)$, not on the path between them.
\end{prop}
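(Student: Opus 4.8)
The plan is to show that the finite transformation—defined as the composition of the infinitesimal transformations \eqref{Psitilde} along the path $\sA(s)$—is completely controlled by two pieces of data, each of which I will argue depends only on the endpoints $\sA(0)$ and $\sA(1)$: the net linear map on vector potentials, and a net gauge function $\theta_{\mathrm{tot}}$ (linear in $A$). Concretely, I would regard the finite transformation as the solution of the linear flow $\tfrac{d}{ds}\Psi(s)=G(s)\Psi(s)$, where the generator $G(s)$ is read off from \eqref{Psitilde} with $\Theta=\Theta(s)$ and with $A=A(s)=d_s^{-1}d_0A(0)$ the potential in $\sA(s)$ sharing the fixed field tensor $F=d_0A(0)$ (here $d_s\colon\sA(s)\to\sF$). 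By uniqueness of solutions of this linear ODE, it suffices to exhibit a candidate finite transformation that (a) reduces to \eqref{Psitilde} to first order in $ds$ and (b) manifestly depends only on the endpoints; it must then coincide with the path-ordered composition, proving the claim.

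For the photon (vector-potential) part, I would use that the exterior derivatives compose telescopically. The infinitesimal step sends $A(s)\in\sA(s)$ to $A(s+ds)=d_{s+ds}^{-1}d_s\,A(s)$, since both share the field tensor $F$; composing the steps of a subdivision gives $(d_{s_N}^{-1}d_{s_{N-1}})\cdots(d_{s_1}^{-1}d_{s_0})=d_1^{-1}d_0$, because each intermediate $d_{s_i}d_{s_i}^{-1}$ is the identity on $\sF$. Hence the net action on the $A$-slot of every photon is the single map $d_1^{-1}d_0$, which is built purely from $\sA(0)$ and $\sA(1)$. This disposes of the diagonal part of $G$.

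For the electron (phase) part, note that from \eqref{gds} the potentials along the path obey $\tfrac{d}{ds}A_\mu(s)=-\tfrac{1}{e_x}\partial_\mu\bigl(\Theta(s)A(s)\bigr)$, so the accumulated gauge function $\theta_{\mathrm{tot}}=\int_0^1\Theta(s)A(s)\,ds$ satisfies
\[
\partial_\mu \theta_{\mathrm{tot}} = -e_x \int_0^1 \frac{d}{ds}A_\mu(s)\, ds = e_x\bigl(A_\mu(0) - A_\mu(1)\bigr),
\]
whose right-hand side depends only on the endpoints (via $A(1)=d_1^{-1}d_0A(0)$). Thus $\theta_{\mathrm{tot}}$ is pinned down up to an additive spacetime-constant that is still linear in $A(0)$—and this constant is precisely the freedom (ii) of adding a $C$ to $\Theta$. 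Choosing $C(s)$ along the path so as to fix this constant (for instance, normalizing $\theta_{\mathrm{tot}}$ to vanish at a chosen reference point, as a functional of $A(0)$) makes $\theta_{\mathrm{tot}}$ itself depend only on the endpoints. The finite transformation is then the multi-time lift of the classical gauge transformation $A\mapsto A-\tfrac{1}{e_x}\partial_\mu\theta_{\mathrm{tot}}$, $\psi\mapsto e^{i\theta_{\mathrm{tot}}}\psi$: transform each photon slot by $d_1^{-1}d_0$ and insert a factor $e^{i\theta_{\mathrm{tot}}(x_j)}$ per electron, realized—since $\theta_{\mathrm{tot}}=\Theta_{\mathrm{tot}}A$ is linear in the next photon—as the exponential of the coupling in the third line of \eqref{Psitilde}. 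Both ingredients depend only on $\sA(0),\sA(1)$, so $\widetilde\Psi$ does too.

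The main obstacle I anticipate is making steps (a)–(b) rigorous in the presence of path-ordering: the generators $G(s)$ at different $s$ need not commute on the nose, and one must verify that all apparent non-commutativity lives in the \emph{abelian} phase and is absorbed into the additive constant of $\theta_{\mathrm{tot}}$, so that the path-ordered exponential genuinely collapses to the endpoint formula. The remark immediately following the Definition—that a change of $C$ alters $\widetilde\Psi$ only by a phase—is exactly the infinitesimal seed of this fact; the crux of the proof is promoting it to the exact statement that the accumulated phase is the path-independent $\theta_{\mathrm{tot}}$, thereby confirming that a single suitable choice of $C$ removes all residual path dependence.
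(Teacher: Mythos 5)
Your overall strategy is genuinely different from the paper's: the paper treats \eqref{Psitilde} as defining a connection over the space $\sG$ of linear gauge conditions and proves that its curvature vanishes, by expanding the transport around an infinitesimal parallelogram to second order in $ds$ and cancelling terms with the help of the identity \eqref{basicindep}, which encodes the path-independence of the \emph{classical} transport. Your two preparatory ingredients are correct and are also (implicitly) used by the paper: the diagonal photon transport telescopes to $d_1^{-1}d_0$, and $\partial_\mu\theta_{\mathrm{tot}}=e_x\bigl(A_\mu(0)-A_\mu(1)\bigr)$ pins down the accumulated gauge function up to the additive constant governed by the freedom in $C$.

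However, there is a genuine gap exactly where you flag ``the main obstacle'': you never show that the path-ordered composition of the maps \eqref{Psitilde} collapses to your endpoint formula, and the mechanism you offer --- that all non-commutativity ``lives in the abelian phase and is absorbed into the additive constant of $\theta_{\mathrm{tot}}$'' --- is not correct as stated. Write the generator as $G(s)=T(s)+P(s)$, with $T(s)$ the diagonal part (second line of \eqref{Psitilde}) and $P(s)$ the photon-annihilating part (third line). The commutator $[T(s),P(s')]$ is a nonzero operator of $P$-type: it applies $\Theta(s')\circ\partial\Theta(s)$ (up to factors) to the annihilated photon variable. It is not a scalar, and it cannot be absorbed into $C$, whose range consists only of constant functions on $\MMM^4$; these are precisely the cross terms ($\Theta_{42}\circ\partial\Theta_{21}$, $D_{21}\Theta_{21}$, and so on) that the paper must track in \eqref{basicindep} and in the second-order expansion of Appendix~\ref{app:pathindep}. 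Relatedly, checking that a candidate ``reduces to \eqref{Psitilde} to first order'' must be done at every $s$ along the path, i.e., one must verify the full ODE $\tfrac{d}{ds}V(s)=G(s)V(s)$ for an explicit family $V(s)$, which you do not do. To close the argument along your lines one would pass to the interaction picture with respect to $T$: the conjugated operators $\tilde P(s)=V_T(s)^{-1}P(s)V_T(s)$ apply $\Theta(s)\circ d_s^{-1}d_0$ to the extra photon variable and mutually commute (they act on distinct, bosonically symmetrized photon slots), so $\sT\exp\int_0^1(T+P)\,ds=V_T(1)\exp\bigl(\int_0^1\tilde P\,ds\bigr)$, and $\int_0^1\Theta(s)\circ d_s^{-1}d_0\,ds$ is exactly your $\Theta_{\mathrm{tot}}$, hence endpoint-determined after fixing $C$. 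With that step supplied your route would work --- and would in fact produce the closed formula for the finite gauge map $G$ whose existence the paper explicitly leaves open, which is a further sign that this collapse is the nontrivial content of the Proposition rather than a routine verification. Without it, the central claim is asserted, not proved.
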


We give the proof in Appendix~\ref{app:pathindep}. Together, Propositions~\ref{prop:gaugeds} and \ref{prop:gauge} show that for any two linear gauge conditions $\sA,\widetilde\sA$, there exists a linear operator $G$ that maps any solution $\Psi$ of \eqref{LPx}, \eqref{LPy}, and \eqref{LPg} to a solution $\widetilde\Psi$ of \eqref{LPx}, \eqref{LPy}, and the analog of \eqref{LPg} for $\widetilde\sA$. We do not know whether there exists a closed formula for $G$.

\subsection{Remarks}
\label{sec:rem2}

\begin{enumerate}
\setcounter{enumi}{\theremarks}
\item\label{rem:exTheta} {\it Example of $\Theta$.} The Coulomb gauge condition \eqref{Coulomb} can equivalently be written as $\partial^\mu A_\mu=n^\nu n^\mu \partial_\nu A_\mu$, where $n^\mu=(1,0,0,0)$; more generally, $n^\mu$ is the timelike basis unit vector of the Lorentz frame in which \eqref{Coulomb} holds. Different choices of $n^\mu$ correspond to different gauge conditions, so let us consider an infinitesimal change in $n^\mu$, $\widetilde{n}^\mu=n^\mu + v^\mu ds$, and compute the corresponding $\Theta$ operator. Normalization of $\widetilde{n}^\mu$ requires that $n_\mu v^\mu=0$. From $\partial^\mu\widetilde{A}_\mu= \widetilde{n}^\nu \widetilde{n}^\mu \partial_\nu\widetilde{A}_\mu$ and $\widetilde{A}_\mu = A_\mu -\tfrac{1}{e_x} \partial_\mu \Theta A \, ds$, we obtain that
\be
(n^\nu n^\mu-g^{\nu\mu})\partial_\nu \partial_\mu \Theta A=e_x(v^\nu n^\mu+n^\nu v^\mu)\partial_\nu A_\mu\,.
\ee
If we choose coordinates such that $n^\mu=(1,0,0,0)$ and $v^\mu=(0,1,0,0)$, then this reads $\Delta \Theta A= e_x(\partial_1 A_0 + \partial_0 A_1)$; if we assume square integrability at every time, then the Laplacian $\Delta$ possesses a unique inverse $\Delta^{-1}$ (multiplication by $1/(k_1^2+k_2^2+k_3^2)$ in Fourier space), and
\be
\Theta A = e_x \Delta^{-1}(\partial_1 A_0 + \partial_0 A_1)\,.
\ee

\item {\it Domain of $\Theta$.} Our introduction of $\Theta$ was based on $\widetilde{d}^{-1}$, which requires that $A_\mu(x)$ is given for all $x\in\MMM^4$, whereas later in \eqref{Psitilde} we want to apply $\Theta$ to $\Psi$, which is given only for spacelike configurations. In the example of Remark~\ref{rem:exTheta}, the computation of $\Theta$ actually does not require knowledge of $A_\mu(x)$ on \emph{all} of $\MMM^4$ but just requires inverting the Laplacian, which can be done on the horizontal $\{x^0=\mathrm{const.}\}$ 3-planes (and perhaps on Cauchy surfaces). We leave open the question under which conditions $\Theta A$ is well defined for $A$ given on the set of points spacelike separated from a given spacelike configuration. 

\item {\it Gauge transformation of $F^{(m,n)}$.} The fact that in classical electrodynamics, $F_{\mu\nu}$ does not change under gauge transformations might have suggested that $F^{(m,n)}$ (the $\Psi$ with ``all switches down'') is also invariant under the gauge transformations of \eqref{Psitilde}. This is not so. (This fact is unsurprising when we keep in mind that $F^{(m,n)}$ also contains electron degrees of freedom corresponding to the $\psi$ in \eqref{psitildepsi}.) In fact, it easily follows from \eqref{Psitilde} that under an infinitesimal change \eqref{gds} of the gauge condition, $F^{(m,n)}$ transforms to
\begin{align}
&\widetilde{F}^{(m,n)}(x_1...x_m,y_1...y_n)\nonumber\\[2mm]
&\quad = F^{(m,n)}(x_1...x_m,y_1...y_n)\nonumber\\[2mm]
&\qquad + i\sqrt{n+1} \sum_{j=1}^m \Bigl( \bigl[I^{\otimes(m+n)}\otimes \Theta d^{-1} \bigr] F^{(m,n+1)} \Bigr)(x_1...x_m,y_1...y_n,x_j)~ds\,.\label{Fds}
\end{align}

\item {\it Affine gauge transformations.} A different analog of the classical 1-particle gauge transformations given by \eqref{AtildeA} and \eqref{psitildepsi} is based on fixing a function $\theta:\MMM^4\to\RRR$, thought of as being the function that appears in \eqref{AtildeA} and \eqref{psitildepsi}, and carrying out ``this'' transformation on every electron and photon variable in $\Psi$. A coherent interpretation of this idea is formulated in \eqref{affinegds} and \eqref{affineg} below. Note, however, that the 1-particle transformation \eqref{AtildeA} maps a linear subspace $\sA$ to an \emph{affine} subspace (as $0\in \sA$ gets mapped to $-\tfrac{1}{e_x}\partial_\mu \theta$, which in general is non-zero), with the consequence that $\Psi$ when transformed this way will no longer obey the third law \eqref{LPg} for any linear gauge condition. (And if we allowed affine subspaces for $\sA$, then the evolution of $\Psi$ would no longer be linear.)

For an infinitesimal transformation, i.e., assuming that $\theta(x)$ is infinitesimally small or replacing $\theta(x) \to \theta(x) \, ds$, the desired transform $\breve{\Psi}$ of $\Psi$ is
\begin{align}
&\breve{\Psi}^{(m,n)}(x_1...x_m,y_1...y_n) \nonumber\\[2mm]
&\quad =\Psi^{(m,n)}(x_1...x_m,y_1...y_n) \nonumber\\[2mm]
&\qquad + i\sum_{j=1}^m \theta(x_j) \, \Psi^{(m,n)}(x_1...x_m,y_1...y_n)  \, ds\nonumber\\
&\qquad -\frac{1}{e_x\sqrt{n}} \sum_{k=1}^n \partial_{\mu_k} \theta(y_k) \, \Psi^{(m,n-1)}_{\widehat{\mu_k}}(\widehat{y_k}) \, ds\,.
\label{affinegds}
\end{align}

\begin{prop}\label{prop:gauge1}
If $\Psi$ satisfies \eqref{LPx} and \eqref{LPy}, then so does $\breve{\Psi}$. Also, $\breve{\Psi}$ obeys the fermionic and bosonic permutation symmetries.
\end{prop}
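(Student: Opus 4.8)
The plan is to exploit the complex-linearity of the evolution equations \eqref{LPx} and \eqref{LPy}. I would write the infinitesimal transform \eqref{affinegds} as $\breve{\Psi}^{(m,n)} = \Psi^{(m,n)} + \Phi^{(m,n)}\,ds$, where
\[
\Phi^{(m,n)} = i\sum_{j=1}^m\theta(x_j)\,\Psi^{(m,n)} - \frac{1}{e_x\sqrt{n}}\sum_{k=1}^n \partial_{\mu_k}\theta(y_k)\,\Psi^{(m,n-1)}_{\widehat{\mu_k}}(\widehat{y_k})
\]
is the first-order variation. Since \eqref{LPx} and \eqref{LPy} are exactly linear and $\Psi$ solves them, plugging in $\breve\Psi$ produces the equation for $\Psi$ plus $ds$ times the same equation applied to $\Phi$; the claim therefore reduces to showing that $\Phi$, viewed as a linear functional of the solution $\Psi$, again satisfies \eqref{LPx} and \eqref{LPy}. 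Structurally, $\Phi$ splits into (i) multiplication by the symmetric scalar $\sum_j\theta(x_j)$ — the many-particle version of the electron phase \eqref{psitildepsi} — and (ii) a photon annihilation term with test one-form $\partial_\mu\theta$ — the many-particle version of the potential shift \eqref{AtildeA}. This decomposition will guide every step.

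For \eqref{LPx}, I would apply the free Dirac operator $i\gamma^\mu_j\partial_{x_j,\mu}-m_x$ to $\Phi^{(m,n)}$ and use the product rule. On term (i) the derivative either leaves $\sum_{j'}\theta(x_{j'})$ untouched, reproducing through \eqref{LPx} for $\Psi$ the creation term $ie_x\sqrt{n+1}\sum_{j'}\theta(x_{j'})\,\gamma^\rho_j\,\Psi^{(m,n+1)}_{\mu_{n+1}=\rho}$, or it hits $\theta(x_j)$ itself, producing the single extra term $-\gamma^\rho_j\,\partial_\rho\theta(x_j)\,\Psi^{(m,n)}$. On term (ii) the factors $\theta(y_k)$ are independent of $x_j$, so \eqref{LPx} for $\Psi^{(m,n-1)}$ turns each summand into a creation term. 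On the other side I would expand the right-hand side $e_x\sqrt{n+1}\,\gamma^\rho_j\,\Phi^{(m,n+1)}_{\mu_{n+1}=\rho}(\ldots,x_j)$, splitting the annihilation sum in $\Phi^{(m,n+1)}$ into the summand whose annihilated photon is the appended copy $x_j$ — which matches exactly the extra term $-\gamma^\rho_j\,\partial_\rho\theta(x_j)\,\Psi^{(m,n)}$ — and the remaining summands. A term-by-term comparison, in which the factors $\sqrt{n+1}$ and $\sqrt{n}$ cancel correctly, closes the identity.

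For \eqref{LPy}, I would apply the Maxwell operator $M_\nu = 2\partial^\mu_{y_k}\partial_{y_k,[\mu}(\,\cdot\,)_{\mu_k=\nu]}$ to $\Phi^{(m,n)}$. The one genuinely non-routine observation is that, in term (ii), the single summand $k$ whose gradient sits in the acted-upon slot $y_k$ has field strength $2\partial_{y_k,[\mu}\partial_{\nu]}\theta(y_k)=0$ and therefore drops out entirely — the curl of a gradient vanishes identically. Term (i) and the remaining ($\neq k$) summands of term (ii) then reduce, via \eqref{LPy} for $\Psi$ in the sectors $(m,n)$ and $(m,n-1)$, to precisely $\tfrac{e_x}{\sqrt{n}}\sum_j\delta^3_\mu(y_k-x_j)\,\gamma^\mu_j\gamma_{j\nu}\,\Phi^{(m,n-1)}_{\widehat{\mu_k}}(\widehat{y_k})$, with the factors $\sqrt{n}$ and $\sqrt{n-1}$ matching as required.

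Finally, for the permutation symmetries: term (i) multiplies the (anti)symmetric $\Psi^{(m,n)}$ by the \emph{symmetric} scalar $\sum_j\theta(x_j)$, hence preserves both the bosonic symmetry in the $(y_k,\mu_k)$ and the fermionic antisymmetry in the $(x_j,s_j)$; term (ii) has the standard annihilation form $\sum_k(\ldots)_k\,\Psi^{(m,n-1)}_{\widehat{\mu_k}}(\widehat{y_k})$, which is manifestly symmetric in the photon labels and inert on the electron labels. Thus $\Phi$, and hence $\breve{\Psi}$, inherits both symmetries from $\Psi$. I expect the principal difficulty to be purely organizational rather than conceptual: keeping the omission symbols $\widehat{\mu_k},\widehat{y_k}$, the appended-photon bookkeeping in the $(m,n+1)$ sector, and the $\sqrt{n\pm1}$ normalizations consistent so that all cross terms cancel exactly. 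The clean vanishing of the $l=k$ summand in \eqref{LPy} is the key structural fact that makes the Maxwell computation go through.
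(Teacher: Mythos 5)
Your proposal is correct and follows essentially the same route as the paper's proof in Appendix~\ref{app:gauge1}: apply the Dirac and Maxwell operators to the transformed wave function, use the product rule so that the derivative hitting $\theta(x_j)$ produces the term $-\gamma^\rho_j\,\partial_\rho\theta(x_j)\,\Psi^{(m,n)}$ that is matched by the $k=n+1$ summand of the annihilation sum in the $(m,n+1)$ sector, and observe that the $\ell=k$ summand drops out of the Maxwell computation because $\partial_{[\mu}\partial_{\nu]}\theta=0$. Isolating the first-order variation $\Phi$ rather than working with $\breve\Psi$ directly is only a cosmetic reorganization permitted by linearity, and your normalization bookkeeping and the symmetry argument agree with the paper's.
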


We give the proof in Appendix~\ref{app:gauge1}. Here, we can also give an explicit formula for a finite (non-infinitesimal) transformation:\begin{multline}\label{affineg}
\breve{\Psi}^{(m,n)}(x_1...x_m,y_1...y_n) 
=\\
e^{i\sum\limits_{j=1}^m\theta(x_j)} \!\! \sum_{\cI\subseteq\{1...n\}} \!\! (-1)^{\#\cI}\sqrt{\frac{(n-\#\cI)!}{n!}}\biggl(\prod_{k\in\cI} \frac{\partial_{\mu_k}\theta(y_k)}{e_x}\biggr)\Psi^{(m,n-\#\cI)}_{\widehat{\mu_{\cI}}}\bigl(\widehat{y_{\cI}}\bigr). 
\end{multline}
Here, the sum is over all subsets of $\{1,\ldots,n\}$, which includes $\cI=\emptyset$, and thus has $2^n$ summands. In particular, the sum always has at least one summand, even for $n=0$; an empty product has the value 1; $\#\cI$ denotes the number of elements of $\cI$, $\widehat{\mu_\cI}$ the omission of all $\mu_k$ with $k\in\cI$, and $\widehat{y_{\cI}}$ the omission of all $y_k$ with $k\in\cI$.

\item\label{rem:indispensable} {\it The gauge condition is indispensable.} One might wonder why we need the third law \eqref{LPg} and do not just admit all solutions of \eqref{LPx} and \eqref{LPy}. The answer is that those would be too many. Here is an example illustrating this, which we owe to Lukas Nullmeier. Consider any solution $\Psi$ of \eqref{LPx} and \eqref{LPy} with sharp electron number $m\in\NNN$ (i.e., concentrated on the sectors with the given $m$).
Let $c:\RRR\to\RRR$ be a smooth increasing function with $c(t)=0$ for $t\leq 0$ and $c(1)=1$ for $t\geq 1$; define, in some fixed Lorentz frame,
\be
\theta(x):= e^{i\pi c(x^0)/m}
\ee
and consider the function
\be
\Psi':= \tfrac{1}{2}\Psi+\tfrac{1}{2}\breve{\Psi}\,.
\ee
By linearity and Proposition~\ref{prop:gauge1}, $\Psi'$ is a solution of \eqref{LPx} and \eqref{LPy} that obeys fermionic and bosonic permutation symmetry. For any $x\in\MMM^4$ with $x^0\leq 0$, $\theta(x)=1$ and $\partial_\mu\theta(x)=0$, while for any $x$ with $x^0\geq 1$, $\theta(x)=-1$ and $\partial_\mu\theta(x)=0$. Therefore, $\Psi'=\Psi$ on all configurations with all $x_j^0\leq 0$ and $y_k^0\leq 0$, while $\Psi'=0$ on all configurations with all $x_j^0\geq 1$ and $y_k^0\geq 1$. In particular, any initial data could evolve to the constant 0 function. Therefore, conversely, constant 0 initial data at time 0 could evolve to any other solution after time $1$, and any initial data could evolve to any other data at any other time. Thus, \eqref{LPx} and \eqref{LPy} alone would not define a meaningful time evolution.
\end{enumerate}
\setcounter{remarks}{\theenumi}

\section{Field Operators}
\label{sec:field}

Our basis for proposing the multi-time equations \eqref{LPx}, \eqref{LPy}, \eqref{LPg} is that they form natural multi-time equations for a system of two species of relativistic particles, one with spin $\tfrac{1}{2}$ and mass $m_x$ and one with spin 1 and mass 0, and in fact form a natural multi-time version of the Hamiltonian of Landau and Peierls. But there is another way of arriving at the equations \eqref{LPx}, \eqref{LPy}, \eqref{LPg}, which we describe in this section.

In some approaches to QED, one uses field operators, $\hat\psi_s(x)$ for the Dirac field and $\hat A_\mu(x)$ for the quantized electromagnetic field, acting on a Hilbert space $\Hilbert$. They are connected to the multi-time wave function $\Psi$ according to \cite{Nik10,pt:2013c}
\be\label{Psiop}
\Psi^{(m,n)}_{s_1...s_m,\mu_1...\mu_n}(x_1...x_m,y_1...y_n)=
\frac{1}{\sqrt{m!n!}}
\Bigl\langle \emptyset \Big| \hat\psi_{s_1}(x_1)\cdots \hat\psi_{s_m}(x_m) \hat A_{\mu_1}(y_1)\cdots \hat A_{\mu_n}(y_n) \Big|\varphi\Bigr\rangle
\ee
for $(x^{4m},y^{4n})\in\sS_{xy}$ with $y_k\neq y_\ell$ for all $k\neq \ell$, $\varphi\in\Hilbert$ the quantum state in the field operator picture and $|\emptyset\rangle$ the Fock vacuum state. (Again, we leave aside the issue of the Dirac sea.) We show in this section that the multi-time equations \eqref{LPx}, \eqref{LPy}, \eqref{LPgex} follow from equations that the field operators are usually assumed to obey together with a Coulomb gauge condition on $\hat A_{\mu}$.

The usual commutation relations for the field operators can be expressed, for $x-x'\in\sS_0^4$, as
\begin{subequations}\label{commute}
\begin{align}
\Bigl\{ \hat \psi_s(x), \hat \psi_{s'}(x') \Bigr\} 
&= 0 \label{commutepsi}\\
\Bigl\{ \overline{\hat \psi(x)}{}^s, \hat \psi_{s'}(x') \Bigr\} 
&= \gamma_{s'}^{\mu s} \: \delta^3_\mu(x-x') \label{commutepsibar}\\
\Bigl[ \hat A_\mu(x), \hat A_{\mu'}(x') \Bigr] 
&= 0\label{commuteA}\\
\Bigl[ \hat \psi_s(x), \hat A_{\mu'}(x') \Bigr] 
=\biggl[ \overline{\hat \psi(x)}{}^s, \hat A_{\mu'}(x') \biggr] 
&= 0\label{commutepsiA}
\end{align}
\end{subequations}
with $[\cdot,\cdot]$ the commutator, $\{\cdot,\cdot\}$ the anti-commutator, and
\be\label{psibardef}
\overline{\hat\psi}{}^s = \sum_{s'} \hat\psi_{s'}^\dagger \, \gamma^{0s}_{s'} \,.
\ee
Since we are leaving out positrons, our $\hat\psi(x)$ is just an electron annihilation operator, in particular
\be\label{psiop0}
\hat\psi_s(x)|\emptyset\rangle = 0
\ee
for all $x\in \MMM^4$. The field equations for the field operators read
\begin{subequations}\label{op}
\begin{align}
\sum_{s'}\Bigl(i\gamma^{\mu s'}_s \bigl[\partial_\mu+ie_x\hat A_\mu(x)\bigr] -m_x\Bigr)\hat \psi_{s'}(x)
&= 0
\label{psiop}\\
2\partial^\mu\partial_{[\mu} \hat A_{\nu]}(x)
&= e_x \, \overline{\hat\psi(x)} \, \gamma_\nu \, \hat\psi(x) \,.\label{Aop}
\end{align}
\end{subequations}
If we want to impose a Coulomb gauge condition on the field operators $\hat{A}_\mu$, it states that
\be
\sum_{\mu=1}^3 \partial^\mu \hat A_\mu(x)
=0\,. \label{gop}
\ee

\begin{prop}\label{prop:op}
Suppose that \eqref{commute}--\eqref{op} hold. Then $\Psi$ defined by \eqref{Psiop} satisfies the equations \eqref{LPx} and \eqref{LPy} on $\sS_{xy}$. If also \eqref{gop} holds, then $\Psi$ also satisfies \eqref{LPgex} on $\sS_{xy}$.
\end{prop}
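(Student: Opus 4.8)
The plan is to apply each of the three differential operators on the left-hand sides of \eqref{LPx}, \eqref{LPy}, and \eqref{LPgex} directly to the field-operator representation \eqref{Psiop}, to use the operator field equations \eqref{op} to rewrite the result, and finally to use the (anti-)commutation relations \eqref{commute} together with the vacuum property \eqref{psiop0} to bring the expectation value back into the form \eqref{Psiop} of a single sector of $\Psi$. Throughout, the hypothesis that the configuration lies in $\sS_{xy}$ (so that any two of the points $x_1,\ldots,x_m,y_1,\ldots,y_n$ are spacelike or equal, whence their difference lies in $\sS_0^4$) is what licenses every use of \eqref{commute}; this is the structural reason the derivation works only on the set of spacelike configurations.

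For \eqref{LPx} I would let the Dirac operator $(i\gamma_j^\mu\partial_{x_j,\mu}-m_x)$ act. Since only the factor $\hat\psi_{s_j}(x_j)$ in \eqref{Psiop} depends on $x_j$, the operator hits that factor alone, and \eqref{psiop} rewrites $(i\gamma^\mu\partial_\mu-m_x)\hat\psi(x_j)$ as $e_x\gamma^\rho\hat A_\rho(x_j)\hat\psi(x_j)$. The newly produced operator $\hat A_\rho(x_j)$ must then be carried to the far right of the operator product, past $\hat\psi_{s_{j+1}}(x_{j+1}),\ldots,\hat\psi_{s_m}(x_m)$ and past every $\hat A_{\mu_1}(y_1),\ldots,\hat A_{\mu_n}(y_n)$; because $\hat A$ is bosonic and all relevant separations lie in $\sS_0^4$ (including the coincident case $x_j-x_j=0$), the commutators \eqref{commutepsiA} and \eqref{commuteA} let it move without extra terms. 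The resulting expectation value is exactly the one representing $\Psi^{(m,n+1)}_{\mu_{n+1}=\rho}(x_1\ldots x_m,y_1\ldots y_n,x_j)$, and comparing the normalizing prefactors $1/\sqrt{m!n!}$ and $1/\sqrt{m!(n+1)!}$ produces the factor $\sqrt{n+1}$, giving \eqref{LPx}.

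For \eqref{LPy}, which I expect to be the main obstacle, I would apply the Maxwell operator $2\partial_{y_k}^\mu\partial_{y_k,[\mu}(\cdot)_{\mu_k=\nu]}$ to the factor $\hat A_{\mu_k}(y_k)$ and use \eqref{Aop} to replace it by the current operator $e_x\,\overline{\hat\psi(y_k)}\,\gamma_\nu\,\hat\psi(y_k)$. The task is then to reduce $\langle\emptyset|\,\hat\psi(x_1)\cdots\hat\psi(x_m)\cdots\overline{\hat\psi(y_k)}\,\gamma_\nu\,\hat\psi(y_k)\cdots|\varphi\rangle$ to a single sector. I would move $\overline{\hat\psi(y_k)}$ to the left: it commutes through the bosonic $\hat A$'s by \eqref{commutepsiA}, and then anti-commutes through the electron operators $\hat\psi(x_m),\ldots,\hat\psi(x_1)$, where by \eqref{commutepsibar} each electron $\hat\psi_{s_j}(x_j)$ it meets contributes a term in which that electron is deleted and a factor $\gamma_j^\mu\,\delta^3_\mu(y_k-x_j)$ appears; the single surviving term in which $\overline{\hat\psi(y_k)}$ has passed everything is annihilated by the bra, since $\langle\emptyset|\overline{\hat\psi(y_k)}=0$ follows from \eqref{psibardef} and \eqref{psiop0}. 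In the $j$-th delta term the remaining operator $\hat\psi(y_k)$ is slid leftward into the vacated $j$-th electron slot, passing the other electrons with a clean sign because the right-hand side of \eqref{commutepsi} vanishes at spacelike separation, and the support of $\delta^3_\mu(y_k-x_j)$ lets me replace $\hat\psi(y_k)$ by $\hat\psi(x_j)$. Two factors of $(-1)^{m-j}$ — one from extracting the delta term, one from the repositioning — cancel, the remaining expectation value equals $\sqrt{m!(n-1)!}\,\Psi^{(m,n-1)}_{\widehat{\mu_k}}$, and the prefactor ratio yields $1/\sqrt{n}$. The gamma structure emerges with $\gamma_j^\mu$ (from \eqref{commutepsibar}) to the left of $\gamma_{j\nu}$ (from the current operator), matching \eqref{LPy} and confirming the ordering anticipated after \eqref{LPyPT}.

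Finally, \eqref{LPgex} is immediate: applying $\sum_{\mu_k=1}^3\partial_{y_k}^{\mu_k}$ to \eqref{Psiop} touches only $\hat A_{\mu_k}(y_k)$, which \eqref{gop} makes vanish. The hard part of the whole argument is the bookkeeping in \eqref{LPy} — tracking the fermionic signs, justifying the distributional identification $\delta^3_\mu(y_k-x_j)\,\hat\psi(y_k)=\delta^3_\mu(y_k-x_j)\,\hat\psi(x_j)$, and pinning down the order of the two gamma matrices — together with the implicit restriction to configurations with distinct photon points $y_k\neq y_\ell$ for which \eqref{Psiop} is defined, so that the verified equations are understood to hold on $\sS_{xy}$ by taking this dense subset.
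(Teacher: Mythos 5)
Your proposal is correct and follows essentially the same route as the paper's proof: differentiate \eqref{Psiop}, substitute the operator field equations \eqref{op}, and use \eqref{commute} together with $\langle\emptyset|\overline{\hat\psi}=0$ (from \eqref{psibardef}, \eqref{psiop0}) to return to the form \eqref{Psiop}, with the normalization ratios producing $\sqrt{n+1}$ and $1/\sqrt{n}$. The only cosmetic difference is in \eqref{LPy}: the paper moves the whole bosonic bilinear $\overline{\hat\psi(y_k)}\gamma_\nu\hat\psi(y_k)$ leftward via the commutator identity (so no signs arise), whereas you anticommute $\overline{\hat\psi(y_k)}$ alone and reposition $\hat\psi(y_k)$ with two cancelling factors of $(-1)^{m-j}$ — an equivalent bookkeeping.
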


We give the proof in Appendix~\ref{app:op}.

\section{Preservation of Scalar Product}
\label{sec:scp}

Our claim in this section is more or less that for any two Cauchy surfaces $\Sigma$, $\Sigma'$, the time evolution from $\Psi_{\Sigma}$ to $\Psi_{\Sigma'}$ is  unitary. ``More or less'' because we do not prove existence and uniqueness of solutions, and because the relevant scalar product for photon wave functions is not positive definite. (The physically correct scalar product should be positive definite, but we are anyway allowing negative energies and any choice of gauge.)

The scalar product $\scp{\Psi}{\Phi}_\Sigma$ on $\Sigma$ for our wave function $\Psi$ and another wave function $\Phi$ of the same kind is the Fock-space many-particle extension of the appropriate 1-electron and 1-photon scalar product. It is well known (e.g., \cite[Sec.~7.3.4]{Tum22}) that for 1-electron wave functions $\psi,\phi$, the scalar product reads
\be
\scp{\psi}{\phi}_{1x\Sigma} = \int_\Sigma V(d^3x) \: n^\nu(x) \: \overline{\psi}(x) \: \gamma_\nu \: \phi(x)
\ee
and is actually positive definite. The scalar product for 1-photon wave functions is based on the one used already by Landau and Peierls (and sometimes mentioned in connection with the quantization method of Gupta and Bleuler \cite{Gup50,Ble50,GB}); here we provide a version applicable in the position representation on any Cauchy surface $\Sigma$ and permitting also negative energy contributions and arbitrary gauge. This version, discussed in Appendix~\ref{app:GuptaBleuler}, reads as follows for a 1-photon wave function $A_\mu$ and another such wave function $B_\mu$:
\begin{subequations}\label{scp1y}
\begin{align}
\scp{A}{B}_{1y\Sigma} 
&= -i \int_\Sigma V(d^3x) \: n^\nu(x) ~\Bigl[ A_\mu^* \partial^{~}_\nu B^\mu-A_\nu^* \partial^{~}_\mu B^\mu - (\partial^{~}_\nu A_\mu^*) B^\mu+(\partial^\mu A_\mu^*)B^{~}_\nu \Bigr]\\
&=  \int_\Sigma V(d^3x) \: n^\nu(x) \: A_\mu^*  D^{\mu\mu'}_\nu B^{~}_{\mu'} 
\end{align}
\end{subequations}
with
\be\label{Dmumunudef}
D^{\mu\mu'}_\nu = 
-ig^{\mu\mu'} \overrightarrow{\partial}^{~}_{\!\!\nu} 
+i \delta^\mu_\nu \overrightarrow{\partial}^{\!\mu'} 
+ i\overleftarrow{\partial}^{~}_{\!\!\nu} \, g^{\mu\mu'}
- i\overleftarrow{\partial}^{\!\mu} \, \delta^{\mu'}_\nu \,,
\ee
where the arrow indicates whether the derivative operator acts to the left (i.e., on $A$) or to the right (i.e., on $B$).\footnote{If we restore the constants of nature, so Maxwell's inhomogeneous equation reads $\partial^\mu F_{\mu\nu}=\mu_0 \, J_\nu$, then $\scp{A}{B}_{1y\Sigma}$ should contain a factor of $\mu_0^{-1}$.} The full scalar product then reads explicitly:
\begin{align}
&\scp{\Psi}{\Phi}_\Sigma = \sum_{m=0}^\infty \sum_{n=0}^\infty \int_{\Sigma^{m+n}} \hspace{-6mm} V(d^3x_1)\cdots V(d^3x_m) \, V(d^3y_1)\cdots V(d^3y_n) \: \times\nonumber\\
&~~~~~~~~~~~~~\times \Biggl[\prod_{j=1}^m n^{\nu_j}(x_j)\Biggr] \Biggl[ \prod_{k=1}^n n^{\nu'_k}(y_k) \Biggr] \sum_{\substack{s_1...s_m\\s'_1...s'_m}}\biggl\{ \Psi^{(m,n)}_{s_1...s_m,\mu_1...\mu_n}(x_1...x_m,y_1...y_n)^* \: \times \nonumber\\
&~~~~~~~~~~~~~\times \Biggl[\prod_{j=1}^m (\gamma^0\gamma_{\nu_j})^{s'_j}_{s_j} \Biggr] \Biggl[\prod_{k=1}^n D^{\mu_k\mu'_k}_{y_k,\nu'_k} \Biggr] \Phi^{(m,n)}_{s'_1...s'_m,\mu'_1...\mu'_n}(x_1...x_m,y_1...y_n)\biggr\}\,, \label{scpdef}
\end{align}
where the range of $\overrightarrow{\partial}$ and $\overleftarrow{\partial}$ lies within the curly brackets.

\begin{prop}\label{prop:unitary}
The scalar product is invariant under the time evolution given by \eqref{LPx}, \eqref{LPy}, \eqref{LPg}. That is, for any two Cauchy surfaces $\Sigma,\Sigma'$ and any two solutions $\Psi,\Phi$ of \eqref{LPx}, \eqref{LPy}, and \eqref{LPg},
\be
\scp{\Psi}{\Phi}_\Sigma=\scp{\Psi}{\Phi}_{\Sigma'}\,.
\ee
\end{prop}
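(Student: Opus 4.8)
The plan is to reduce the statement that $\scp{\Psi}{\Phi}_\Sigma$ is independent of $\Sigma$ to a local conservation law, namely the vanishing of the divergence of a suitable current. First I would observe that, by Stokes' theorem, it suffices to show that for each fixed sector pair the quantity under the integral in \eqref{scpdef} arises as the flux through $\Sigma$ of a space-time vector field $\sJ^\lambda$ (built from $\Psi$ and $\Phi$) satisfying $\partial_\lambda \sJ^\lambda=0$ in the region between $\Sigma$ and $\Sigma'$. If such a conserved current exists on all of $\sS_{xy}$, then the flux integrals over two homologous Cauchy surfaces agree, which is exactly the claim. The subtlety is that $\Sigma$ appears through the normal $n^\nu$ in every one of the $m+n$ particle factors, so the natural object is not a single vector field but a multi-current: one contracts the flux index against $n^\nu$ at each of the $m+n$ points, and one must check conservation separately in each of the $m+n$ space-time variables. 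Concretely, I would define the electron current density in the $x_j$-variable via $\overline{\Psi}\gamma^{\nu_j}_j\Phi$ (summed and contracted as in \eqref{scpdef}) and the photon current density in the $y_k$-variable via the sesquilinear differential form $A_\mu^* D^{\mu\mu'}_{\nu'_k} B_{\mu'}$ appearing in \eqref{scp1y}, and then show that each of these is divergence-free in its own variable once the equations of motion are imposed.

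The key computation is therefore the per-variable divergence. For an electron variable $x_j$, taking $\partial_{x_j,\lambda}$ of $\overline{\Psi}\gamma^\lambda_j\Phi$ and using the Dirac-type law \eqref{LPx} for $\Psi$ and for $\Phi$, the mass terms and the free kinetic terms cancel in the usual way (as in the ordinary Dirac continuity equation), leaving only the contributions of the interaction term on the right of \eqref{LPx}. These surviving terms couple the $(m,n)$ sector to the $(m,n+1)$ sector through the factor $e_x\sqrt{n+1}\,\gamma^\rho_j$ evaluated at the coincidence $y_{n+1}=x_j$. For a photon variable $y_k$, I would differentiate the bilinear form $A_\mu^* D^{\mu\mu'}_{\nu'_k} B_{\mu'}$ and use the second-order Maxwell-type law \eqref{LPy}; here the free (source-free) part of \eqref{LPy} makes the current divergence-free by the standard Gupta--Bleuler / Klein--Gordon-type argument (this is essentially what \eqref{scp1y} and \eqref{Dmumunudef} are engineered to give), and again only the source term on the right of \eqref{LPy}, carrying the factor $\tfrac{e_x}{\sqrt n}\,\delta^3_\mu(y_k-x_j)\gamma_j^\mu\gamma_{j\nu}$, survives. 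The decisive point is that these two leftover pieces---the electron-sector creation contribution and the photon-sector source contribution---are equal and opposite, so that the \emph{total} divergence, summed over all sectors and all particle variables, cancels. This cancellation is exactly where the specific placement of $\gamma_{j\nu}$ to the right of $\gamma_j^\mu$ in \eqref{LPy} (flagged in Remark~5) and the normalization factors $\sqrt{n+1}$ versus $1/\sqrt{n}$ do their work: the $\sqrt{n+1}$ emitted in the $x_j$-divergence of the $(m,n)$-sector must match the $1/\sqrt{n+1}$ absorbed in the $y$-divergence of the $(m,n+1)$-sector, and the Fock combinatorial prefactor $1/\sqrt{m!n!}$ has to absorb the relabeling of which photon coincides with which electron.

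I expect the main obstacle to be the bookkeeping at the coincidence locus $y_k=x_j$, where the $\delta^3_\mu$ distribution lives and where $\sS_{xy}$ has boundary. Because $y_k=x_j$ is precisely the boundary of the spacelike set $\sS_{xy}$, one cannot freely integrate by parts or move the surface in a timelike direction at such configurations, and the current $\sJ^\lambda$ is only defined on $\sS_{xy}$; so the argument must confirm that the sector-to-sector flux of probability into the coincidence set from the $(m,n)$ side exactly balances the flux out on the $(m,n+1)$ side, integrated against the Lorentz-covariant surface delta $\delta^3_\mu$ via its defining property \eqref{characterizeD}. Making this matching rigorous---rather than merely formally---requires controlling the distributional terms and justifying Stokes' theorem on the stratified domain $\sS_{xy}$; this is the same ultraviolet/coincidence difficulty acknowledged elsewhere in the paper (e.g.\ Remark~6 on $\partial_\nu J^\nu=0$), so I would present the computation at the formal level, verifying term-by-term cancellation of the interaction contributions across adjacent sectors, and flag the distributional justification at the coincidence points as the step deserving separate rigorous treatment.
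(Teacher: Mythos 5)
Your proposal is correct and follows essentially the same route as the paper's proof in Appendix~\ref{app:unitary}: a per-variable divergence computation of the multi-current (Dirac continuity for the $x_j$'s, the Gupta--Bleuler current identity for the $y_k$'s), with the surviving interaction terms cancelling across adjacent sectors via the $\sqrt{n+1}$ versus $1/\sqrt{n}$ bookkeeping and the evaluation of $\delta^3_\mu$ on the layer between $\Sigma$ and $\Sigma'$. The only cosmetic differences are that the paper first reduces to $\Phi=\Psi$ by the polarization identity and organizes the surface comparison infinitesimally via a Leibniz-rule form of the Gauss theorem, and it likewise flags the remaining rigor issues (there, the convergence of the sector sum) in a footnote.
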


We give the proof in Appendix~\ref{app:unitary}. It would be of interest to study whether this statement remains true in curved space-time.

\section{Agreement with Landau and Peierls}
\label{sec:LP}

When all time variables of a multi-time wave function are set equal, one obtains the single-time wave function of the chosen Lorentz frame. In this section, we show that this step, applied to the multi-time wave function governed by \eqref{LPx}, \eqref{LPy}, and \eqref{LPgex}, yields essentially the wave function of Landau and Peierls. Here, ``essentially'' means that there are a few subtleties or caveats:
\begin{itemize}
\item Landau and Peierls considered only the Coulomb gauge \eqref{LPgex}, so we will also assume it in this section.
\item Instead of the $A_\mu$ representation of a photon wave function, they used the $F_{\mu\nu}$ representation. So in this section, we will also use this representation.
\item They mistakenly put an $\alpha$ in two places where it should be $\beta$.
\item They assumed a constraint condition, see \eqref{constraint} below, that we do not assume and that simplifies the expressions. We discuss this condition in Section~\ref{sec:constraint} and conclude that it is actually physically not reasonable.

\item They forgot the combinatorial factors $\sqrt{n}$ and $\sqrt{n+1}$ visible in \eqref{LPx} and \eqref{LPy}, but included an extra factor of 2 in certain places that (we argue) should not be there, or else their Hamiltonian is not Hermitian on Fock space.
\end{itemize}
In this section, we discuss these points and provide the single-time form of \eqref{LPx}, \eqref{LPy}, and \eqref{LPgex}; up the differences mentioned, it agrees with the equations of Landau and Peierls. The right-hand sides of \eqref{LPx} and \eqref{LPy} become the annihilation and creation terms, respectively, in the Landau-Peierls Hamiltonian.

\subsection{Single-Time Wave Function}
\label{sec:1time}

The wave function that Landau and Peierls used is the single-time version of $F$, i.e., $F$ restricted to the set
\be
\underset{=}{\sS}{}_{xy} = \Bigl\{ (x_1...x_m,y_1...y_n)\in\sS_{xy}: x_1^0=...=x_m^0=y_1^0=...=y_n^0 \Bigr\}
\ee
of simultaneous (horizontal, equal-time) configurations relative to a chosen Lorentz frame. We write
\be
\underset{=}{F} := F\Bigl|_{\underset{=}{\sS}{}_{xy}}
~~\text{and}~~
\underset{=}{\Psi} := \Psi\Bigl|_{\underset{=}{\sS}{}_{xy}}
\ee
for these ``equal-time'' wave functions. 
It is easy to derive the equation governing the single-time evolution of $\underset{=}{F}$ in $t=x_1^0=...=y_n^0$; we start from the chain rule
\be\label{1time1}
i\partial_t \underset{=}{F}^{(m,n)} = \Biggl[ \sum_{j=1}^m i\partial_{x_j0}F^{(m,n)} + \sum_{k=1}^n i\partial_{y_k0} F^{(m,n)} \Biggr]_{\underset{=}{\sS}{}_{xy}}
\ee
and express the right-hand side using \eqref{LPxF}, \eqref{LPyF}, and \eqref{17c} (or \eqref{LPxF}, \eqref{17a}, and \eqref{17c}):
\begin{align}
i\partial_t \underset{=}{F}^{(m,n)} &= \sum_{j=1}^m H^\free_{x_j} \underset{=}{F}^{(m,n)} 
+ \sum_{j=1}^m e_x\sqrt{n+1} \alpha_j^\rho \Bigl[d_1\cdots d_n \Psi^{(m,n+1)}_{\mu_{n+1}=\rho}(y_{n+1}=x_j)\Bigr]_{\underset{=}{\sS}{}_{xy}}\nonumber\\
&~~+ \sum_{k=1}^n H^\free_{y_k} \underset{=}{F}^{(m,n)} +\frac{e_x}{\sqrt{n}}\sum_{j=1}^m \sum_{k=1}^n \delta^3(\vy_k-\vx_j) \begin{pmatrix} 0&\valpha_j^T \\ -\valpha_j & 0_{3\times 3} \end{pmatrix}_{\! \mu_k\nu_k} \underset{=}{F}{}^{(m,n-1)}_{\widehat{\mu_k\nu_k}}(\widehat{y_k})\,, \label{1time2}
\end{align}
where $\alpha^\rho=\gamma^0\gamma^\rho$ (in particular, $\alpha^0 = I$), $T$ means transposed vector (row vector),
\be
H^\free_{x_j} = -i\valpha_j\cdot \boldsymbol{\nabla}_{x_j}+m_x\beta_j
\ee
is the free Dirac operator, and $H^\free_{y_k}$ means the 1-particle Hamiltonian corresponding to the evolution according to the source-free Maxwell equation acting on $y_k$ and $\mu_k,\nu_k$; that is, if we rewrite $F_{\mu\nu}$ as the pair $\bigl(\begin{smallmatrix}\vE\\ \vB\end{smallmatrix}\bigr)$,
\begin{align}
H^\free_y \begin{pmatrix}\vE\\ \vB\end{pmatrix} 
&= \begin{pmatrix} i \, \boldsymbol{\nabla} \times \vB\\ -i \, \boldsymbol{\nabla} \times  \vE \end{pmatrix} \,.
\end{align}
In addition, $\underset{=}{F}$ satisfies the constraint conditions \eqref{17b} and \eqref{17d} or, in a different notation (where summation over $i=1,2,3$ is understood),
\begin{subequations}\label{constraint1}
\begin{align}
&\partial_{y_k}^i \underset{=}{F}{}^{(m,n)}_{\mu_k=0,\nu_k=i}
=\frac{e_x}{\sqrt{n}}\sum_{j=1}^m \delta^3(\vy_k-\vx_j)\, \underset{=}{F}{}^{(m,n-1)}_{\widehat{\mu_k\nu_k}}(\widehat{y_k}) \label{constraint1a}\\
&\partial_{y_k}^1 \underset{=}{F}{}^{(m,n)}_{\mu_k=2,\nu_k=3}
+\partial_{y_k}^2 \underset{=}{F}{}^{(m,n)}_{\mu_k=3,\nu_k=1}
+\partial_{y_k}^3 \underset{=}{F}{}^{(m,n)}_{\mu_k=1,\nu_k=2} =0\,. \label{constraint1b}
\end{align}
\end{subequations}
In order to close the single-time evolution equation \eqref{1time2}, i.e., in order to express $\Psi$ in terms of $F$, we use the relation \eqref{d-1ex}:
\begin{align}
i\partial_t \underset{=}{F}^{(m,n)} &= \sum_{j=1}^m H^\free_{x_j} \underset{=}{F}^{(m,n)} + \sum_{k=1}^n H^\free_{y_k} \underset{=}{F}^{(m,n)}  \nonumber\\
&~~+ \sum_{j=1}^m e_x\sqrt{n+1} \begin{pmatrix} (-\Delta_{y_{n+1}})^{-1} \boldsymbol{\nabla}_{y_{n+1}} \cdot \\ \valpha_j \cdot (-\Delta_{y_{n+1}})^{-1} \boldsymbol{\nabla}_{y_{n+1}}  \times \end{pmatrix}  \underset{=}{F}^{(m,n+1)}(y_{n+1}=x_j) \nonumber\\
&~~+ \frac{e_x}{\sqrt{n}}\sum_{j=1}^m \sum_{k=1}^n \delta^3(\vy_k-\vx_j)\, \begin{pmatrix} \valpha_j \\ \vzero \end{pmatrix} \, \underset{=}{F}{}^{(m,n-1)}_{\widehat{\mu_k\nu_k}}(\widehat{y_k})\,, \label{1time3}
\end{align}
using the notation $\bigl(\begin{smallmatrix}\vE\\ \vB\end{smallmatrix}\bigr)$ for the $y_{n+1}$-components of $F^{(m,n+1)}$ and the $y_k$ of $F^{(m,n)}$. Together, \eqref{constraint1} and \eqref{1time3} form the single-time equations to compare to those of Landau and Peierls.

\subsection{Landau and Peierls' Constraint Condition}
\label{sec:constraint}

Landau and Peierls used, as the wave function of a single photon, a complex $F_{\mu\nu}$ (equivalently expressed as complex $\vE$ and $\vB$), but imposed the constraint condition \cite[(5) and (22)]{LP30}
\be\label{constraint}
\vB= -i(-\Delta)^{-1/2}\boldsymbol{\nabla} \times \vE\,.
\ee

\begin{prop}\label{prop:constraint}
The relation \eqref{constraint} is true for complex solutions of the source-free ($J_\nu=0$) Maxwell equation with exclusively positive energies, but neither true nor preserved in time if $\boldsymbol{\nabla}\times\vJ\neq 0$.
\end{prop}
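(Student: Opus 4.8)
The plan is to work throughout in the $\vE,\vB$ form \eqref{MaxwellEB} of the Maxwell equation and to treat the nonlocal operator $(-\Delta)^{-1/2}$ as multiplication by $1/|\vk|$ in spatial Fourier space, where $\boldsymbol{\nabla}\mapsto i\vk$. The structural observation I would exploit is that among the equations \eqref{MaxwellEB} the homogeneous one, $\partial_0\vB=-\boldsymbol{\nabla}\times\vE$, holds irrespective of the source, whereas only $\partial_0\vE=\boldsymbol{\nabla}\times\vB-\vJ$ carries $\vJ$. This clean separation is exactly what makes the source term $\vJ$ appear in the obstruction, and not $J_0$.

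For the first assertion I would spatially Fourier transform a source-free solution and impose positive energy, i.e.\ time dependence $e^{-i|\vk|\,x^0}$ on each mode, so that $\partial_0\mapsto -i|\vk|$. The homogeneous equation $\partial_0\vB=-\boldsymbol{\nabla}\times\vE$ then reads $-i|\vk|\,\vB=-i\,\vk\times\vE$, giving $\vB=(\vk\times\vE)/|\vk|$ mode by mode. On the other hand, the right-hand side of \eqref{constraint} becomes $-i\,(1/|\vk|)(i\vk)\times\vE=(\vk\times\vE)/|\vk|$, so the two agree and \eqref{constraint} holds on every mode, hence everywhere. I would then remark that a negative-energy mode ($e^{+i|\vk|\,x^0}$) flips the sign on the left, producing $\vB=-(\vk\times\vE)/|\vk|$ and thus violating \eqref{constraint}; this confirms that \eqref{constraint} is precisely the positive-energy condition.

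For the second assertion I would differentiate \eqref{constraint} in $x^0$ and compare. Since $(-\Delta)^{-1/2}$ commutes with $\boldsymbol{\nabla}$ and $\partial_0$, using $\partial_0\vE=\boldsymbol{\nabla}\times\vB-\vJ$, the identity $\boldsymbol{\nabla}\times(\boldsymbol{\nabla}\times\vB)=-\Delta\vB$ (valid because \eqref{20d} gives $\boldsymbol{\nabla}\cdot\vB=0$), and a re-substitution of \eqref{constraint}, I obtain
\begin{equation}
\partial_0\Bigl[-i(-\Delta)^{-1/2}\boldsymbol{\nabla}\times\vE\Bigr]
=-\boldsymbol{\nabla}\times\vE+i(-\Delta)^{-1/2}\boldsymbol{\nabla}\times\vJ\,.
\end{equation}
But the genuine field obeys $\partial_0\vB=-\boldsymbol{\nabla}\times\vE$. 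Hence, even if \eqref{constraint} holds at one instant, the time derivatives of its two sides differ by exactly $i(-\Delta)^{-1/2}\boldsymbol{\nabla}\times\vJ$, which is nonzero as soon as $\boldsymbol{\nabla}\times\vJ\neq 0$; so \eqref{constraint} is not preserved. Since an identity valid for all times would in particular be preserved, it also cannot hold identically, which is the ``neither true'' part.

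The main obstacle I anticipate is purely the functional-analytic handling of $(-\Delta)^{-1/2}$: its domain, its invertibility, and the $\vk=\vzero$ (constant) mode, on which the multiplier $1/|\vk|$ is singular. I would dispose of this by assuming, as elsewhere in the paper, square-integrability at each time and working on the transverse $\vk\neq\vzero$ subspace, where $(-\Delta)^{-1/2}$ is a well-defined Fourier multiplier commuting with all the differential operators used; the static zero mode does not enter the $\boldsymbol{\nabla}\times\vJ$ obstruction and can be treated separately.
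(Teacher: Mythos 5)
Your proposal is correct and follows essentially the same route as the paper's proof: a spatial Fourier transform, the observation that Faraday's law $\partial_0\vB=-\boldsymbol{\nabla}\times\vE$ forces $\hat\vB=(\vk\times\hat\vE)/|\vk|$ on positive-energy modes, and a comparison of time derivatives showing the discrepancy $i(-\Delta)^{-1/2}\boldsymbol{\nabla}\times\vJ$ (the paper phrases this mode by mode after a parallel/perpendicular split as the condition $\vk\times\vJ_\perp=0$). The only difference is presentational: the paper writes out the general source-free solution explicitly as a superposition of $e^{\mp ikt}$ modes, whereas you read off the constraint and its failure directly from the equations.
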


The proof is given in Appendix~\ref{app:constraint}. It is indeed possible that $\boldsymbol{\nabla}\times\vJ\neq 0$, as $J_\nu$ can be chosen in the Maxwell equation arbitrarily as long as $\partial^\nu J_\nu=0$. We therefore argue that it is not physically reasonable to impose \eqref{constraint}. For Landau and Peierls, it was possible to make some choices when defining their Hamiltonian (within the limits that the Hamiltonian is self-adjoint and the appropriate version of the Maxwell constraints \eqref{20b}, \eqref{20d} are preserved), and this particular choice does not seem physically correct, as it means a change of the evolution of 1-photon wave functions, away from the Maxwell equation. We will therefore henceforth drop \eqref{constraint} in our comparison with Landau and Peierls' equations.

\subsection{Comparison}

Eq.~(26) in their paper \cite{LP30} states the Schr\"odinger equation and corresponds to our \eqref{1time3}.
Their Eq.~(26a) states the analog of the Maxwell constraint $\boldsymbol{\nabla}\cdot \vE = J_0$ and corresponds to our \eqref{constraint1a} or, equivalently, \eqref{17b}. The equations agree up to the differences discussed above.

\section{Consistency}
\label{sec:consistency}

Multi-time equations can be inconsistent (see, e.g., \cite{LPT20}). In order for a set of multi-time equations to be acceptable as part of a physical theory, we need that it defines a multi-time evolution, so we need that it is consistent. For the equations \eqref{LPx}, \eqref{LPy}, \eqref{LPg}, we do not have a consistency proof. However, in this section we provide a discussion of what the issues are and of how a consistency proof might proceed.

\subsection{Known Arguments}
\label{sec:known}

Before we start, it is worth mentioning that Dirac, Fock, and Podolsky \cite{dfp:1932} proposed a multi-time formulation of a simplified version of QED and considered the consistency question for it. A key difference between their equations and the ones considered here is that the photons were not represented as particles but as a field, that is, the wave function was not taken to be a function of photon positions. Correspondingly, the wave function did not involve a separate time variable for each photon, but only $m+1$ time variables, one for each of a fixed number $m$ of electrons and one for the electromagnetic field. The consistency question was answered in the positive, on a heuristic level by Bloch \cite{bloch:1934} and on a rigorous level by Nickel and Deckert \cite{ND:2019,Nic2019} after implementing an ultraviolet cut-off.

A system of multi-time equations more similar to \eqref{LPx} and \eqref{LPy}, modeling $x$-particles emitting and absorbing $y$-particles, but assuming that both the $x$s and the $y$s are Dirac particles, was developed in \cite{pt:2013c}, and heuristic arguments for its consistency were given there. These arguments did not constitute a rigorous proof because the evolution equations were ultraviolet divergent and therefore ill-defined already in their single-time version. However, in \cite{LNT20} the arguments were turned into a rigorous consistency proof of a variant of the equations with ultraviolet cut-off, which underlines that the lack of rigor did not originate from the nature of the arguments in \cite{pt:2013c}. These arguments would apply to \eqref{LPx} and \eqref{LPy} as well if the equations were of first order and propagation local. \emph{Propagation locality} means that wave functions cannot propagate faster than at the speed of light $c$. Now it is well known that the Maxwell equation entails that $F_{\mu\nu}$ propagates no faster than at $c$, but whether this also applies to $A_\mu$ might depend on the gauge condition.

So, the situation seems to be as follows. If a gauge condition effectively provides a first-order equation for (the aspect of $\Psi$ corresponding to) $A_\mu$ and ensures propagation locality, then the arguments of \cite{pt:2013c} presumably entail consistency of the evolution equations. If the evolution is not propagation local, then it seems unlikely that the evolution equations could be consistent. If the evolution equations are effectively of second order, even after imposing a gauge, then further analysis seems necessary.

\subsection{Interaction Locality}

We call a time evolution law (single-time or multi-time) \emph{interaction local} if there are no interaction terms in the time evolution law for the wave function between spacelike separated regions. This seems to be the case for \eqref{LPx}, \eqref{LPy}, \eqref{LPg} and variants of it with different gauge conditions. That is relevant because of the following general pattern:

\begin{conj}
Suppose we are given a Hamiltonian $H$ such that interaction locality and propagation locality hold. Then there exists a unique consistent multi-time evolution on the set of spacelike configurations that satisfies interaction locality and propagation locality and agrees with the evolution defined by $H$ on the horizontal (i.e., simultaneous) configurations in the given Lorentz frame. If $H$ is Poincar\'e invariant then so is the multi-time evolution.
\end{conj}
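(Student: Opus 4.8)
The plan is to construct the multi-time evolution in the Tomonaga--Schwinger (``many-fingered-time'') framework and to extract all four assertions from the local structure of $H$. First I would use interaction locality to express the single-time generator as an integral $H=\int_{\Sigma_0} V(d^3x)\,\mathcal{H}(x)$ of a Hamiltonian density and, more importantly, to define for any two Cauchy surfaces a surface evolution operator $U_{\Sigma'\Sigma}$ on $\Hilbert$ whose infinitesimal generator, when the surface is pushed forward only near a point $x$, is $\mathcal{H}(x)$; interaction locality is exactly what guarantees that this local generator does not involve the field content at points spacelike separated from $x$. For the horizontal foliation $\Sigma_t=\{x^0=t\}$ one then has $U_{\Sigma_t\Sigma_0}=e^{-iHt}$, so agreement with the $H$-evolution on simultaneous configurations is automatic. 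The multi-time wave function is defined by
\be
\Psi(q):=\scp{q}{\Psi_\Sigma},\qquad \Psi_\Sigma:=U_{\Sigma\Sigma_0}\,\Psi_{\Sigma_0},
\ee
for any Cauchy surface $\Sigma$ containing the finite spacelike configuration $q=(x_1...x_m,y_1...y_n)$; such a $\Sigma$ exists precisely because $q\in\sS_{xy}$.

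The heart of the matter is integrability (consistency). I would show that interaction locality together with propagation locality yields the microcausality relation $[\mathcal{H}(x),\mathcal{H}(x')]=0$ whenever $x\spacelike x'$: the free part of the density satisfies this because of finite propagation speed, and the interaction part because of interaction locality. Since any two distinct points on a common Cauchy surface are spacelike separated, this is exactly the Tomonaga--Schwinger integrability condition, which forces $U_{\Sigma'\Sigma}$ to be independent of the interpolating foliation and to obey the cocycle identity $U_{\Sigma''\Sigma'}\,U_{\Sigma'\Sigma}=U_{\Sigma''\Sigma}$. The family $\{\Psi_\Sigma\}$ is then a single consistent object.

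It remains to turn this surface family into a genuine function on $\sS_{xy}$ and to prove uniqueness. For well-definedness I would show that $\Psi(q)$ does not depend on the choice of $\Sigma\supset q$: given two Cauchy surfaces through the finite point set $q$, a geometric lemma provides a continuous family $\Sigma_\lambda$ of Cauchy surfaces, all containing $q$, interpolating between them, so that each infinitesimal deformation is supported in a region $R$ spacelike separated from $q$. By propagation locality the value at $q$ depends only on initial data in the causal past of $q$, which $R$ does not meet, and by microcausality the generator $\mathcal{H}(x)$ for $x\in R$ commutes with the field operators that build $q$; hence $\scp{q}{\Psi_{\Sigma_\lambda}}$ is constant in $\lambda$. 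The same interpolation argument yields uniqueness: any consistent, interaction- and propagation-local evolution that agrees with $H$ on horizontal configurations must have a surface evolution generated by the same local density (determined by localizing $H$), so it coincides with the constructed $\Psi$. Finally, if $H$ is Poincar\'e invariant then $\mathcal{H}(x)$ transforms as a density and $U_{\Sigma'\Sigma}$ intertwines the unitary representation of the Poincar\'e group with its geometric action on Cauchy surfaces, making $\Psi$ Poincar\'e covariant.

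I expect the main obstacle to be making the surface evolution and its integrability rigorous rather than merely formal. For the ultraviolet-divergent interactions of the type in \eqref{LPx}--\eqref{LPy}, the density $\mathcal{H}(x)$ and the operators $U_{\Sigma'\Sigma}$ are a priori ill-defined, the object $[\mathcal{H}(x),\mathcal{H}(x')]$ is a distribution whose behaviour at coincident points (possible Schwinger terms) must be controlled, and the Tomonaga--Schwinger equation is only formally integrable; this is presumably why the statement is a conjecture and why the cited rigorous results (\cite{LNT20,ND:2019,Nic2019}) proceed via an ultraviolet cut-off. A secondary, more routine difficulty is the geometric interpolation lemma through a fixed finite point set, together with the fact that for the second-order-in-time photon sectors governed by \eqref{LPy} one must verify that field data together with their normal derivatives propagate locally so that the surface picture still closes.
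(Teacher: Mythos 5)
First, note that the paper does not prove this statement: it is explicitly labelled a Conjecture, the authors say only that they ``believe'' it to be correct, and Appendix~\ref{app:conj} merely supplies the definitions (propagation locality, interaction locality, hypersurface evolution, Poincar\'e invariance) needed to turn it into a precise mathematical claim. So there is no proof in the paper to compare yours against; a complete argument here would be a new result, not a reconstruction.

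Your Tomonaga--Schwinger sketch is the natural strategy and is broadly consistent with the framework the authors point to (hypersurface evolutions in the sense of \cite{LT:2017}, and the cut-off consistency proofs of \cite{LNT20,ND:2019}), but as a proof it has gaps beyond the ultraviolet problem you flag. (i) The very first step---writing $H=\int_{\Sigma_0}V(d^3x)\,\mathcal{H}(x)$---is not implied by the paper's definitions: interaction locality is stated there as a factorization property $e^{-iHt}\psi=[U_1(t)\otimes U_2(t)]\psi$ of the finite-time evolution on spatially separated supports, and extracting from this a local density, let alone one satisfying $[\mathcal{H}(x),\mathcal{H}(x')]=0$ at spacelike separation, is itself a substantial unproven claim; already for free fields the energy density fails to commute with itself at spacelike separation (Schwinger terms), and the Tomonaga--Schwinger integrability condition concerns the \emph{interaction} density in the interaction picture rather than the full density, which your argument conflates. (ii) Your well-definedness and uniqueness arguments presuppose that the configuration PVMs on different Cauchy surfaces through $q$ are compatible, so that $\scp{q}{\Psi_{\Sigma_\lambda}}$ is a single quantity varying continuously in $\lambda$; that compatibility is part of what must be constructed, not assumed. (iii) For uniqueness you assert that any admissible multi-time evolution ``must have a surface evolution generated by the same local density (determined by localizing $H$)''; but $H$ does not determine a density uniquely (one can add total divergences), so this step needs an independent argument that the multi-time evolution depends only on $H$ and the two locality properties---which is essentially the content of the conjecture itself and cannot be invoked. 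In short: a reasonable roadmap, but each of its main steps is itself open, which is precisely why the authors state this as a conjecture rather than a proposition.
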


This conjecture can be formulated as a precise mathematical statement. We will not provide such a formulation here but indicate briefly in Appendix~\ref{app:conj} how to obtain it. 

We believe that Conjecture 1 is correct. Of course, it does not directly apply to our case because (i)~it is not clear what the position PVM for photons should be; and (ii)~as in Section~\ref{sec:known}, it is not clear which gauge condition would effectively provide first-order evolution equations and ensure propagation locality, as assumed in Conjecture 1. It thus remains an open problem to find such a gauge condition.

\section{Interior-Boundary Condition}

The use of Dirac delta functions in the creation and annihilation terms of the time evolution equations often leads to ultraviolet divergence. In non-relativistic theories with particle creation and annihilation, this divergence could be tamed through the use of conditions on the wave function at configurations at which an $x$-particle and a $y$-particle meet, so-called \emph{interior-boundary conditions} (IBCs) \cite{TT15a,TT15b}. In this paper, we do not aim to address the ultraviolet divergence of the evolution equations \eqref{LPx}, \eqref{LPy}, \eqref{LPg}, and it is not clear whether this divergence can be ameliorated by using IBCs, but it may be worth pointing out that the equations \eqref{LPx}, \eqref{LPy}, \eqref{LPg} naturally lead to an IBC.

It is well known that the 8 Maxwell equations comprise 6 time evolution equations and 2 constraints,
\be
\boldsymbol{\nabla} \cdot  \vB=0~~\text{and}~~\boldsymbol{\nabla} \cdot  \vE=J_0\,.
\ee
For $J_0(\vy) = e_x \sum_{j=1}^m \delta^3(\vy-\vx_j)$, we obtain for small $r>0$ that
\be
e_x
=\int_{B_r(\vx_j)} \hspace{-5mm} d^3\vy\, \boldsymbol{\nabla} \cdot \vE(\vy)
= r^2\int_{\SSS^2} \hspace{-1mm} d^2\vomega\: \vomega\cdot \vE(\vx_j+r\vomega)
\ee
by the Ostrogradski--Gauss integral theorem. The same consideration applied to \eqref{LPy}, followed by taking $r\to 0$, leads to an IBC:
\begin{multline}
\lim_{r\to 0} \int_{\SSS^2}\!\! d^2\vomega \, r^2 \sum_{i=1}^3 \omega^i \partial_{y_{n+1},[i}^{~}\Psi_{\mu_{n+1}=0]}^{(m,n+1)}(x_1...x_m,y_1...y_n,(x_j^0,\vx_j+r\vomega)) =\\ 
\frac{e_x}{\sqrt{n}}\, \Psi^{(m,n)}(x_1...x_m,y_1...y_n).\label{IBC}
\end{multline}
Now the idea of the IBC method is to take care of the $\delta$ functions in the evolution equations for $\Psi$ by imposing \eqref{IBC} as a condition on $\Psi$. We finally mention that IBCs were studied in relativistic space-time in \cite{HT20,HPT24}.

\section{Conclusions}

We have presented Lorentz-invariant multi-time Schr\"odinger equations governing the evolution of the wave function for a model involving the emission and absorption of photons by electrons. When only simultaneous configurations (relative to a fixed Lorentz frame) are considered, the equations reduce to those proposed by Landau and Peierls \cite{LP30}. The multi-time formulation makes the Lorentz invariance and gauge freedom manifest and leads to simpler equations; in particular, while the Hamiltonian of Landau and Peierls contains pseudo-differential operators, the multi-time equations presented here do not---they are just PDEs. We have outlined several ways of arriving at the multi-time equations and verified that they have the expected properties of gauge invariance and preservation of the scalar product. For future work, it would be of interest to give a mathematical proof of their consistency, and to extend them to electron-positron pair creation and to other gauge theories.

It is a striking feature of the new equations that the photon absorption term plays the role of the connection coefficient of the covariant derivative in a vector bundle. Put differently, the equations in a way unify the concepts of connection coefficient and absorption term. In particular, the case of an external electromagnetic field is included as a photon wave function that is not entangled with the electrons.

The multi-time equations provide an unusual formulation of QED, a particle-position representation of QED. We believe that it is desirable to have several formulations of a theory available. It would be of interest to see how far the particle-position approach can be taken, and whether the hypothesis is viable that quantum field theory is essentially a relativistic kind of quantum mechanics with particle creation and annihilation. 

An intriguing aspect, in our humble opinion, of this approach is what it suggests about the nature of the wave function. It is often emphasized that the quantum state can be represented mathematically in many ways (such as the Schr\"odinger, Heisenberg, interaction, or Tomonaga-Schwinger picture), possibly very abstractly as just a vector in some abstract Hilbert space, if our goal is to formulate an algorithm for computing empirical predictions. On the other hand, if we allow ourselves to think of the wave function itself as a physical object, then it is natural to ask what kind of object it is. In view of the equations discussed here, it seems like a serious and natural possibility that the wave function is actually a function on $\sS_{xy}$.

\appendix

\section{Construction of 3d Dirac delta Distribution}
\label{app:delta}

Here are three arguments to the effect that there exists a distribution $D$ characterized by \eqref{characterizeD}.

\begin{enumerate}
\item We define at first a distribution $\tilde D$ on the larger set $\MMM^4$ and then obtain $D$ as the restriction of $\tilde D$ to $\sS_0^4$. We set
\be\label{tildeDdef}
\tilde D^{\rho}(x^0,\vx) = \bigl(\delta^3(\vx),0,0,0\bigr)
\ee
in a chosen Lorentz frame; $\tilde D$ is not Lorentz invariant. It can be integrated over any smooth, spacelike Cauchy surface $\Sigma$ in $\MMM^4$ against a smooth function $f:\Sigma\to \CCC$ to yield (regarding $\Sigma$ as the graph of the function $x^0=h(\vx)$)
\begin{multline}\label{tildeD}
\int_\Sigma V(d^3x) \, n_\rho(x) \, \tilde D^\rho(x) \, f(x) ~~\stackrel{x=(h(\vx),\vx)}{=} \\
\int_{\RRR^3} \! d^3\vx \, \sqrt{-\det {}^3g} \: n_0(\vx)\: \delta^3(\vx) \: f(h(\vx),\vx) = f(h(\vzero),\vzero)
\end{multline}
because the push-forward of the standard basis vectors in $\RRR^3$ under $(h,\mathrm{id}):\RRR^3\to\Sigma$ is $v_1=(\partial_1 h, 1,0,0)$ etc., so
\be
{}^3g_{ij} = v_i^\mu v_j^\nu g_{\mu\nu}  = 
\begin{cases}
(\partial_i h)^2-1 &\text{if }i=j\\
(\partial_i h)(\partial_j h) &\text{if } i\neq j,
\end{cases}
\ee
which has determinant
\be
\det {}^3g = (\partial_1h)^2+(\partial_2h)^2+(\partial_3h)^2-1\,,
\ee
while 
\be
n^\mu = \frac{1}{\sqrt{1-|\nabla h|^2}}(1,\partial_1h,\partial_2h,\partial_3h)
\ee
(as that is a future-timelike unit vector orthogonal to $v_1,v_2,v_3$); as a consequence,
\be
\sqrt{-\det {}^3g} \: n_0(\vx) = 1.
\ee
To restrict $\tilde D$ to $\sS_0^4$ means to consider only surfaces $\Sigma$ passing through $0\in\MMM^4$, so $h(\vzero)=0$, and \eqref{tildeD} yields \eqref{characterizeD}, which proves the existence of a distribution $D$ with the property \eqref{characterizeD}.

\item The following alternative argument also starts from $\tilde D$ as in \eqref{tildeDdef} but uses distributions $\tilde D'$ on $\MMM^4$ obtained by applying a Lorentz transformation $\Lambda$ to $\tilde D$. It also shows that $\tilde D$ and $\tilde D'$ agree on $\sS_0^4$; thus, $D$ could also be defined as the restriction of $\tilde D'$ for any $\Lambda$ to $\sS_0^4$.

If $\Lambda$ maps $(1,0,0,0)$ to the (future-timelike unit) vector $u^\mu=(u^0,\vu)$, then it maps $\tilde D$ to $\tilde D'$ given by
\be
\tilde D^{\prime\rho}(x^0,\vx) = \delta^3(\vx-\tfrac{x^0}{u^0}\vu)\: \tfrac{1}{u^0} \: u^\rho\,.
\ee
As a consequence, for $\Sigma$ passing through 0, $\int_\Sigma \tilde D \: f$ can be evaluated as follows. Let $\Lambda$ be a Lorentz transformation that (keeps 0 fixed and) maps $n^\mu(0)$ to $(1,0,0,0)$; thus, $\Lambda$ maps $\Sigma$ to a surface $\Sigma'$ that is tangent at 0 to the horizontal 3-plane $P_0$ through 0, and $f$ to $f'=f\circ \Lambda$, so
\be
\int_\Sigma \tilde D \: f = \int_{\Sigma'} \tilde D' \: f'\,.
\ee
Suppose that $f$ is defined, not only on $\Sigma$, but on all of $\sS_0^4$. Since $\tilde D'$ vanishes on $\Sigma'$ away from 0, we can replace $\Sigma'$ by $P_0$, i.e.,
\be
\int_{\Sigma'} \tilde D' \: f' = \int_{P_0} \tilde D' \: f'\,.
\ee
But on $P_0$, it is straightforward to evaluate the integral, yielding
\be\label{Dprimef}
\int_{P_0} \tilde D' \: f' =\tfrac{u^0}{u^0}\, f\circ \Lambda(0) =   f(0) \,.
\ee
This value is independent of how $f$ was extended from $\Sigma$ to $\sS_0^4$, as it should be. It follows that $D$, defined as the restriction of $\tilde D$ to $\sS_0^4$, obeys \eqref{characterizeD}.

Moreover, since also for $\tilde D''$ obtained from $\tilde D$ through any other Lorentz transformation,
\be
\int_{P_0} \tilde D'' \: f' =\tfrac{(u'')^0}{(u'')^0}\, f\circ \Lambda(0) =   f(0) \,,
\ee
i.e., \eqref{Dprimef} with $\tilde D'$ replaced by $\tilde D''$ yields the same value, any Lorentz transform of $\tilde D$ will yield the same value for fixed $f$ and $\Sigma$ through 0, so all Lorentz transforms of $\tilde D$ coincide on $\sS_0^4$.

\item The propagator (or Green's function or time evolution kernel) $S$ of the free Dirac equation \cite[p.~15]{thaller:1992} is a distribution on $\MMM^4$ with values in the complex $4\times 4$ matrices (or endomorphisms of spin space) that can be defined through the property that the solution $\psi:\MMM^4\to\CCC^4$ of the free Dirac equation with initial data $\psi_0:\RRR^3\to\CCC^4$ at $x^0=0$ has the form
\be\label{psiSpsi}
\psi(x) = \int_{\RRR^3} \! d^3\vy \: S(x-(0,\vy)) \:\gamma^0 \: \psi_{0}(\vy)\,.
\ee
(Note that the $S$ in \cite{thaller:1992} differs from our $S$ by a factor of $\gamma^0$.) Equivalently, the $s'$-th column of the matrix $S$ is the solution of the Dirac equation with initial condition given by the $s'$-th column of $\gamma^0$ times the 3d Dirac delta,
\be\label{Sdef}
S_s^{~s'}(0,\vx)= \gamma_s^{0s'} \: \delta^3(\vx)\,.
\ee
It is known that $S$ is Lorentz invariant and vanishes at all $x$ with $x\spacelike 0$. 

Now consider $\int_\Sigma S \, \gamma^\mu f$ for $\Sigma$ through 0 and $f:\Sigma \to\CCC$. Let $\Lambda$ be again a Lorentz transformation that (keeps 0 fixed and) maps $n_\mu(0)$ to $(1,0,0,0)$ and $f'=f\circ \Lambda$. 
Then, since $S$ is Lorentz invariant and $\gamma^\mu$ is Lorentz invariant,
\be
\int_\Sigma S\,\gamma^\mu f = \int_{\Sigma'} S \,\gamma^\mu\, f' \,.
\ee
Since $S$ vanishes at $x\spacelike 0$, we can replace $\Sigma'$ by $P_0$,
\be
\int_{\Sigma'} S\,\gamma^\mu \, f' = \int_{P_0} S\,\gamma^\mu \, f'\,.
\ee
But on $P_0$, we can evaluate the integral using \eqref{Sdef} to obtain, with $I$ the identity endomorphism,
\be
\int_{P_0} S\, \gamma^\mu \, f' = \int_{\RRR^3}\!\! d^3\vx \: \gamma^0  \gamma^0  \, \delta^3(\vx) \, f'(0,\vx) = I \: f\circ \Lambda(0) = I\: f(0)\,.
\ee
That is, the Lorentz invariant distribution $S \gamma^\mu$ yields, for every $f:\Sigma\to\CCC$, $If(0)$. Since $I$ is Lorentz invariant, it follows that there is a Lorentz invariant distribution $D^\mu$ such that $S \gamma^\mu = I \, D^\mu$, and $D^\mu$ satisfies \eqref{characterizeD}.
\end{enumerate}

\section{Proof of Proposition~\ref{prop:gaugeds}: Gauge Invariance}
\label{app:gaugeds}

\begin{proof}
We repeat the definition \eqref{Psitilde}:
\begin{align}
\widetilde{\Psi}^{(m,n)}
&=\Psi^{(m,n)}\nonumber\\[2mm]
&\quad -\frac{1}{e_x}\sum_{k=1}^n  \partial_{y_k,\mu_k}\Theta_k \Psi^{(m,n)}~ds \nonumber\\
&\quad +i\sqrt{n+1} \sum_{i=1}^m \Bigl( \Theta_{n+1} \Psi^{(m,n+1)} \Bigr)(y_{n+1}=x_i)~ds\,, \label{Psitilde2}
\end{align}
where we have renamed the electron index $j$ into $i$ for later convenience (not to be confused with the imaginary unit), used the notation $\Theta_k$ for the $\Theta$ operator acting on $y_k$, and mentioned variables only when necessary.

To check \eqref{LPx} for $\widetilde\Psi$, we consider the left-hand side of \eqref{LPx} for $\widetilde\Psi$, insert \eqref{Psitilde2}, and calculate, using that operations acting on different variables commute:
\begin{subequations}
\begin{align}
&(i\gamma_j^\mu \partial_{x_j,\mu}-m_x) \widetilde{\Psi}^{(m,n)}
= (i\gamma_j^\mu \partial_{x_j,\mu}-m_x) \Psi^{(m,n)} \nonumber\\[2mm]
&\quad - \frac{1}{e_x} \sum_{k=1}^n \partial_{y_k,\mu_k}\Theta_k (i\gamma_j^\mu\partial_{x_j,\mu}-m_x) \Psi^{(m,n)}\, ds\nonumber\\
&\quad +i\sqrt{n+1} (i\gamma_j^\mu\partial_{x_j,\mu}-m_x) \sum_{i\neq j} \Bigl[\Theta_{n+1}\Psi^{(m,n+1)} \Bigr](y_{n+1}=x_i) \, ds\nonumber\\
&\quad +i\sqrt{n+1}(i\gamma_j^\mu\partial_{x_j,\mu}-m_x)\Bigl(\Bigl[\Theta_{n+1}\Psi^{(m,n+1)}\Bigr](y_{n+1}=x_j)\Bigr) \, ds\\
\intertext{[now use \eqref{LPx} for $\Psi$]}
&= e_x\sqrt{n+1}\gamma_j^{\mu_{n+1}}\Psi^{(m,n+1)}(y_{n+1}=x_j) \nonumber\\[2mm]
&\quad -\sqrt{n+1}\gamma_j^{\mu_{n+1}} \sum_{k=1}^n \Bigl[\partial_{y_k,\mu_k} \Theta_k \Psi^{(m,n+1)} \Bigr](y_{n+1}=x_j) \, ds\nonumber\\
&\quad +ie_x\sqrt{(n+1)(n+2)}\sum_{i\neq j} \Bigl(\Theta_{n+1}\gamma_j^{\mu_{n+2}}\Psi^{(m,n+2)}_{\mu_{n+2}}(y_{n+2}=x_j)\Bigr)(y_{n+1}=x_i) \, ds \nonumber\\
&\quad +ie_x\sqrt{(n+1)(n+2)}\Bigl(\Theta_{n+1}\gamma_j^{\mu_{n+2}}\Psi^{(m,n+2)}_{\mu_{n+2}}(y_{n+2}=x_j)\Bigr)(y_{n+1}=x_j) \, ds \nonumber\\[2mm]
&\quad -\sqrt{n+1} \gamma_j^{\mu_{n+1}} \Bigl(\partial_{y_{n+1},\mu_{n+1}} \Theta_{n+1} \Psi^{(m,n+1)} \Bigr)(y_{n+1}=x_j) \, ds\\
\intertext{[now use bosonic symmetry and reorder terms]}
&= e_x\sqrt{n+1}\gamma_j^{\mu_{n+1}}\Psi^{(m,n+1)}(y_{n+1}=x_j) \nonumber\\[2mm]
&\quad -\sqrt{n+1}\gamma_j^{\mu_{n+1}} \sum_{k=1}^{n+1} \Bigl[\partial_{y_k,\mu_k} \Theta_k \Psi^{(m,n+1)} \Bigr](y_{n+1}=x_j) \, ds\nonumber\\
&\quad +ie_x\sqrt{(n+1)(n+2)}\sum_{i=1}^m \Bigl(\Theta_{n+2}\gamma_j^{\mu_{n+1}}\Psi^{(m,n+2)}_{\mu_{n+1}}\Bigr)(y_{n+1}=x_j,y_{n+2}=x_i) \, ds \\
\intertext{[now use \eqref{Psitilde2} backward]}
&=e_x \sqrt{n+1} \gamma_j^{\mu_{n+1}} \widetilde{\Psi}^{(m,n+1)}(y_{n+1}=x_j)
\end{align}
\end{subequations}
as claimed.

Condition~\eqref{LPg} is satisfied by construction. In order to check \eqref{LPy} for $\widetilde{\Psi}$, it suffices to check \eqref{LPyF} for $\widetilde{F}$ given by \eqref{Fds}. Starting from the left-hand side of \eqref{LPyF} for $\widetilde{F}$ and using \eqref{Fds},
\begin{subequations}
\begin{align}
&\partial_{y_k}^{\mu_k} \widetilde{F}^{(m,n)}_{\mu_k\nu_k}
=\partial_{y_k}^{\mu_k} F^{(m,n)}_{\mu_k\nu_k}\nonumber\\[2mm]
&\quad +i\sqrt{n+1}\sum_{j=1}^m \Bigl( \Theta_{n+1} d_{n+1}^{-1} \partial_{y_k}^{\mu_k} F^{(m,n+1)}_{\mu_k\nu_k} \Bigr)(y_{n+1}=x_j) \, ds\\
\intertext{[now use \eqref{LPyF} for $F$]}
&= \frac{e_x}{\sqrt{n}} \sum_{j=1}^m \delta^3_\mu(y_k-x_j)\, \gamma_j^\mu \, \gamma_{j\nu_k} \, F^{(m,n-1)}_{\widehat{\mu_k\nu_k}}(\widehat{y_k}) \nonumber\\
&\quad +ie_x\sum_{i,j=1}^m \Bigl( \Theta_{n+1} d_{n+1}^{-1} \delta^3_\mu(y_k-x_i) \, \gamma_i^\mu \, \gamma_{i\nu_k} F^{(m,n)}_{\mu_k\nu_k}(\widehat{y_k},y_{n+1}) \Bigr)(y_{n+1}=x_j) \, ds\\
\intertext{[now use \eqref{Fds} backward]}
&= \frac{e_x}{\sqrt{n}} \sum_{j=1}^m \delta^3_\mu(y_k-x_j) \, \gamma_j^\mu \, \gamma_{j\nu_k} \widetilde{F}^{(m,n-1)}_{\widehat{\mu_k\nu_k}}(\widehat{y_k})
\end{align}
\end{subequations}
as claimed.
\end{proof}

\section{Proof of Proposition~\ref{prop:gauge}: Path Independence of Gauge Transform}
\label{app:pathindep}

\noindent{\it Proof.}
Abstractly, the rule \eqref{Psitilde} defines a connection (parallel transport) in the bundle with fibers $\sA$ over the space $\sG$ of all linear gauge conditions. Path independence is equivalent to the vanishing of the curvature of this connection. Equivalently, we need to show for an infinitesimal parallelogram in $\sG$ (see Figure~\ref{fig:parallelogram}) with side length of order $ds$ that parallel transport along the path $\sA_1\sA_2\sA_4$ agrees up to second order in $ds$ with parallel transport along $\sA_1\sA_3\sA_4$.

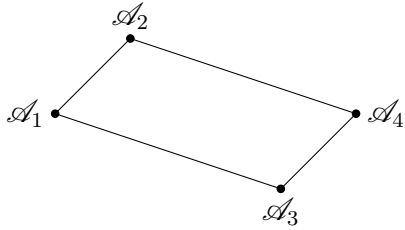
\begin{figure}[h]
\begin{center}
\begin{tikzpicture}
  \filldraw (0,0) circle [radius=0.05];
  \filldraw (1,1) circle [radius=0.05];
  \filldraw (3,-1) circle [radius=0.05];
  \filldraw (4,0) circle [radius=0.05];
  \draw (0,0) -- (1,1);
  \draw (1,1) -- (4,0);
  \draw (0,0) -- (3,-1);
  \draw (3,-1) -- (4,0);
  \node at (-0.4,0) {$\sA_1$};
  \node at (1,1.3) {$\sA_2$};
  \node at (3,-1.3) {$\sA_3$};
  \node at (4.4,0) {$\sA_4$};
\end{tikzpicture}
\end{center}
\caption{Infinitesimal parallelogram in the space $\sG$ of all linear gauge conditions}
	\label{fig:parallelogram}
\end{figure}

\subsection{Classical Maxwell Fields}

To this end, we note first that for the classical Maxwell field, the parallel transport of the field from $\sA_i$ to $\sA_j$ is indeed path-independent because it is given by $d_j^{-1}d_i$. Likewise, the corresponding transformation \eqref{psitildepsi} of $\psi$ is path-independent up to a global (i.e., $x$-independent) phase factor, which can be asssumed to be 1 without loss of generality (thereby narrowing the choice of $C$).

We now derive some formulas from this path-independence. Let $A_1=A_{1\mu}(x)$ be any element of $\sA_1$, let $A_i$ denote the result of transporting to $\sA_i$, let $v_{ij}ds$ be the vector in $\sG$ from $\sA_j$ to $\sA_i$, and let $\Theta_{ij}$ be the $\Theta$ operator for transport from $\sA_j$ to $\sA_i$, so
\be\label{A2A1first}
A_2 = A_1 -\tfrac{1}{e_x} \partial (\Theta_{21} A_1)ds + O(ds^2)\,,
\ee
where $\partial$ means the gradient of a function on space-time.
However, we need to keep track of terms up to second order. To this end, we regard $\Theta_{21}$ as a special case of a mapping that can be applied to any tangent vector $v$ to $\sG$ at $\sA_1$ and yields the transport operator along $v\, ds$, i.e.,
\be
\Theta_{i1} = \Theta v_{i1} \,.
\ee
Next we can ask how $\Theta$ changes as we move the starting point of the transport away from $\sA_1$; let $D\Theta$ be the derivative of $\Theta$ at $\sA_1$ and $D_{j1}:= v_{j1}\cdot D$ the directional derivative in the direction of $v_{j1}$. Then (see Figure~\ref{fig:parallelogram})
\begin{subequations}\label{Thetaijds}
\begin{align}
\Theta_{43} &= \Theta_{21}+ D_{31}\Theta_{21} ds+O(ds^2)\\
\Theta_{42} &= \Theta_{31}+ D_{21}\Theta_{31} ds+O(ds^2)\,.
\end{align}
\end{subequations}
The second-order extension of the expansion \eqref{A2A1first} then reads
\be\label{A2A1second}
A_2 = A_1 -\tfrac{1}{e_x} \partial \Theta_{21} A_1ds + \tfrac{1}{2e_x^2} (\partial \Theta_{21} \circ \partial\Theta_{21}-e_x \partial D_{21} \Theta_{21}) A_1 ds^2+O(ds^3)\,.
\ee
(Note that $\partial$ is the operator that takes the derivative in space-time, not to be confused with $D$, the derivative in $\sG$.) This follows from expanding the transport along $v_{21}ds$, given by $\sT \exp\bigl(-\tfrac{1}{e_x}\int_0^1\Theta_{21}(\sA_1+v_{21}t\, ds)dt\bigr)$ with $\sT \exp$ the time-ordered exponential, into a Dyson series up to second order in $ds$.

The relation \eqref{A2A1second} remains true when replacing $A_2$ by $A_3$ and $\Theta_{21}$ by $\Theta_{31}$. Now $A_4$ can be obtained from either $A_2$ or $A_3$, and we know it is path independent, so, on the one hand,
\begin{align}
A_4 &= A_1 -\tfrac{1}{e_x}\partial(\Theta_{42}+\Theta_{21})A_1 \, ds \nonumber\\
&\quad + \tfrac{1}{2e_x^2}\partial\Bigl(\Theta_{42}\circ \partial\Theta_{42}-e_x  D_{42}\Theta_{42} \nonumber\\
&~~~~~~~~~~~ + \Theta_{21}\circ \partial \Theta_{21} - e_x D_{21} \Theta_{21} \nonumber\\
&~~~~~~~~~~~ + 2\Theta_{42}\circ\partial \Theta_{21}\Bigr) A_1 \, ds^2 + O(ds^3)
\end{align} 
and, on the other hand, the same relation holds after replacing $42\to 43$ and $21\to 31$.

Taking the difference and noting that two functions on space-time with equal gradient differ only by addition of a constant, we obtain the relation
\begin{align}
&-\tfrac{1}{e_x}(\Theta_{42}+\Theta_{21}) ds \nonumber\\
&\quad + \tfrac{1}{2e_x^2}\Bigl(\Theta_{42}\circ \partial\Theta_{42}-e_x  D_{42}\Theta_{42} \nonumber\\
&~~~~~~~~~ + \Theta_{21}\circ \partial \Theta_{21} - e_x D_{21} \Theta_{21} \nonumber\\
&~~~~~~~~~ + 2\Theta_{42}\circ\partial \Theta_{21}\Bigr) ds^2 \nonumber\\
&= \text{(the same with $42\to43$ and $21\to31$)} + C'ds + O(ds^3)\,, 
\label{basicindep}
\end{align} 
where $C'$ is a linear operator $\sA_1\to \CCC$ (where $\CCC$ is thought of as the constant functions on $\MMM^4$) that vanishes under the above assumption that the transformation \eqref{psitildepsi} is path independent.

\subsection{Wave Functions}

We now use the formulas \eqref{Thetaijds} and \eqref{basicindep} to derive the path independence of the transformation $\widetilde\Psi$ of $\Psi$. Along the path $\sA_1\sA_2\sA_4$, we obtain that, up to $O(ds^3)$,
\begin{subequations}
\begin{align}
\Psi_4^{(m,n)}
&=\Psi_1^{(m,n)} \label{a}\\
&\quad-\tfrac{1}{e_x} \sum_k^n \partial_{\mu_k} \Theta_{21,k} \Psi_1^{(m,n)}ds \label{b}\\
&\quad+i\sqrt{n+1}\sum_j^m \Theta_{21,n+1} \Psi_1^{(m,n+1)}(y_{n+1}=x_j)ds \label{c}\\
&\quad+\tfrac{1}{2e_x^2} \sum_{k\neq k'}^n \partial_{\mu_k} \Theta_{21,k} \partial_{\mu_{k'}} \Theta_{21,k'} \Psi_1^{(m,n)}ds^2 \label{d}\\
&\quad+\tfrac{1}{2e_x^2} \sum_k^n (\partial_{\mu_k} \Theta_{21,k} \circ \partial\Theta_{21,k}-e_x\partial D_{21}\Theta_{21,k}) \Psi_1^{(m,n)}ds^2 \label{e}\\
&\quad-\tfrac{1}{2}\sqrt{(n+1)(n+2)} \sum_{j, j'}^m \Theta_{21,n+1}\Theta_{21,n+2}\Psi_1^{(m,n+2)}(y_{n+1}=x_j,y_{n+2}=x_{j'})ds^2 \label{f}\\
&\quad-\tfrac{i}{e_x}\sqrt{n+1} \sum_j^m \sum_k^n \partial_{\mu_k} \Theta_{21,k}\Theta_{21,n+1}\Psi_1^{(m,n+1)}(y_{n+1}=x_j)ds^2 \label{g}\\
&\quad-\tfrac{i}{2e_x}\sqrt{n+1} \sum_j^m \Bigr( (\Theta_{21,n+1}\circ \partial\Theta_{21,n+1} -e_x D_{21}\Theta_{21,n+1}) \Psi_1^{(m,n+1)}\Bigr) (y_{n+1}=x_j)ds^2\label{h}\\
\intertext{+ the same terms with 42 instead of 21}
&\quad + \tfrac{1}{e_x^2} \sum_{k\neq k'}^n \partial_{\mu_k}\Theta_{42,k} \partial_{\mu_{k'}}\Theta_{21,k'} \Psi_1^{(m,n)}ds^2 \label{i}\\
&\quad + \tfrac{1}{e_x^2} \sum_k^n \partial_{\mu_k} \Theta_{42,k}\circ \partial \Theta_{21,k} \Psi_1^{(m,n)}ds^2 \label{j}\\
&\quad-\tfrac{i}{e_x}\sqrt{n+1}\sum_j^m\sum_k^n \Bigl( \Theta_{42,n+1} \partial_{\mu_k} \Theta_{21,k} \Psi_1^{(m,n+1)} \Bigr) (y_{n+1}=x_j)ds^2 \label{k}\\
&\quad-\tfrac{i}{e_x}\sqrt{n+1}\sum_j^m \Bigl( \Theta_{42,n+1} \circ \partial \Theta_{21,n+1} \Psi_1^{(m,n+1)} \Bigr) (y_{n+1}=x_j)ds^2 \label{l}\\
&\quad-\tfrac{i}{e_x} \sqrt{n+1} \sum_j^m \sum_k^n \Bigl( \partial_{\mu_k} \Theta_{42,k} \Theta_{21,n+1} \Psi_1^{(m,n+1)}\Bigr) (y_{n+1}=x_j)ds^2 \label{m}\\
&\quad-\sqrt{(n+1)(n+2)}\sum_{j,j'}^m \Bigl( \Theta_{42,n+1} \Theta_{21,n+2} \Psi_1^{(m,n+2)} \Bigr)(y_{n+1}=x_j,y_{n+2}=x_{j'})ds^2. \label{n}
\end{align}
\end{subequations}
We now show that the index 2 can be replaced everywhere by 3: in \eqref{b} + \eqref{e} + \eqref{j} + the mirror terms ($21\leftrightarrow 42$) of \eqref{b} and \eqref{e} because of \eqref{basicindep}; likewise in \eqref{c} + \eqref{h} + \eqref{l} + the mirror terms of \eqref{c} and \eqref{h} because of \eqref{basicindep}; 
in \eqref{d} and \eqref{g}, each together with their mirror terms because of \eqref{Thetaijds}, as the correction terms caused be the replacement ($2\to3$) are of order $ds$ or higher and thus lead to a contribution $O(ds^3)$; 
in \eqref{i} and \eqref{f}, each together with their mirror terms, and in \eqref{k} and \eqref{m} together because of \eqref{Thetaijds}, using that the two $\Theta$'s act on different variables; in \eqref{n} because of \eqref{Thetaijds}, bosonic symmetry, and again that the two $\Theta$'s act on different variables.
\qed

\section{Proof of Proposition~\ref{prop:gauge1}: Affine Gauge Transform}
\label{app:gauge1}

\begin{proof}
We repeat the definition \eqref{affinegds}:
\begin{align}
\breve{\Psi}^{(m,n)} &=\Psi^{(m,n)} \nonumber\\[2mm]
&\quad + i\sum_{i=1}^m \theta(x_i) \, \Psi^{(m,n)}  \, ds\nonumber\\
&\quad -\frac{1}{e_x\sqrt{n}} \sum_{\ell=1}^n \partial_{\mu_\ell} \theta(y_\ell) \, \Psi^{(m,n-1)}_{\widehat{\mu_\ell}}(\widehat{y_\ell}) \, ds\,,
\label{affinegds2}
\end{align}
where we have renamed $j$ into $i$ and $k$ into $\ell$ for later convenience.

To check \eqref{LPx} for $\breve{\Psi}$, we start from the left-hand side of \eqref{LPx} for $\breve{\Psi}$ and insert the definition \eqref{affinegds2}:
\begin{subequations}
\begin{align}
&(i\gamma_j^\mu \partial_{x_j,\mu}-m_x) \breve{\Psi}^{(m,n)}
= (i\gamma_j^\mu \partial_{x_j,\mu}-m_x) \Psi^{(m,n)} \nonumber\\[2mm]
&\quad + i\sum_{i\neq j} \theta(x_i) \, (i\gamma_j^\mu \partial_{x_j,\mu}-m_x)\Psi^{(m,n)}  \, ds\nonumber\\
&\quad - \gamma_j^\mu \bigl(\partial_{\mu}\theta(x_j)\bigr) \, \Psi^{(m,n)}  \, ds\nonumber\\[2mm]
&\quad +i \theta(x_j) \, (i\gamma_j^\mu \partial_{x_j,\mu}-m_x)\Psi^{(m,n)}  \, ds\nonumber\\
&\quad -\frac{1}{e_x\sqrt{n}} \sum_{\ell=1}^n \partial_{\mu_\ell} \theta(y_\ell) \, (i\gamma_j^\mu \partial_{x_j,\mu}-m_x) \Psi^{(m,n-1)}_{\widehat{\mu_\ell}}(\widehat{y_\ell}) \, ds\\
\intertext{[now use \eqref{LPx} for $\Psi$]}
&= e_x\sqrt{n+1} \gamma_j^{\mu_{n+1}} \Psi^{(m,n+1)}_{\mu_{n+1}}(y_{n+1}=x_j) \nonumber\\[2mm]
&\quad + i\sum_{i\neq j} \theta(x_i) \, e_x\sqrt{n+1} \gamma_j^{\mu_{n+1}} \Psi^{(m,n+1)}_{\mu_{n+1}}(y_{n+1}=x_j)  \, ds\nonumber\\
&\quad - \gamma_j^\mu \bigl(\partial_{\mu}\theta\bigr)(x_j) \, \Psi^{(m,n)}  \, ds\nonumber\\[2mm]
&\quad +i \theta(x_j) \, e_x\sqrt{n+1} \gamma_j^{\mu_{n+1}} \Psi^{(m,n+1)}_{\mu_{n+1}}(y_{n+1}=x_j)  \, ds\nonumber\\
&\quad - \sum_{\ell=1}^n \partial_{\mu_\ell} \theta(y_\ell) \, \gamma_j^{\mu_{n+1}} \Psi^{(m,n)}_{\widehat{\mu_\ell},\mu_{n+1}}(\widehat{y_\ell},y_{n+1}=x_j) \, ds\\
\intertext{[now use \eqref{affinegds2} backward]}
&= e_x \sqrt{n+1} \gamma_j^{\mu_{n+1}} \, \breve{\Psi}^{(m,n+1)}(y_{n+1}=x_j)
\end{align}
\end{subequations}
as claimed.

To check \eqref{LPy} for $\breve{\Psi}$,
\begin{subequations}
\begin{align}
&2\partial_{y_k}^\mu\partial^{~}_{y_k,[\mu}\breve{\Psi}^{(m,n)}_{\mu_k=\nu]}
=2\partial_{y_k}^\mu\partial^{~}_{y_k,[\mu} \Psi^{(m,n)}_{\mu_k=\nu]} \nonumber\\[2mm]
&\quad + i\sum_{i=1}^m \theta(x_i) \, 2\partial_{y_k}^\mu\partial^{~}_{y_k,[\mu}\Psi^{(m,n)}_{\mu_k=\nu]}  \, ds\nonumber\\
&\quad -\frac{1}{e_x\sqrt{n}} \sum_{\ell=1,\ell\neq k}^n \partial_{\mu_\ell} \theta(y_\ell) \ 2\partial_{y_k}^\mu\partial^{~}_{y_k,[\mu}\Psi^{(m,n-1)}_{\mu_k=\nu], \widehat{\mu_\ell}}(\widehat{y_\ell}) \, ds \nonumber\\
&\quad - \frac{1}{e_x\sqrt{n}} \Bigl[ 2\partial_{y_k}^\mu \underbrace{\partial^{~}_{y_k,[\mu} \partial_{y_k,\nu]}}_{=0} \theta(y_k)  \Bigr] \Psi^{(m,n-1)}_{\widehat{\mu_k}}(\widehat{y_k}) \, ds\\
\intertext{[now use \eqref{LPy} for $\Psi$]}
&=\frac{e_x}{\sqrt{n}}\sum_{j=1}^m \delta^3_{\mu}(y_k-x_j) \: \gamma_j^\mu \gamma_{j\nu} \:\Psi^{(m,n-1)}_{\widehat{\mu_k}}(\widehat{y_k}) \nonumber\\[2mm]
&\quad + i\sum_{i=1}^m \theta(x_i) \, \frac{e_x}{\sqrt{n}}\sum_{j=1}^m \delta^3_{\mu}(y_k-x_j) \: \gamma_j^\mu \gamma_{j\nu} \:\Psi^{(m,n-1)}_{\widehat{\mu_k}}(\widehat{y_k})  \, ds\nonumber\\
&\quad -\frac{1}{\sqrt{n(n-1)}} \sum_{\ell=1,\ell\neq k}^n \partial_{\mu_\ell} \theta(y_\ell) \sum_{j=1}^m \delta^3_{\mu}(y_k-x_j) \: \gamma_j^\mu \gamma_{j\nu} \:\Psi^{(m,n-2)}_{\widehat{\mu_k\mu_\ell}}(\widehat{y_k,y_\ell}) \, ds\\
\intertext{[now use \eqref{affinegds2} backward]}
&=\frac{e_x}{\sqrt{n}} \sum_{j=1}^m \delta^3_\mu(y_k-x_j) \, \gamma_j^\mu \, \gamma_{j\nu} \: \breve{\Psi}^{(m,n-1)}_{\widehat{\mu_k}}(\widehat{y_k})
\end{align}
\end{subequations}
as claimed.

The statement about the fermionic and bosonic permutation symmetries can be verfied directly from the definition \eqref{affinegds}.
\end{proof}

\section{Proof of Proposition~\ref{prop:op}: Field Operators}
\label{app:op}

\begin{proof}
In order to derive \eqref{LPx}, apply $(i\gamma^\mu_j\partial_{x_j,\mu}-m_x)$ to \eqref{Psiop} to obtain
\begin{subequations}
\begin{align}
&\sqrt{m!n!}(i\gamma^\mu_j\partial_{x_j,\mu}-m_x)\Psi^{(m,n)}=\nonumber\\[2mm]
& \sum_{s'_j}
\Bigl\langle \emptyset \Big| \hat\psi_{s_1}(x_1)\cdots \Bigl[(i\gamma^{\mu s'_j}_{s_j}\partial_{x_j,\mu}-m_x)
\hat\psi_{s'_j}(x_j)\Bigr] \cdots \hat\psi_{s_m}(x_m) \hat A_{\mu_1}(y_1)\cdots \hat A_{\mu_n}(y_n) \Big|\varphi\Bigr\rangle\\
&\stackrel{\eqref{psiop}}{=} \sum_{s'_j}
\Bigl\langle \emptyset \Big| \hat\psi_{s_1}(x_1)\cdots 
\Bigl[e_x \gamma^{\mu s'_j}_{s_j} \hat A_\mu(x_j) \,
\hat\psi_{s'_j}(x_j)\Bigr] \cdots \hat\psi_{s_m}(x_m) \hat A_{\mu_1}(y_1)\cdots \hat A_{\mu_n}(y_n) \Big|\varphi\Bigr\rangle\\
&\hspace{-4mm}\stackrel{\eqref{commutepsiA},\eqref{commuteA}}{=} e_x\sum_{s'_j}\gamma^{\mu s'_j}_{s_j}
\Bigl\langle \emptyset \Big| \hat\psi_{s_1}(x_1)\cdots
\hat\psi_{s'_j}(x_j) \cdots \hat\psi_{s_m}(x_m) \hat A_{\mu_1}(y_1)\cdots \hat A_{\mu_n}(y_n) \, \hat A_\mu(x_j) \Big|\varphi\Bigr\rangle\\
&\stackrel{\eqref{Psiop}}{=}e_x\sum_{s'_j}\gamma^{\mu s'_j}_{s_j}\sqrt{m!(n+1)!} \Psi^{(m,n+1)}_{\mu_{n+1}=\mu}(x_1...x_m,y_1...y_n,x_j)\,.
\end{align}
\end{subequations}

Likewise, in order to derive \eqref{LPy}, evaluate the left-hand side of \eqref{LPy} using \eqref{Psiop} to obtain
\begin{subequations}
\begin{align}
&\sqrt{m!n!}2\partial_{y_k}^\mu\partial^{~}_{y_k,[\mu}\Psi^{(m,n)}_{\mu_k=\nu]}=\nonumber\\
&2\partial_{y_k}^\mu\partial^{~}_{y_k,[\mu}
\Bigl\langle \emptyset \Big| \hat\psi_{s_1}(x_1)\cdots \hat\psi_{s_m}(x_m) \hat A_{\mu_1}(y_1)\cdots \hat A_{\mu_k=\nu]}(y_k) \cdots \hat A_{\mu_n}(y_n) \Big|\varphi\Bigr\rangle\\
&\stackrel{\eqref{Aop}}{=} e_x
\Bigl\langle \emptyset \Big| \hat\psi_{s_1}(x_1)\cdots \hat\psi_{s_m}(x_m) \hat A_{\mu_1}(y_1)\cdots 
\overline{\hat\psi(y_k)} \gamma_\nu \hat\psi(y_k)
\cdots \hat A_{\mu_n}(y_n) \Big|\varphi\Bigr\rangle\\
&\stackrel{\eqref{commutepsiA}}{=} e_x
\Bigl\langle \emptyset \Big| \hat\psi_{s_1}(x_1)\cdots \hat\psi_{s_m}(x_m) \overline{\hat\psi(y_k)} \gamma_\nu \hat\psi(y_k)
\hat A_{\mu_1}(y_1)\cdots \widehat{\hat A(y_k)} \cdots
\hat A_{\mu_n}(y_n) \Big|\varphi\Bigr\rangle\,,\label{LPyfromop1}
\end{align}
where $\widehat{\hat A(y_k)}$ means that the factor $\hat A(y_k)$ is omitted.
Since, by \eqref{commutepsi} and \eqref{commutepsibar},
\be
\Bigl[\hat\psi_s(x) ,\overline{\hat\psi(y_k)} \gamma_\nu 
\hat\psi(y_k) \Bigr] = \sum_{s',s''}\gamma_s^{\mu s'} \, \delta^3_\mu(y_k-x) \, \gamma_{\nu s'}^{~~s''} \hat\psi_{s''}(y_k)\,,
\ee
we can transform the last expression \eqref{LPyfromop1} further to
\begin{align}
&e_x
\Bigl\langle \emptyset \Big| \overline{\hat\psi(y_k)} \gamma_\nu \hat\psi(y_k)\hat\psi_{s_1}(x_1)\cdots \hat\psi_{s_m}(x_m) 
\hat A_{\mu_1}(y_1)\cdots \widehat{\hat A_{\mu_k}(y_k)} \cdots
\hat A_{\mu_n}(y_n) \Big|\varphi\Bigr\rangle \nonumber\\
&+ \sum_{j=1}^m e_x  \, \delta^3_\mu(y_k-x_j) \, \gamma_j^\mu\gamma_{j\nu} 
\Bigl\langle \emptyset \Big| \hat\psi_{s_1}(x_1)\cdots\hat\psi_{s_m}(x_m) 
\hat A_{\mu_1}(y_1)\cdots \widehat{\hat A_{\mu_k}(y_k)} \cdots
\hat A_{\mu_n}(y_n) \Big|\varphi\Bigr\rangle  \,.\label{LPyfromop2}
\end{align}
By \eqref{psibardef} and \eqref{psiop0},
\be
\langle\emptyset| \overline{\hat\psi(y)}{}^s
= \sum_{s'}\langle\emptyset|\hat\psi_{s'}^\dagger(y) \, \gamma_{s'}^{0s}
= \sum_{s'}\langle\hat\psi_{s'}(y)\emptyset| \gamma_{s'}^{0s} =0\,,
\ee
so the first term in \eqref{LPyfromop2} vanishes, and the second term yields, by \eqref{Psiop},
\be
\sum_{j=1}^m e_x  \, \delta^3_\mu(y_k-x_j) \, \gamma_j^\mu\gamma_{j\nu} \sqrt{m!(n-1)!} \: \Psi^{(m,n-1)}_{\widehat{\mu_k}}(x_1...x_m,y_1...y_{k-1},y_{k+1}...y_n)\,,
\ee
\end{subequations}
in agreement with \eqref{LPy}.

Finally, in order to derive \eqref{LPgex}, apply $\sum_{\mu_k=1}^3 \partial_{y_k}^{\mu_k}$ to \eqref{Psiop} to obtain
\begin{multline}
\sqrt{m!n!}\sum_{\mu_k=1}^3 \partial_{y_k}^{\mu_k}\Psi^{(m,n)}_{s_1...s_m,\mu_1...\mu_n}(x_1...x_m,y_1...y_n)=\\
\sum_{\mu_k=1}^3 \partial_{y_k}^{\mu_k}\Bigl\langle \emptyset \Big| \hat\psi_{s_1}(x_1)\cdots \hat\psi_{s_m}(x_m) \hat A_{\mu_1}(y_1)\cdots \hat A_{\mu_n}(y_n) \Big|\varphi\Bigr\rangle
\stackrel{\eqref{gop}}{=}0\,.
\end{multline}
This completes the proof of Proposition~\ref{prop:op}.
\end{proof}

\section{Scalar Product of 1-Photon Wave Functions}
\label{app:GuptaBleuler}

The scalar product of photon wave functions, perhaps first described by Landau and Peierls \cite{LP30} in the Coulomb gauge in a fixed Lorentz frame, but also well known in a covariant form applicable to any gauge consistent with the Lorenz gauge condition, is generalized here further to arbitrary gauge and arbitrary Cauchy surfaces $\Sigma$. This general form, given already in \eqref{scp1y}, reads:
\begin{align}
\scp{A}{B}_{1y\Sigma} 
&= -i \int_\Sigma V(d^3x) \: n^\nu(x) ~\Bigl[ A_\mu^* \partial^{~}_\nu B^\mu-A_\nu^* \partial^{~}_\mu B^\mu - (\partial^{~}_\nu A_\mu^*) B^\mu+(\partial^\mu A_\mu^*)B^{~}_\nu \Bigr]\,. \label{scp1yc}
\end{align}
We will describe why it yields the known forms in the appropriate special cases. But first we note its properties.

\subsection{Properties}

\begin{prop}\label{prop:GuptaBleuler}
Suppose that $A,B$ are complex solutions to the source-free Maxwell equation\footnote{For making this Proposition rigorous, we would also have to assume that $A$ and $B$ have compact support on every Cauchy surface or at least decay sufficiently fast at infinity.}
\be\label{freeMaxwell}
\partial^\mu (\partial_\mu A_\nu - \partial_\nu A_\mu)=0
\ee
The scalar product \eqref{scp1yc} is not positive definite, but 
(i)~it is sesqui-linear and Hermitian,
\be
\scp{B}{A}_\Sigma = \scp{A}{B}_\Sigma^*\,;
\ee
(ii)~it is independent of $\Sigma$, i.e., for any two Cauchy surfaces $\Sigma,\Sigma'$,
\be
\scp{A}{B}_{\Sigma'} = \scp{A}{B}_\Sigma\,;
\ee
and (iii)~it is invariant under gauge transformations $A_\mu \to A_\mu - \tfrac{1}{e_x}\partial_\mu \theta$, $B_\mu \to B_\mu-\tfrac{1}{e_x}\partial_\mu \zeta$.
\end{prop}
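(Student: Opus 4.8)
The plan is to read the integrand of \eqref{scp1yc} as the flux through $\Sigma$ of a bilinear current
\[
K_\nu = A_\mu^* \partial_\nu B^\mu - A_\nu^* \partial_\mu B^\mu - (\partial_\nu A_\mu^*) B^\mu + (\partial^\mu A_\mu^*) B_\nu \,,
\]
so that $\scp{A}{B}_\Sigma = -i\int_\Sigma V(d^3x)\,n^\nu K_\nu$. Each property then becomes a statement about $K_\nu$: (i) reduces to a pointwise identity, (ii) to the conservation law $\partial^\nu K_\nu=0$, and (iii) to the vanishing of the flux of the pure-gauge current. For (i), sesqui-linearity is immediate, since $K_\nu$ is complex-linear in $B$ and its derivatives while $A$ enters only through $A^*$. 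For Hermiticity I would form the integrand of $\scp{A}{B}_\Sigma^*$, namely $+i\,n^\nu\overline{K_\nu}$, and compare it termwise with the integrand $-i\,n^\nu K'_\nu$ of $\scp{B}{A}_\Sigma$ (the current with $A$ and $B$ interchanged). Using only the raising/lowering symmetries $A_\mu B^\mu = A^\mu B_\mu$ and $\partial_\mu V^\mu = \partial^\mu V_\mu$, the four terms pair off to give $\overline{K_\nu}+K'_\nu=0$ pointwise, whence $\scp{A}{B}_\Sigma^* = \scp{B}{A}_\Sigma$. Note this needs neither the field equations nor $\Sigma$-independence.

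For (ii) I would establish $\partial^\nu K_\nu = 0$ and then apply the four-dimensional divergence theorem to the spacetime region between $\Sigma$ and $\Sigma'$, using the decay (or compact-support) assumption of the footnote to discard the boundary at spatial infinity and conclude equality of the fluxes. Differentiating $K_\nu$, the cross terms $(\partial^\nu A_\mu^*)(\partial_\nu B^\mu)$ cancel at once. For the remainder I would insert the source-free Maxwell equation in the form $\partial^\lambda\partial_\lambda A_\nu = \partial_\nu(\partial^\mu A_\mu)$ (and its complex conjugate, and the analogue for $B$); writing $\phi_A^* = \partial^\mu A_\mu^*$ and $\phi_B = \partial_\mu B^\mu$, the surviving terms collect into the three vanishing pairs $A_\mu^*\partial^\mu\phi_B - A_\nu^*\partial^\nu\phi_B$, $-(\partial_\mu\phi_A^*)B^\mu + (\partial^\nu\phi_A^*)B_\nu$, and $-\phi_A^*\phi_B + \phi_A^*\phi_B$, giving $\partial^\nu K_\nu = 0$.

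For (iii), by sesqui-linearity it suffices to treat a pure-gauge shift in each argument, and by the Hermiticity of (i) the two arguments are equivalent, so I would only show $\scp{A}{\partial\zeta}_\Sigma = 0$. Substituting $B_\mu = \partial_\mu\zeta$ into $K_\nu$ and invoking the Maxwell equation for $A$ in the form $\partial^\mu F^A_{\mu\nu}=0$, with $F^A_{\mu\nu}=\partial_\mu A_\nu - \partial_\nu A_\mu$, I expect to rewrite the current as a pure divergence of an antisymmetric tensor, $K_\nu = \partial^\mu W_{\mu\nu}$ with
\[
W_{\mu\nu} = A_\mu^*\partial_\nu\zeta - A_\nu^*\partial_\mu\zeta + \zeta\,F^{A*}_{\mu\nu} \,.
\]
Because $W_{\mu\nu}=-W_{\nu\mu}$, the flux collapses: using (ii) to choose a horizontal surface $\Sigma=\{x^0=\text{const}\}$, so that $n^\nu=\delta^\nu_0$, the integrand reduces to $\partial^i W_{i0}$ (the $\mu=0$ contribution dropping by antisymmetry), a pure spatial divergence whose integral is a surface term at $|\vx|\to\infty$ and vanishes under the decay hypothesis.

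The algebra in (i) and (ii) is essentially bookkeeping, so the step I expect to be the main obstacle is isolating the antisymmetric structure in (iii): after extracting $\partial^\mu(A_\mu^*\partial_\nu\zeta - A_\nu^*\partial_\mu\zeta)$ from $K_\nu$, the non-divergence remainder is exactly $g^{\mu\lambda}F^{A*}_{\lambda\nu}\partial_\mu\zeta$, and one must recognize that the Maxwell equation $\partial^\mu F^A_{\mu\nu}=0$ converts this into $\partial^\mu(\zeta F^{A*}_{\mu\nu})$, which is again antisymmetric and hence completes $W_{\mu\nu}$. The only other delicate point is the control of the boundary terms at spatial infinity in both (ii) and (iii), which is exactly what the decay assumption in the footnote is there to guarantee.
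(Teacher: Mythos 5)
Your proposal is correct. Parts (i) and (ii) follow essentially the same route as the paper: the paper also dismisses (i) as ``easily verified'' (your termwise cancellation $\overline{K_\nu}+K'_\nu=0$ is exactly what that verification amounts to), and for (ii) the paper likewise computes $\partial^\nu K_\nu$, cancels the cross terms, recognizes the remainder as $A_\mu^*\,\partial_\nu(\partial^\nu B^\mu-\partial^\mu B^\nu)-\partial^\nu(\partial_\nu A_\mu^*-\partial_\mu A_\nu^*)B^\mu$, and applies the Ostrogradski--Gauss theorem to the layer between $\Sigma$ and $\Sigma'$. For (iii) your organization is genuinely different in a useful way: the paper reduces to $\Sigma=\{t=0\}$ (as you do) and then performs a component computation, integrating by parts twice in $\vx$ to reach $\int d^3\vx\,(-\partial^i F_{i0}^*)\theta$ and invoking $\partial^\mu F_{\mu 0}^*=0$ together with $F_{00}^*=0$; you instead exhibit the pure-gauge current covariantly as $K_\nu=\partial^\mu W_{\mu\nu}$ with the antisymmetric superpotential $W_{\mu\nu}=A_\mu^*\partial_\nu\zeta-A_\nu^*\partial_\mu\zeta+\zeta F^{A*}_{\mu\nu}$ (the Maxwell equation entering only to absorb the remainder $F^{A*}_{\mu\nu}\partial^\mu\zeta$ into $\partial^\mu(\zeta F^{A*}_{\mu\nu})$). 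The inputs are identical --- Maxwell for $A$ only, plus decay at infinity --- but your version makes the vanishing of the gauge flux structurally transparent and would in principle let you bypass the reduction to a flat surface by integrating the 2-form dual to $W$ over $\partial\Sigma$ directly, whereas the paper's computation is more elementary but frame-dependent. Both arguments are sound.
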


\begin{proof}
(i) is easily verified. Concerning (ii), it suffices to consider $\Sigma'$ in the future of $\Sigma$ (or else consider a third Cauchy surface $\Sigma''$ in the future of both). We begin by computing the 4-divergence of the square bracket in \eqref{scp1yc}; the underlined terms cancel:
\begin{subequations}\label{4divGuptaBleuler}
\begin{align}
\partial^\nu[\cdots]_\nu
&=\underline{\partial^\nu A_\mu^* \partial_\nu B^\mu} + A_\mu^* \partial^\nu\partial_\nu B^\mu-\underline{\underline{\partial^\nu A_\nu^* \partial_\mu B^\mu}} -A_\nu^* \partial^\nu\partial_\mu B^\mu \nonumber\\
&- \partial^\nu\partial_\nu A_\mu^* B^\mu-\underline{\partial_\nu A_\mu^* \partial^\nu B^\mu}+\partial^\nu \partial^\mu A_\mu^* B_\nu+\underline{\underline{\partial^\mu A_\mu^* \partial^\nu B_\nu}}\\
&=A_\mu^* \partial^\nu\partial_\nu B^\mu-A_\nu^* \partial^\nu\partial_\mu B^\mu
- \partial^\nu\partial_\nu A_\mu^* B^\mu+\partial^\nu \partial^\mu A_\mu^* B_\nu\\
&=A_\mu^* \partial^\nu\partial_\nu B^\mu-A_\mu^* \partial^\mu\partial_\nu B^\nu
- \partial^\nu\partial_\nu A_\mu^* B^\mu+\partial_\mu \partial^\nu A_\nu^* B^\mu\\
&=A_\mu^* \underbrace{\partial_\nu(\partial^\nu B^\mu- \partial^\mu B^\nu)}_{=0 \text{ by \eqref{freeMaxwell}}}
- \underbrace{\partial^\nu(\partial_\nu A_\mu^* -\partial_\mu A_\nu^*)}_{=0 \text{ by \eqref{freeMaxwell}}} B^\mu=0.
\end{align}
\end{subequations}
According to a version of the Ostrogradski-Gauss integral theorem, for a vector field $j^\nu(x)$ and a 4d region $V$ with boundary $\partial V$,
\be\label{Gauss}
\int_{\partial V} V(d^3x) \, n^\nu(x) \, j_\nu(x) =
\int_V d^4x \, \partial^\nu j_\nu(x) 
\ee
with $n^\nu$ the outward unit normal vector. For $\Sigma'$ in the future of $\Sigma$, it follows that
\be\label{Gauss2}
\int_{\Sigma'}V(d^3x) \, n^\nu(x) \, j_\nu(x)
-\int_{\Sigma}V(d^3x) \, n^\nu(x) \, j_\nu(x)
=\int_L d^4x \, \partial^\nu j_\nu(x)
\ee
with $n^\nu$ the future unit normal vector on either $\Sigma'$ or $\Sigma$ (depending on which surface is being considered) and $L$ the layer between $\Sigma$ and $\Sigma'$. Then, taking for $j_\nu$ the square bracket in \eqref{scp1yc}, \eqref{4divGuptaBleuler} yields that $\scp{A}{B}_\Sigma$ is independent of $\Sigma$.  

(iii) Gauge invariance: It suffices to show that $\scp{A}{\partial \theta}_\Sigma=0$ for any scalar function $\theta$ (that has compact support on every Cauchy surface or decays sufficiently fast at infinity together with its gradient). By $\Sigma$-independence,
\begin{subequations}
\begin{align}
	i\scp{A}{\partial \theta}_\Sigma 
	&=  \int_\Sigma V(d^3x)\, n^\nu(x) \,\Bigl[ A_\mu^* \partial_\nu \partial^\mu \theta -A_\nu^* \partial_\mu \partial^\mu \theta - (\partial_\nu A_\mu^*) \partial^\mu \theta + (\partial^\mu A_\mu^*)\partial_\nu \theta \Bigr]\\
	&=  \int_{t=0} \!\! d^3\vx ~\Bigl[ A_\mu^* \partial_0 \partial^\mu \theta -A_0^* \partial_\mu \partial^\mu \theta - (\partial_0 A_\mu^*) \partial^\mu \theta + (\partial^\mu A_\mu^*)\partial_0 \theta \Bigr]\\
	&=  \int_{t=0} \!\! d^3\vx ~\Bigl[ \underline{A_i^* \partial_0 \partial^i \theta} -A_0^* \partial_i \partial^i \theta - (\partial_0 A_i^*) \partial^i \theta + \underline{(\partial^i A_i^*)\partial_0 \theta} \Bigr]\\
	&=  \int_{t=0} \!\! d^3\vx ~ (\partial_i A_0^* - \partial_0 A_i^*) \partial^i \theta \\
	&=  \int_{t=0} \!\! d^3\vx ~ F_{i0}^*\, \partial^i \theta \\
	&=  \int_{t=0} \!\! d^3\vx ~ (-\partial^i F_{i0}^*)\, \theta \\
	&=  \int_{t=0} \!\! d^3\vx ~ \Bigl(\underbrace{\partial^\mu F_{\mu 0}^*}_{\text{=0 by \eqref{freeMaxwell}}}-\partial^0 \underbrace{F_{00}^*}_{=0} \Bigr)\, \theta \\
	&=0,
\end{align}
\end{subequations}
where Latin indices $i$ run from 1 to 3 while Greek indices $\mu$ run from 0 to 3.
\end{proof}

\subsection{Special Cases}

We describe some special cases of \eqref{scp1yc}. If we assume the Lorenz gauge condition
\be\label{Lorenz1}
\partial^\mu A_\mu=0
\ee
of both $A$ and $B$, then \eqref{scp1yc} simplifies to
\be
\scp{A}{B}_{1y\Sigma} 
= -i \int_\Sigma V(d^3x) \: n^\nu(x) ~\Bigl[ A_\mu^* \partial^{~}_\nu B^\mu- (\partial^{~}_\nu A_\mu^*) B^\mu \Bigr]\,. \label{scp1yd}
\ee
If we set $\Sigma=\{t=0\}$ in a given Lorentz frame, this simplifies further to
\be
\scp{A}{B}_{1y0} 
= -i \int_{t=0} \!\! d^3\vx \: \Bigl[ A_\mu^* \partial^{~}_0 B^\mu- (\partial^{~}_0 A_\mu^*) B^\mu \Bigr]\,. \label{scp1ye}
\ee
Together with the Lorenz condition \eqref{Lorenz1}, the source-free Maxwell equation \eqref{freeMaxwell} implies that $\square A_\mu=0$, so the 4d Fourier transform $\hat A_\mu(k)$ of $A_\mu(x)$ is concentrated on the light cone $k^\nu k_\nu=0$. If we allow only positive energy $k^0$, then the latter is determined according to 
\be\label{positive1}
k^0 = |\vk|\,,
\ee
and we can express $\hat A_\mu(k)=\hat A_\mu(|\vk|,\vk)$ as a function $\hat A_\mu(\vk)$ of $\vk$. Then, $A_\mu(t=0,\vx)$ is the 3d Fourier transform of $\hat A_\mu(\vk)$, $A_\mu(t=T,\vx)$ that of $e^{-iT|\vk|} \hat A_\mu(\vk)$, and $\partial_0 A_\mu(t=0,\vx)$ that of $-i|\vk| \hat A_\mu(\vk)$. Since 3d Fourier transformation is unitary relative to the non-relativistic scalar product $\int d^3\vx \, \psi^*(\vx)\, \phi(\vx)$, we obtain from \eqref{scp1ye} that\footnote{Note that $\hat A^*$ is the conjugate of the Fourier transform, not the Fourier transform of the conjugate.}
\begin{subequations}
\begin{align}
\scp{A}{B}_{1y0} 
&= -i \int_{\RRR^3} d^3\vk \: \Bigl[ \hat A_\mu^*(\vk)\, (-i)|\vk| \hat B^\mu(\vk)
- i|\vk| A_\mu^*(\vk)\, \hat B^\mu(\vk) \Bigr]\\
&= -2\int_{\RRR^3} d^3\vk \: \hat A_\mu^*(\vk)\, |\vk| \, \hat B^\mu(\vk)
\,. \label{scp1yf}
\end{align}
\end{subequations}
This expression is positive semidefinite; indeed, the Lorenz condition \eqref{Lorenz1} requires that $\hat A_\mu(k)\, k^\mu=0$, so $\hat A_\mu$ is either spacelike or a multiple of $k_\mu$, with the consequence that either $\hat A_\mu \hat A^\mu <0$ or $\hat A_\mu \hat A^\mu =0$; thus, since $|\vk|\geq 0$, $\scp{A}{A}_{1y0}\geq 0$. 

Moreover, this reasoning shows that $\scp{A}{A}_{1y0}=0$ only if $\hat A_\mu \hat A^\mu =0$ for almost all $\vk$, which implies that $\hat A_\mu$ is a multiple of $k_\mu$, and thus a gradient of some scalar function $\theta$. Therefore, on the quotient space of $A_\mu$'s modulo pure gauge fields, $\scp{\cdot}{\cdot}_{1y0}$ is positive definite and thus defines an inner product in the standard sense. This is the inner product used in connection with Gupta-Bleuler quantization \cite{GB}. 

\begin{prop}
Assuming the Lorenz condition \eqref{Lorenz1} and positive energy \eqref{positive1}, \eqref{scp1yf} can be expressed in terms of $F_{\mu\nu}$ instead of $A_\mu$ as 
\be
\scp{A}{B}_{1y0} 
= \int_{\RRR^3} \frac{d^3\vk}{|\vk|} \sum_{\mu,\nu=0}^3 \hat F_{\mu\nu}^*(\vk) \, \hat G_{\mu\nu}(\vk)
\,, \label{scp1yg}
\ee
where $G_{\mu\nu}$ is obtained from $B_\mu$ in the same way as $F_{\mu\nu}$ from $A_\mu$ and the hat means again 3d Fourier transformation. 
\end{prop}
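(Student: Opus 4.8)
The plan is to pass to the 3d Fourier transform on $\Sigma=\{t=0\}$, where all derivatives become algebraic, so that the whole statement reduces to a pointwise identity in $\vk$. The hypotheses \eqref{Lorenz1} and \eqref{positive1} enter only through one dictionary entry: on the positive-energy mass shell $k^0=|\vk|$ the box operator annihilates $A_\mu$ (as already noted in Appendix~\ref{app:GuptaBleuler}, $\square A_\mu=0$), so evaluating $F_{\mu\nu}(0,\cdot)=\partial_\mu A_\nu-\partial_\nu A_\mu$ at $t=0$ amounts to replacing $\partial_\mu$ by $-ik_\mu$ with the null vector $k^\mu=(|\vk|,\vk)$, $k_\mu k^\mu=0$. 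Hence $\hat F_{\mu\nu}(\vk)=-i\bigl(k_\mu\hat A_\nu-k_\nu\hat A_\mu\bigr)$ and likewise $\hat G_{\mu\nu}=-i(k_\mu\hat B_\nu-k_\nu\hat B_\mu)$. Everything after this is bookkeeping.

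I would then expand the component sum $\sum_{\mu,\nu=0}^3\hat F_{\mu\nu}^*\hat G_{\mu\nu}$ prescribed by \eqref{scp1yg}. Multiplying out the two antisymmetric brackets produces four terms that regroup into $2\bigl(\sum_\mu k_\mu^2\bigr)\bigl(\sum_\nu\hat A_\nu^*\hat B_\nu\bigr)-2\bigl(\sum_\mu k_\mu\hat A_\mu^*\bigr)\bigl(\sum_\nu k_\nu\hat B_\nu\bigr)$, where all sums are the plain (Euclidean) sums over $0,\dots,3$ dictated by \eqref{scp1yg}. Two elementary facts close the reduction. First, $\sum_\mu k_\mu^2=k_0^2+\sum_j k_j^2=2|\vk|^2$. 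Second, the Lorenz condition \eqref{Lorenz1} reads $k^\mu\hat A_\mu=0$ in Fourier space, which makes the longitudinal combination collapse, $\sum_\mu k_\mu\hat A_\mu=2|\vk|\hat A_0$ (and similarly for $\hat B$). Substituting these, the integrand becomes proportional to $|\vk|^2\bigl(\sum_j\hat A_j^*\hat B_j-\hat A_0^*\hat B_0\bigr)=-|\vk|^2\,\hat A_\mu^*\hat B^\mu$; dividing by $|\vk|$ and integrating reproduces exactly the $A_\mu$-form \eqref{scp1yf}.

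As an independent cross-check I would split the same sum into its time-space part $2\,\hat{\vE}^*\!\cdot\hat{\vE}{}'$ (using $E^i=F_{0i}$) and its space-space part $2\,\hat{\vB}^*\!\cdot\hat{\vB}{}'$ (the identification of the $F_{ij}$ with $\vB$), and then invoke the positive-energy relation of Proposition~\ref{prop:constraint}, which in Fourier form is $\hat{\vB}=\hat{\vk}\times\hat{\vE}$, together with the source-free constraint $\vk\cdot\hat{\vE}=0$, to conclude $\hat{\vB}^*\!\cdot\hat{\vB}{}'=\hat{\vE}^*\!\cdot\hat{\vE}{}'$; the two contributions coincide and give the same answer. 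I expect the one genuinely delicate point to be the accounting of numerical factors rather than anything structural: one must keep straight (i)~that \eqref{scp1yg} uses the \emph{Euclidean} component sum and not the Lorentz contraction $\hat F_{\mu\nu}^*\hat G^{\mu\nu}$—indeed the latter vanishes identically once $k$ is null and \eqref{Lorenz1} holds, so only the sign flips introduced by index-raising survive—and (ii)~the double counting of unordered index pairs in the unrestricted sum $\sum_{\mu,\nu}$. Pinning down how these factors combine to match the constant in \eqref{scp1yf} is the only step requiring care; the remainder is the Fourier dictionary already set up in Appendix~\ref{app:GuptaBleuler}.
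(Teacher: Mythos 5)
Your route is sound and genuinely different from the paper's. The paper proves \eqref{scp1yg} by first treating the diagonal case $B=A$: it splits the double sum into the $\hat F_{0i}$ (electric) and $\hat F_{ij}$ (magnetic) blocks, shows each block reduces to $|\vk|^2|P_\vk^\perp\hat\vA|^2$ using the Lorenz condition in the form $\hat A^0=\vk\cdot\hat\vA/|\vk|$, and then extends to $B\neq A$ by the polarization identity \eqref{polarization}. You instead expand the sesquilinear form $\sum_{\mu,\nu}\hat F^*_{\mu\nu}\hat G_{\mu\nu}$ directly via the dictionary $\hat F_{\mu\nu}=-i(k_\mu\hat A_\nu-k_\nu\hat A_\mu)$ on the null shell, which handles general $A,B$ in one stroke, dispenses with polarization, and makes transparent why the Euclidean component sum rather than the Lorentz contraction must appear (the latter indeed vanishes under \eqref{Lorenz1} and $k_\mu k^\mu=0$, as you note). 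Your ``cross-check'' via the $\vE$/$\vB$ split is essentially the paper's main computation. Your identities $\sum_\mu k_\mu^2=2|\vk|^2$ and $\sum_\mu k_\mu\hat A_\mu=2|\vk|\hat A_0$ are both correct.

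The one step you deferred---the overall constant---is exactly where the trouble sits, so you were right to flag it. Carrying your expansion to the end gives $\sum_{\mu,\nu=0}^3\hat F^*_{\mu\nu}\hat G_{\mu\nu}=4|\vk|^2\bigl(\sum_j\hat A^*_j\hat B_j-\hat A^*_0\hat B_0\bigr)=-4|\vk|^2\,\hat A^*_\mu\hat B^\mu$, so the right-hand side of \eqref{scp1yg} comes out as $-4\int d^3\vk\,|\vk|\,\hat A^*_\mu\hat B^\mu$, which is \emph{twice} the value $-2\int d^3\vk\,|\vk|\,\hat A^*_\mu\hat B^\mu$ given by \eqref{scp1yf}. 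The same factor is latent in the paper's own proof: from $\sum_i|\hat F_{0i}|^2=|\vk|^2|P^\perp_\vk\hat\vA|^2=\tfrac12\sum_{i,j}|\hat F_{ij}|^2$ one gets $\sum_{\mu,\nu=0}^3|\hat F_{\mu\nu}|^2=4|\vk|^2|P^\perp_\vk\hat\vA|^2$, whereas \eqref{scp1yf} yields $\scp{A}{A}_{1y0}=2\int d^3\vk\,|\vk|\,|P^\perp_\vk\hat\vA|^2$. So \eqref{scp1yg} holds only if the unrestricted double sum is read as a sum over unordered pairs $\mu<\nu$ (equivalently, prefixed by $\tfrac12$); your worry about ``the double counting of unordered index pairs'' is precisely on target and should not be waved away. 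Apart from this normalization issue, your argument establishes the statement.
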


\begin{proof}
Consider first $B=A$. If we write $k^\mu=(|\vk|,\vk)$ and $A_\mu=(A^0,-\vA)$, then the Lorenz condition \eqref{Lorenz1} translates to $\hat A^0=\vk\cdot\hat\vA/|\vk|$, so $\hat A_\mu^* \hat A^\mu = -|P_{\vk}^\perp \hat\vA|^2$ with $P_{\vk}^\perp$ the $3\times 3$ projection matrix to the subspace orthogonal to $\vk$. We find that $\hat F_{0i}= i|\vk|\hat A^i - i k^i\vk\cdot\hat\vA/|\vk|$ and $\hat F_{ij} = -i k^i \hat A^j + i k^j \hat A^i$, so
\be
\sum_{i=1}^3 |\hat F_{0i}|^2=|\vk|^2 |P_{\vk}^\perp \hat\vA|^2=\tfrac{1}{2}\sum_{i,j=1}^3|\hat F_{ij}|^2
\ee
and thus
\be
\scp{A}{A}_{1y0} 
= \int_{\RRR^3} \frac{d^3\vk}{|\vk|} \sum_{\mu,\nu=0}^3 |\hat F_{\mu\nu}(\vk)|^2
\,. \label{scp1yh}
\ee
Now we turn to the case $B\neq A$. For any sesqui-linear form $\scp{\cdot}{\cdot}$, its values can be recovered from expressions of the form $\scp{\varphi}{\varphi}$ (i.e., product of a function with itself) by means of the \emph{polarization identity}
\be\label{polarization}
\scp{\psi}{\phi} = \tfrac{1}{4} \Bigl(\scp{\psi+\phi}{\psi+\phi} - \scp{\psi-\phi}{\psi-\phi} -i \scp{\psi+i\phi}{\psi+i\phi} +i\scp{\psi-i\phi}{\psi-i\phi}  \Bigr).
\ee
Since the right-hand side of \eqref{scp1yg} is a sesqui-linear expression that agrees with the left-hand side for $B=A$, it must also agree for $B\neq A$, which completes the proof of \eqref{scp1yg}.
\end{proof}

The expressions \eqref{scp1yg} and \eqref{scp1yh} correspond to 8 times the inner product characterized by Landau and Peierls in Eq.s (8) and (10) of \cite{LP30}. As they say, the choice is motivated by the thought that $\sum_{\mu\nu} |\hat F_{\mu\nu}|^2$ is 8 times the energy according to classical electrodynamics, while the energy of a single photon is $k^0=|\vk|$; for further discussion see Eq.~(7.3.57) in \cite{Tum22}. This inner product is also discussed in \cite[Sec.~5.1]{BB}.

\section{Proof of Proposition~\ref{prop:unitary}: Unitarity}
\label{app:unitary}

\begin{proof}
Since any sesqui-linear form $\scp{\cdot}{\cdot}$ can be recovered from the quadratic expressions $\scp{\varphi}{\varphi}$ by the polarization identity \eqref{polarization}, it suffices to prove that $\scp{\Psi}{\Psi}_\Sigma=\scp{\Psi}{\Psi}_{\Sigma'}$, and for that, it suffices to consider $\Sigma'$ in the future of $\Sigma$ (or else consider a third surface $\Sigma''$ in the future of both). Let $L$ denote the layer in space-time between $\Sigma$ and $\Sigma'$, and $\sJ=\sJ_{\nu_1...\nu_m\nu'_1...\nu'_n}(x_1...x_m,y_1...y_n)$ the sum over $s_1...s_m,s'_1...s'_m$ in \eqref{scpdef} with $\Phi$ replaced by $\Psi$. 

We now use the Ostrogradski-Gauss integral theorem in the form \eqref{Gauss2} for $j=\sJ$; we can extend it as follows for an integrand of the form $j_{\nu_1\nu_2}(x_1,x_2)$:
\begin{subequations}
\begin{align}
&\int_{\Sigma'}V(d^3x_1) \int_{\Sigma'} V(d^3x_2) \,n^{\nu_1}(x_1) \, n^{\nu_2}(x_2) \, j_{\nu_1\nu_2}(x_1,x_2) \nonumber\\
&~~~~~~~~~~ = \int_L d^4x_1 \int_{\Sigma'} V(d^3x_2) \, n^{\nu_2}(x_2) \, \partial^{\nu_1}_{x_1} j_{\nu_1\nu_2}(x_1,x_2) \nonumber \\
&~~~~~~~~~~~~ + \int_\Sigma V(d^3x_1) \int_{\Sigma'} V(d^3x_2) \,n^{\nu_1}(x_1)\, n^{\nu_2}(x_2) \, j_{\nu_1\nu_2}(x_1,x_2)  \\
&~~~~~~~~~~ = \int_L d^4x_1 \int_{\Sigma'} V(d^3x_2) \, n^{\nu_2}(x_2) \, \partial^{\nu_1}_{x_1} j_{\nu_1\nu_2}(x_1,x_2) \nonumber \\
&~~~~~~~~~~~~ + \int_\Sigma V(d^3x_1) \int_{L} d^4x_2 \,n^{\nu_1}(x_1)\, \partial^{\nu_2}_{x_2} j_{\nu_1\nu_2}(x_1,x_2) \nonumber \\
&~~~~~~~~~~~~ + \int_\Sigma V(d^3x_1) \int_{\Sigma} V(d^3x_2) \,n^{\nu_1}(x_1)\, n^{\nu_2}(x_2) \, j_{\nu_1\nu_2}(x_1,x_2) \,.
\end{align}
\end{subequations}
We abbreviate this equation as
\be
\int_{\Sigma'} \int_{\Sigma'} = \int_L \int_{\Sigma'} + \int_\Sigma \int_L + \int_\Sigma \int_\Sigma\,.
\ee
Likewise, we can resolve the $\Sigma'$ integral on the right-hand side further to obtain
\be
\int_{\Sigma'} \int_{\Sigma'} = \int_L \int_L + \int_L \int_{\Sigma} + \int_\Sigma \int_L + \int_\Sigma \int_\Sigma\,.
\ee
Now for our purposes, it suffices to consider $\Sigma'$ just infinitesimally far away from $\Sigma$; say, $d\tau$ is an infinitesimal quantity, and $x^\mu+ n^\mu(x) \, \theta(x) \, d\tau \in \Sigma'$ for every $x\in\Sigma$; that is, $\theta(x)\, d\tau$ is the thickness of $L$ at $x$. Then,
\be\label{intLintSigma}
\int_L d^4x \, f(x) = \biggl(\int_\Sigma V(d^3x) \, \theta(x) \, f(x) \biggr) d\tau + o(d\tau)\,.
\ee
Since every integral over $L$ is small like $d\tau$, the $\int_L \int_L$ term is small like $d\tau^2$; that is,
\be
\int_{\Sigma'} \int_{\Sigma'}- \int_\Sigma \int_\Sigma = \int_L \int_{\Sigma} + \int_\Sigma \int_L + \, O(d\tau^2) \,,
\ee
a kind of Leibniz rule. Likewise, for any number $n$ of 4d variables, 
\be\label{Leibnizn}
\int_{\Sigma^{\prime n}}-\int_{\Sigma^n} = \sum_{j=1}^n \int_{\Sigma^{j-1}} \int_L ~~\int_{\Sigma^{n-j}} + \, O(d\tau^2)\,.
\ee

As a preparation for the steps to follow, we compute the divergence of $\sJ$ in each 4d variable, $\partial_{x_i}^{\nu_i}\sJ_{\nu_i}$ and $\partial_{y_\ell}^{\nu'_\ell}\sJ_{\nu'_\ell}$. Let
\be
\sD=i\gamma^\mu\partial_\mu-m_x
\ee
denote the free Dirac operator, and recall that for a 1-particle Dirac wave function $\psi_s(x)$, since $\gamma^0$ and $\gamma^0\gamma_\nu$ are self-adjoint,
\begin{subequations}
\begin{align}
\partial^\nu(\overline{\psi}\gamma_\nu \psi)
&=i\overline{i\gamma^\nu\partial_\nu\psi} \psi 
-i \overline{\psi}i\gamma_\nu \partial^\nu \psi 
\\
&=i \overline{\sD \psi} \psi -i \overline{\psi} \sD \psi\\
&=2\,\Im(\overline{\psi}\sD\psi)\,.
\end{align}
\end{subequations}
For similar reasons, using \eqref{LPx} and that
\be\label{Dconjugate}
(A^*_\mu D^{\mu\mu'}_\nu B_{\mu'})^* = B^*_{\mu} D^{\mu\mu'}_\nu A_{\mu'} 
\ee
with $D$ as in \eqref{Dmumunudef}, we obtain for our case that
\begin{align}
\partial_{x_i}^{\nu_i}\sJ_{\nu_i}
&=2\,\Im\Bigl[\Psi^\dagger \gamma^0_i e_x\sqrt{n+1}\gamma_i^\rho \Biggl[ \prod_{j\neq i} \gamma_j^0\gamma_{j\nu_j} \Biggr] \Biggl[ \prod_{k=1}^n D_k \Biggr] \Psi^{(m,n+1)}_{\mu_{n+1}=\rho}(y_{n+1}=x_i)\Bigr] \label{xdivJ}
\end{align}
with $D_k:=D^{\mu_k\mu'_k}_{y_k,\nu'_k}$.

Concerning the $y$-divergence of $\sJ$, note first that
\begin{subequations}
\begin{align}
\partial^\nu(A^*_\mu D^{\mu\mu'}_\nu A_{\mu'})
&= -i\partial^\nu\Bigl[ A_\mu^* \partial^{~}_\nu A^\mu-A_\nu^* \partial^{~}_\mu A^\mu - (\partial^{~}_\nu A_\mu^*) A^\mu+(\partial^\mu A_\mu^*)A^{~}_\nu \Bigr]\\
&= 2\partial^\nu \Im\Bigl[ A_\mu^* \partial^{~}_\nu A^\mu-A_\nu^* \partial^{~}_\mu A^\mu \Bigr]\\
&= 2\, \Im\Bigl[ \underbrace{\partial^\nu A_\mu^* \partial^{~}_\nu A^\mu}_{\text{real}}
+A_\mu^* \partial^\nu \partial^{~}_\nu A^\mu
-\underbrace{\partial^\nu A_\nu^* \partial^{~}_\mu A^\mu}_{\text{real}} -A_\nu^* \partial^\nu\partial^{~}_\mu A^\mu \Bigr]\\
&= 2\, \Im\Bigl[ 
A^{\nu*} \partial^\mu (\partial^{~}_\mu A_\nu
-\partial^{~}_\nu A_\mu) \Bigr]\,.
\end{align}
\end{subequations}
For similar reasons and using \eqref{Dconjugate} and \eqref{LPy}, here
\begin{align}
\partial_{y_\ell}^{\nu'_\ell}\sJ_{\nu'_\ell}
&= 2\, \Im \Bigl[(\Psi^{\mu_\ell=\rho})^\dagger 
\Biggl[ \prod_j \gamma_j^0\gamma_{j\nu_j} \Biggr] \Biggl[ \prod_{k\neq \ell} D_k \Biggr] 
\frac{e_x}{\sqrt{n}}\sum_{j=1}^m \delta^3_{\mu}(y_\ell-x_j) \gamma_j^\mu \gamma_{j\rho} \Psi^{(m,n-1)}_{\widehat{\mu_\ell}}(\widehat{y_\ell})
 \Bigr]\,. \label{ydivJ}
\end{align}
Inserting this and \eqref{xdivJ} in \eqref{Leibnizn}, we obtain from \eqref{scpdef} that
\begin{subequations}
\begin{align}
\scp{\Psi}{\Psi}_{\Sigma'}-\scp{\Psi}{\Psi}_\Sigma
&= \sum_{m=1}^\infty \sum_{n=0}^\infty \sum_{i=1}^m \int_L d^4x_i\int_{\Sigma^{m+n-1}} \hspace{-9mm} V(d^3x_1) \cdots \widehat{V(d^3x_i)} \cdots V(d^3y_n)\:\times \nonumber\\
&~~~~~ \times n(x_1)\cdots \widehat{n(x_i)} \cdots n(y_n) \partial_{x_i}^{\nu_i}\sJ_{\nu_i} \nonumber\\
& + \sum_{m=0}^\infty \sum_{n=1}^\infty \sum_{\ell=1}^n \int_L d^4y_\ell  \int_{\Sigma^{m+n-1}}\hspace{-9mm} V(d^3x_1) \cdots \widehat{V(d^3y_\ell)} \cdots V(d^3y_n)\:\times \nonumber\\
&~~~~~ \times n(x_1)\cdots \widehat{n(y_\ell)} \cdots n(y_n) \partial_{y_\ell}^{\nu'_\ell}\sJ_{\nu'_\ell} + O(d\tau^2)\\
&= \sum_{m=1}^\infty \sum_{n=0}^\infty \sum_{i=1}^m \int_L d^4x_i\int_{\Sigma^{m+n-1}} \hspace{-9mm} V(d^3x_1) \cdots \widehat{V(d^3x_i)} \cdots V(d^3y_n)\:\times \nonumber\\
&~~~~~ \times n(x_1)\cdots \widehat{n(x_i)} \cdots n(y_n) \, 2\,\Im\Bigl[\Psi^\dagger \gamma^0_i e\sqrt{n+1}\gamma_i^\rho \:\times \nonumber\\
&~~~~~ \times \Biggl[ \prod_{j\neq i} \gamma_j^0\gamma_{j\nu_j} \Biggr] \Biggl[ \prod_{k=1}^n D_k \Biggr] \Psi^{(m,n+1)}_{\mu_{n+1}=\rho}(y_{n+1}=x_i)\Bigr]  \nonumber\\
& + \sum_{m=0}^\infty \sum_{n=0}^\infty (n+1) \int_L d^4y_{n+1}  \int_{\Sigma^{m+n}}\hspace{-6mm} V(d^3x_1) \cdots V(d^3y_n)\:\times \nonumber\\
&~~~~~ \times n(x_1)\cdots n(y_n) \, 2\, \Im \Bigl[(\Psi^{(m,n+1)\mu_{n+1}=\rho})^\dagger \Biggl[ \prod_{j=1}^m \gamma_j^0\gamma_{j\nu_j} \Biggr] \:\times \nonumber\\
&~~~~~ \times\Biggl[ \prod_{k=1}^n D_k \Biggr] 
\frac{e_x}{\sqrt{n+1}}\sum_{j=1}^m \delta^3_{\mu}(y_{n+1}-x_j) \gamma_j^\mu \gamma_{j\rho} \Psi^{(m,n)}
 \Bigr] + O(d\tau^2)
\end{align}
[where we have renamed $n\to n+1$ and $\ell\to n+1$ in the last sum; the term $m=0$ vanishes because it has empty $\sum_j$; now use \eqref{intLintSigma}, take the conjugate inside the last $\Im[\ldots]$, and use \eqref{Dconjugate} and that $\gamma^0\gamma_\nu$ is self-adjoint:]
\begin{align}
&= d\tau\sum_{m=1}^\infty \sum_{n=0}^\infty \sum_{i=1}^m \int_\Sigma V(d^3x_i) \, \theta(x_i)\int_{\Sigma^{m+n-1}} \hspace{-9mm} V(d^3x_1) \cdots \widehat{V(d^3x_i)} \cdots V(d^3y_n)\:\times \nonumber\\
&~~~~~ \times n(x_1)\cdots \widehat{n(x_i)} \cdots n(y_n) 2e_x\sqrt{n+1}\,\Im\Bigl[\Psi^\dagger \gamma^0_i \gamma_i^\rho \:\times \nonumber\\
&~~~~~ \times \Biggl[ \prod_{j\neq i} \gamma_j^0\gamma_{j\nu_j} \Biggr] \Biggl[\prod_{k=1}^n D_k \Biggr] \Psi^{(m,n+1)}_{\mu_{n+1}=\rho}(y_{n+1}=x_i)\Bigr]  \nonumber\\
& - d\tau\sum_{m=1}^\infty \sum_{n=0}^\infty  \int_\Sigma V(d^3y_{n+1}) \, \theta(y_{n+1})  \int_{\Sigma^{m+n}}\hspace{-6mm} V(d^3x_1) \cdots V(d^3y_n)\:\times \nonumber\\
&~~~~~ \times n(x_1)\cdots n(y_n) 2e_x \sqrt{n+1}\, \Im \Bigl[  \sum_{i=1}^m \delta^3_{\mu}(y_{n+1}-x_i) \Bigl( \gamma_i^\mu \gamma_i^\rho \Psi^{(m,n)}\Bigr)^\dagger  \:\times \nonumber\\
&~~~~~ \times\Biggl[ \prod_{k=1}^n D_k \Biggr] \Biggl[
\prod_j \gamma_j^0\gamma_{j\nu_j} \Biggr] \Psi^{(m,n+1)}_{\mu_{n+1}=\rho}
 \Bigr] + o(d\tau)
\end{align}
[simplifying]
\begin{align}
&= d\tau\sum_{m=1}^\infty \sum_{n=0}^\infty \sum_{i=1}^m  \int_{\Sigma^{m+n}} \hspace{-6mm} V(d^3x_1)  \cdots V(d^3y_n) \, \theta(x_i)\:\times \nonumber\\
&~~~~~ \times n(x_1)\cdots \widehat{n(x_i)} \cdots n(y_n) 2e_x\sqrt{n+1}\,\Im\Bigl[\Psi^\dagger \gamma^0_i \gamma_i^\rho \:\times \nonumber\\
&~~~~~ \times \Biggl[ \prod_{j\neq i} \gamma_j^0\gamma_{j\nu_j} \Biggr] \Biggl[ \prod_{k=1}^n D_k\Biggr] \Psi^{(m,n+1)}_{\mu_{n+1}=\rho}(y_{n+1}=x_i)\Bigr]  \nonumber\\
& - d\tau\sum_{m=1}^\infty \sum_{n=0}^\infty \sum_{i=1}^m \int_{\Sigma^{m+n}}\hspace{-6mm} V(d^3x_1) \cdots V(d^3y_n)\, \theta(x_i)\, 2e_x \sqrt{n+1} \:\times \nonumber\\
&~~~~~ \times n(x_1)\cdots \widehat{n(x_i)}\cdots n(y_n) \, \Im \Bigl[ \Bigl( \gamma_i^0 \underbrace{n^{\nu_i}(x_i) \gamma_{i\nu_i}  n_{\mu}(x_i) \gamma_i^\mu}_{=I} \gamma_i^\rho \Psi^{(m,n)}\Bigr)^\dagger  \:\times \nonumber\\
&~~~~~ \times \Biggl[\prod_{j\neq i} \gamma_j^0\gamma_{j\nu_j} \Biggr]
\Biggl[ \prod_{k=1}^n D_k \Biggr]  \Psi^{(m,n+1)}_{\mu_{n+1}=\rho}(y_{n+1}=x_i)
 \Bigr] + o(d\tau)\\
&= d\tau\sum_{m=1}^\infty \sum_{n=0}^\infty \sum_{i=1}^m  \int_{\Sigma^{m+n}} \hspace{-6mm} V(d^3x_1)  \cdots V(d^3y_n) \, \theta(x_i)\:\times \nonumber\\
&~~~~~ \times n(x_1)\cdots \widehat{n(x_i)} \cdots n(y_n) 2e_x\sqrt{n+1}\,\Im\Bigl[\Psi^\dagger \gamma^0_i \gamma_i^\rho \:\times \nonumber\\
&~~~~~ \times \Biggl[ \prod_{j\neq i} \gamma_j^0\gamma_{j\nu_j} \Biggr] \Biggl[ \prod_{k=1}^n D_k\Biggr] \Psi^{(m,n+1)}_{\mu_{n+1}=\rho}(y_{n+1}=x_i)\Bigr]  \nonumber\\
& - d\tau\sum_{m=1}^\infty \sum_{n=0}^\infty \sum_{i=1}^m \int_{\Sigma^{m+n}}\hspace{-6mm} V(d^3x_1) \cdots V(d^3y_n)\, \theta(x_i)\, 2e_x \sqrt{n+1} \:\times \nonumber\\
&~~~~~ \times n(x_1)\cdots \widehat{n(x_i)}\cdots n(y_n) \, \Im \Bigl[ \Psi^\dagger \gamma_i^0  \gamma_i^\rho  \:\times \nonumber\\
&~~~~~ \times \Biggl[\prod_{j\neq i} \gamma_j^0\gamma_{j\nu_j} \Biggr]
\Biggl[ \prod_{k=1}^n D_k \Biggr]  \Psi^{(m,n+1)}_{\mu_{n+1}=\rho}(y_{n+1}=x_i)
 \Bigr] + o(d\tau)\\
 &=0\, d\tau + o(d\tau)\,,
\end{align}
\end{subequations}
which is what we needed to show.\footnote{For making this proof mathematically rigorous, one would have to investigate in particular under which conditions it is true of the series $\sum_{n=1}^\infty$, in which every term is $o(d\tau)$, that also the sum is $o(d\tau)$, as taken for granted here. We expect that this is true once \eqref{LPg} for some gauge condition $\sA$ is assumed, but it cannot be true if only \eqref{LPx} and \eqref{LPy} are assumed, as the example of Remark~\ref{rem:indispensable} in Section~\ref{sec:rem2} shows. We thank Lukas Nullmeier for drawing our attention to this point.}
\end{proof}

\section{Proof of Proposition~\ref{prop:constraint}: Landau and Peierls' Constraint Condition}
\label{app:constraint}

\begin{proof}
Consider the Maxwell equations \eqref{MaxwellEB} and Fourier transform them in $\vx$ (but not in $t=x^0$) to obtain
\begin{subequations}
\begin{align}
\partial_t \hat\vE(t,\vk)&=i\vk\times \hat\vB(t,\vk) -\hat\vJ(t,\vk) \label{hatEt}\\
i\vk\cdot \hat\vE(t,\vk)&= \hat{J}_0(t,\vk)\\
\partial_t \hat\vB(t,\vk) &= -i\vk\times \hat\vE(t,\vk) \label{hatBt}\\
\vk\cdot\hat\vB(t,\vk)&=0\,.
\end{align}
\end{subequations}
Split the vector fields ($\hat\vE,\hat\vB,\hat\vJ$) into parts parallel ($E_{||}$ etc.) and perpendicular ($\vE_\perp$ etc.) to $\vk$, so $\hat\vE=E_{||} \vk/k + \vE_\perp$ etc., where $k=|\vk|$; then
\begin{subequations}
\begin{align}
\partial_t \vE_\perp&=i\vk\times \vB_\perp -\vJ_\perp \label{Eperpt}\\
E_{||}&= -i\hat{J}_0/k\\
\partial_t \vB_\perp &= -i\vk\times \vE_\perp \label{Bperpt}\\
B_{||}&=0\,.
\end{align}
\end{subequations}
In this version, the equations for different $\vk$ values decouple, so one obtains an ODE for every $\vk$. It not hard to derive that for $\vJ_\perp=0$, the general solution for the evolving parts $\vE_\perp,\vB_\perp$ is
\be
\begin{pmatrix} \vE_\perp(t,\vk)\\ \vB_\perp(t,\vk)\end{pmatrix}
= e^{-ikt} \begin{pmatrix} \vE^-(\vk) \\ \vk\times \vE^-(\vk)/k \end{pmatrix}
+ e^{ikt} \begin{pmatrix} \vE^+(\vk) \\ -\vk\times\vE^+(\vk)/k \end{pmatrix}
\ee
with arbitrary $t$-independent $\vk$-perpendicular vector fields $\vE^{\pm}$ (note that $-(\vk\times)^2/k^2$ is the projection to the subspace perpendicular to $\vk$). The part with $e^{-ikt}$ has negative frequency and thus positive energy, and it satisfies
\be\label{constraintk}
\vB_\perp=\vk\times\vE_\perp/k
\ee
or \eqref{constraint}. In general, however, \eqref{constraintk} is not preserved, as can be seen from taking the time derivative on both sides and using \eqref{Eperpt} and \eqref{Bperpt}, which implies $-i\vk\times\vJ_\perp/k=0$. 
\end{proof}

\section{Details about Conjecture 1}
\label{app:conj}

In order to define what it means for a Hamiltonian $H$ to obey \emph{propagation locality}, we demand that if a wave function $\psi$ is concentrated in a region $A\subseteq \RRR^3$, then $e^{-iHt}\psi$ is concentrated for every $t\in\RRR$ in the ``grown set''
\be
\Gr(A,|t|) = \bigcup_{\vx\in A} \overline{B_{|t|}}(\vx)
\ee
with $\overline{B_r}(\vx)=\{\vy\in\RRR^3:|\vx-\vy|\leq r\}$ the closed ball of radius $r$ and center $\vx$. Here, $\psi$ is ``concentrated in $A$'' if it vanishes at any configuration containing at least one particle outside of $A$. We define further that such $H$ obeys \emph{interaction locality} if for every wave function $\psi$ concentrated in $A=A_1\cup A_2\subset \RRR^3$ with $A_1$ and $A_2$ separated by distance $d>0$ and for every $t\in\RRR$ with $|t|<d/2$, $e^{-iHt}\psi = [U_1(t) \otimes U_2(t)]\psi$ for suitable operators $U_1(t)$ acting only on $\Gr(A_1,|t|)$ and $U_2(t)$ acting only on $\Gr(A_2,|t|)$. The Hamiltonian $H$ in the Hilbert space $\Hilbert$ will count as \emph{Poincar\'e invariant} if it is part of a unitary representation of the proper Poincar\'e group on $\Hilbert$, i.e., if $H$ is the generator of the representation of the subgroup formed by the translations in $x^0$ direction. In order to define what a general, arbitrary \emph{multi-time evolution} is, it is convenient to consider, instead of PDEs such as \eqref{LPx} and \eqref{LPy} in terms of the space-time coordinates of spacelike configurations, a \emph{hypersurface evolution} \cite{LT:2017}; that is a family of Hilbert spaces $\Hilbert_\Sigma$ associated with every Cauchy surface $\Sigma$, equipped with a PVM $P_\Sigma$ on the configuration space $\Gamma(\Sigma)$ (i.e., the set of all finite subsets of $\Sigma$) acting on $\Hilbert_\Sigma$ and unitary isomorphisms $U_\Sigma^{\Sigma'}:\Hilbert_\Sigma\to \Hilbert_{\Sigma'}$ for any two Cauchy surfaces. Definitions of \emph{interaction locality} and \emph{propagation locality} for a hypersurface evolution have been formulated in \cite{LT:2017}. It is \emph{Poincar\'e invariant} if equipped, for every element $g$ of the proper Poincar\'e group and every Cauchy surface $\Sigma$, with a unitary isomorphism $S_{g,\Sigma}: \Hilbert_\Sigma \to \Hilbert_{g\Sigma}$ (representing the action of $g$ on wave functions) that fits together with $S_{h,g\Sigma}$, $U_{\Sigma}^{\Sigma'}$, $U_{g\Sigma}^{g\Sigma'}$, $P_\Sigma$, and $P_{g\Sigma}$ in the appropriate way; for more discussion see \cite{LT:2021b}.

\bigskip

\noindent{\it Acknowledgments.} We thank Lukas Nullmeier for helpful discussions.

\section*{Declarations}

\noindent{\it Funding.} This research received no funding.

\noindent{\it Conflict of interests.} The authors declare no conflict of interest.

\noindent{\it Availability of data and material.} Not applicable.

\noindent{\it Code availability.} Not applicable.


\begin{thebibliography}{28}

\bibitem{BB} I.~Bialynicki-Birula: 
	Photon Wave Function.
	Pages 245--294 in: E.~Wolf (editor), \textit{Progress in Optics, Vol.\ XXXVI},
	Amsterdam: Elsevier (1996)

\bibitem{Ble50} K.~Bleuler:
	Eine neue Methode zur Behandlung von longitudinalen und skalaren Photonen.
	{\it Helvetica Physica Acta} {\bf 23}: 567--587 (1950)

\bibitem{bloch:1934} F.~Bloch:
	{Die physikalische Bedeutung mehrerer Zeiten in der
  Quantenelektrodynamik}.
	{\em Physikalische Zeitschrift der Sowjetunion} \textbf{5}: 301--305 (1934)

\bibitem{CTDRG97} C.~Cohen-Tannoudji, J.~Dupont-Roc, and G.~Grynberg:
	{\it Photons and Atoms.}
	Wiley (1997)

\bibitem{dirac:1932} P.A.M.~Dirac:
	Relativistic Quantum Mechanics.
	{\em Proceedings of the Royal Society London A} 
	\textbf{136}: 453--464 (1932)

\bibitem{dfp:1932} P.A.M.~Dirac, V.A.~Fock, and B.~Podolsky:
	On Quantum Electrodynamics.
	{\em Physikalische Zeitschrift der Sowjetunion} 
	\textbf{2(6)}: 468--479 (1932).
	Reprinted in J. Schwinger: 
	{\em Selected Papers on Quantum Electrodynamics}, 
	New York: Dover (1958)

\bibitem{DV82b} P.~Droz-Vincent:  
	Second quantization of directly interacting particles. 
	Pages 81--101 in J.~Llosa (ed.): 
	\textit{Relativistic Action at a Distance: Classical 
	and Quantum Aspects}, 
	Berlin: Springer (1982)

\bibitem{DV85} P.~Droz-Vincent: 
	Relativistic quantum mechanics with non conserved 
	number of particles. 
	\textit{Journal of Geometry and Physics} \textbf{2(1)}: 
	101--119 (1985)

\bibitem{Fin16} F.~Finster:
	Causal Fermion Systems: An Overview.
	P.~313--380 in F. Finster, J. Kleiner, C. R\"oken, and 
	J. Tolksdorf (editors), 
	\textit{Quantum Mathematical Physics}.
	Basel: Birkh\"auser (2016)
	\url{http://arxiv.org/abs/1505.05075}

\bibitem{GTTZ:2014} 
S.~Goldstein, J.~Taylor, R.~Tumulka and N.~Zangh\`\i:
	Fermionic Wave Functions on Unordered Configurations.
	Preprint (2014)
	\url{http://arxiv.org/abs/1403.3705}

\bibitem{GLL01} M.~Griesemer, E.~Lieb, and M.~Loss: 
	Ground states in non-relativistic Quantum Electrodynamics. 	
	{\it Inventiones Mathematicae} {\bf 145}: 557--595 (2001)
	\url{http://arxiv.org/abs/math-ph/0007014}

\bibitem{Gup50} S.N.~Gupta:
	Theory of Longitudinal Photons in Quantum Electrodynamics.
	{\it Proceedings of the Physical Society} {\bf 63}: 681--691 (1950)

\bibitem{GB} Gupta-Bleuler formalism. 
	In Wikipedia, the free encyclopedia.
	\url{http://en.wikipedia.org/wiki/Gupta%E2%80%93Bleuler_formalism}
	(accessed 1/3/2023)

\bibitem{HPT24} J.~Henheik, B.~Poudyal, and R.~Tumulka:
	How a Space-Time Singularity Helps Remove the Ultraviolet Divergence Problem.
	Preprint (2024)
	\url{http://arxiv.org/abs/2409.00677}

\bibitem{HT20} J.~Henheik and R.~Tumulka:
	Interior-Boundary Conditions for the Dirac Equation at Point Sources in 3 Dimensions.
	{\it Journal of Mathematical Physics} {\bf 63}: 122302 (2022)
	\url{http://arxiv.org/abs/2006.16755}

\bibitem{Hir02} F.~Hiroshima: 
	Self-Adjointness of the Pauli-Fierz Hamiltonian for 
	Arbitrary Values of Coupling Constants. 
	{\it Annales Henri Poincar\'e} {\bf 3}: 171--201 (2002)

\bibitem{Kie20} M.K.-H.~Kiessling:
	Semi-relativistic $N$-body quantum mechanics 
	of electrons and photons, with fixed nuclei.
	Preprint (2020)
	\url{http://arxiv.org/abs/2002.11106}

\bibitem{LP30} L.~Landau and R.~Peierls:
	Quantenelektrodynamik im Konfigurationsraum.
	{\it Zeitschrift f\"ur Physik} \textbf{62}: 188--200 (1930).
	English translation: 
	Quantum electrodynamics in configuration space. 
	Pages 71--82 in R.~H.~Dalitz and R.~Peierls (editors): 
	{\it Selected Scientific Papers of Sir Rudolf Peierls With Commentary.}
	Singapore: World Scientific (1997)

\bibitem{Lie18} M.~Lienert:
	Direct interaction along light cones at the quantum level.
	{\it Journal of Physics A: Mathematical and Theoretical} {\bf 51}: 435302 (2018)
	\url{http://arxiv.org/abs/1801.00060}

\bibitem{LPT17} M.~Lienert, S.~Petrat, and R.~Tumulka:
    Multi-Time Wave Functions.
    \textit{Journal of Physics: Conference Series} 
    {\bf 880}: 012006 (2017)
    \url{http://arxiv.org/abs/1702.05282}

\bibitem{LPT20} M.~Lienert, S.~Petrat, and R.~Tumulka:
	{\it Multi-time Wave Functions: An Introduction}. 
	Heidelberg: Springer (2020)

\bibitem{LT:2017} M.~Lienert and R.~Tumulka:
	Born's rule for arbitrary Cauchy surfaces.
	{\it Letters in Mathematical Physics} {\bf 110}: 753--804 (2020)
	\url{http://arxiv.org/abs/1706.07074}

\bibitem{LNT20} S.~Lill, L.~Nickel, and R.~Tumulka:
	Consistency Proof for Multi-Time Schr\"odinger Equations with
	Particle Creation and Ultraviolet Cut-Off.
	{\it Annales Henri Poincar\'e} {\bf 22}: 1887--1936 (2021)
	\url{http://arxiv.org/abs/2001.05920}

\bibitem{LT:2021b} S.~Lill and R.~Tumulka:
	Another proof of Born's rule on arbitrary Cauchy surfaces.
	{\it Annales Henri Poincar\'e} {\bf 23}: 1489--1524 (2022)
	\url{http://arxiv.org/abs/2104.13861}

\bibitem{Nel64} E.~Nelson: 
	Interaction of Nonrelativistic Particles with a Quantized Scalar Field.
	\textit{Journal of Mathematical Physics} \textbf{5}: 1190--1197 (1964)

\bibitem{Nic2019} L.~Nickel:
	{\it On the Dynamics of Multi-Time Systems.}
	Ph.D.~thesis, Mathematics Institute, Ludwig-Maximilians 
	University, Munich, Germany (2019)

\bibitem{ND:2019} L.~Nickel and D.-A.~Deckert:
	Multi-time dynamics of the Dirac-Fock-Podolsky model of QED.
	{\it Journal of Mathematical Physics} {\bf 60}: 072301 (2019)
    \url{http://arxiv.org/abs/1903.10362}	

\bibitem{Nik10} H.~Nikoli\'c:
	QFT as pilot-wave theory of particle creation and destruction.
	\textit{International Journal of Modern Physics A} 
	\textbf{25}: 1477--1505 (2010)
	\url{http://arxiv.org/abs/0904.2287}
	
\bibitem{PF38} W.~Pauli and M.~Fierz:
	Zur Theorie der Emission langwelliger Lichtquanten.
	{\it Il Nuovo Cimento} {\bf 15}: 167--188 (1938)

\bibitem{PR84}
R.~Penrose and W.~Rindler:
	\textit{Spinors and space-time. 
	Vol.~1: Two-spinor calculus and relativistic fields.}
	Cambridge University Press (1984)

\bibitem{pt:2013c}
S.~Petrat and R.~Tumulka:
	Multi-Time Wave Functions for Quantum Field Theory.
	\textit{Annals of Physics} \textbf{345}: 17--54 (2014)
	\url{http://arxiv.org/abs/1309.0802}

\bibitem{Sch95} G.~Scharf:
	{\it Finite Quantum Electrodynamics,} 2nd edition.
	Berlin: Springer (1995)

\bibitem{schweber:1961} S.~Schweber:
	{\em {An Introduction To Relativistic Quantum Field Theory}}.
	Row, Peterson and Company (1961)

\bibitem{thaller:1992} B.~Thaller: 
	\textit{The Dirac Equation}. 
	Berlin: Springer (1992)

\bibitem{TT15a} S.~Teufel and R.~Tumulka:
	Hamiltonians Without Ultraviolet Divergence for Quantum 
	Field Theories.
	{\it Quantum Studies: Mathematics and Foundations} {\bf 8}: 17--35 (2021)
	\url{http://arxiv.org/abs/1505.04847}

\bibitem{TT15b} S.~Teufel and R.~Tumulka:
	Avoiding Ultraviolet Divergence by Means of 
	Interior--Boundary Conditions.
	Pages 293--311 in F. Finster, J. Kleiner, C. R\"oken, and 
	J. Tolksdorf (editors), 
	\textit{Quantum Mathematical Physics}.
	Basel: Birkh\"auser (2016)
	\url{http://arxiv.org/abs/1506.00497}

\bibitem{Tum21} R.~Tumulka:
	Talk given on February 5, 2021, in the joint
	mathematical physics colloquium, University of Regensburg and Ludwig-Maximilians University Munich, chaired by F.~Finster.
	\url{http://www.youtube.com/watch?v=61kxzygK6_0}

\bibitem{Tum21b} R.~Tumulka:
	Boundary Conditions that Remove Certain Ultraviolet Divergences.
	{\it Symmetry} {\bf 13(4)}: 577 (2021)
	\url{http://arxiv.org/abs/2406.16180}

\bibitem{Tum22} R.~Tumulka:
	{\it Foundations of Quantum Mechanics.}
	Heidelberg: Springer (2022)

\bibitem{Tum24} R.~Tumulka:
	A Vision for a Bohm-Style Theory of Quantum Electrodynamics.
	To appear in A.~Oldofredi (editor), 
	{\it Guiding Waves in Quantum Mechanics: 
	100 Years of de Broglie-Bohm Pilot-Wave Theory}. 
	Oxford University Press (2024)
	\url{http://arxiv.org/abs/2409.07784}

\end{thebibliography}
\end{document}